\newtheorem {theorem}{Theorem}
\newtheorem {assumption}{Assumption}
\newtheorem {definition}{Definition}
\newtheorem {lemma}{Lemma}
\newtheorem {proposition}{Proposition}
\newtheorem {remark}{Remark}
\newenvironment {proof}[1][Proof]{\noindent \textbf {#1.} }{\ \rule {0.5em}{0.5em}}
\begin{document}

\title{General equilibrium in a heterogeneous-agent incomplete-market economy with many consumption goods and a risk-free bond}

\author{Bar Light\protect\footnote{Graduate School of Business,
Stanford University, Stanford, CA 94305, USA. e-mail: \textsf{barl@stanford.edu}}~ ~}
\maketitle

\thispagestyle{empty}

\noindent \noindent \textsc{Abstract}:
\begin{quote}
We study a pure-exchange incomplete-market economy with heterogeneous agents. In each period, the agents choose how much to save (i.e., invest in a risk-free bond), how much to consume, and which bundle of goods to consume while their endowments are fluctuating. We focus on a competitive stationary equilibrium (CSE) in which the wealth distribution is invariant, the agents maximize their expected discounted utility, and both the prices of consumption goods and the interest rate are market-clearing. Our main contribution is to extend some general equilibrium results to an incomplete-market Bewley-type economy with many consumption goods.  Under mild conditions on the agents' preferences, we show that the aggregate demand for goods depends only on their relative prices and that the aggregate demand for savings is homogeneous of degree in prices, and we prove the existence of a CSE. When the agents' preferences can be represented by a CES (constant elasticity of substitution) utility function with an elasticity of substitution that is higher than or equal to one, we prove that the CSE is unique. Under the same preferences, we show that a higher inequality of endowments does not change the equilibrium prices of goods, and decreases the equilibrium interest rate. Our results shed light on the impact of market incompleteness on the properties of general equilibrium models.

\end{quote}

\noindent { Keywords: Arrow-Debreu model; general equilibrium; heterogeneous agents; Bewley models; dynamic economies.}

\smallskip \noindent \emph{}

\newpage

\section{Introduction}
We extend the classic Arrow-Debreu model to a dynamic incomplete- market general equilibrium model with a continuum of agents, in which each agent has an individual state that corresponds to his wealth level. There is an infinite number of periods and in each period agents participate in a pure-exchange Arrow-Debreu model (as in the seminal paper by \cite{arrow1954existence}). Each agent has a different wealth level and different preferences over consumption bundles, and thus the agents are heterogeneous in the static pure-exchange Arrow-Debreu model. In each period, given the agents' wealth levels and their preferences over consumption bundles, the agents decide how much to spend on a bundle of goods to be consumed in that period, which bundle of goods to consume, and how much to save for future consumption. In the tradition of the Bewley models (see \cite{ljungqvist2012recursive} for a textbook treatment), the markets are incomplete. The agents face uninsurable idiosyncratic risk and can transfer assets from one period to another only by saving in a risk-free bond.  In each period, the agents receive a random endowment vector. We assume that all random shocks are idiosyncratic, ruling out aggregate random shocks that are common to all agents. As in \cite{arrow1954existence}, the agents are price takers, that is, the agents take the prices of the consumption goods and the risk-free bond's rate of return as given. We focus on a pure-exchange economy without production, and hence we extend  Huggett's model \citep{huggett1993risk} to a setting with many consumption goods.\protect\footnote{
For similar models with one consumption good see \cite{lucas1980equilibrium}, \cite{geanakoplos2014inflationary}, and \cite{hu2019unique}. In \cite{bewley1986stationary} there are multiple consumption goods but the interest rate is fixed and is assumed to be $0$ (see also \cite{karatzas1994construction}). In our paper the interest rate is determined in equilibrium as it is in  \cite{huggett1993risk}.} In Huggett's model there is only one consumption good. In contrast, in the model presented in this paper, there are many consumption goods and each good has a price, which is a typical feature of many economies of interest. Thus, in addition to the standard inter-temporal consumption-savings decision, the agents also make a static decision of how to allocate their spending between the different consumption goods. As discussed below, this leads to several complications in the analysis because the aggregate demand for consumption goods and the aggregate demand for savings are coupled through the consumption goods' prices and the interest rate. 

The solution concept that we study in this paper is competitive stationary equilibrium. A competitive stationary equilibrium (CSE) consists of a wealth distribution, prices of goods, an interest rate, savings policy functions, and demand functions for goods, such that: (i) given the prices of the goods and the interest rate, the agents choose a savings policy function and a demand function for goods in order to maximize their expected discounted utility; (ii) the wealth distribution induced by the agents' decisions is invariant; (iii) the prices of the goods and the interest rate are market-clearing, i.e., for each good, the aggregate supply of that good equals the aggregate demand for that good, and the aggregate supply of savings equals the aggregate demand for savings. We note that   stationary equilibrium is a popular solution concept in Bewley type models.\footnote{Computation of a non-stationary equilibrium is typically infeasible. The notion of stationary equilibrium  is conceptually similar to the notion of mean field equilibrium. In Section 3.4 we compare the solution concept used in this paper - competitive stationary equilibrium - with mean field equilibrium.}  

 In this paper we mainly focus on the theoretical properties of CSE. Despite the recent popularity of Bewley-type models,\footnote{Bewley models feature rich heterogeneity and are widely used to study many economic phenomena (see \cite{heathcote2009}, \cite{de2015quantitative}, and \cite{benhabib2016skewed}, for surveys). These include wealth distribution \citep{benhabib2015wealth}, monetary transmission mechanisms \citep{kaplan2016monetary}, aggregate demand \citep{auclert2018inequality}, and many more.} general theoretical results such as the existence of an equilibrium, the uniqueness of an equilibrium, and analytical comparative statics results, have remained limited. Previous existence results in Bewley-type models rely on the  intersection of the supply and demand curves which can be used in an economy with one asset (see \cite{accikgoz2018existence} and \cite{zhu2020existence}); or on casting the heterogeneous-agent macro model as a discrete time mean field model (see \cite{acemoglu2012} and \cite{LW2018}). These methods cannot be used in our setting or more generally in an heterogeneous-agent incomplete-market setting with more than one asset or more than one consumption good, features typical of many models that are studied the applied literature. 
 The paper's main contribution is to establish some general equilibrium results in an incomplete-market Bewley-type model with many consumption goods. We believe that our results and methods can be extended in the future to study  other Bewley-type models with more than one asset  where, to the best of our knowledge,   there are no general results that guarantee the existence or uniqueness of a stationary equilibrium. We note that proving the existence of a competitive stationary equilibrium in this setting is challenging, as we discuss below. 
 Our model, which combines a Bewley-type model with the classic  Arrow-Debreu model, can be used to study the relationship between the consumption goods' prices and other important economic variables such as the risk-free rate and the wealth distribution\footnote{
There is a vast literature on asset pricing and wealth inequality in models other  than Bewley-type models (for example, \cite{judd2003asset}, \cite{blume2006if}, \cite{krueger2010market}, and \cite{kubler2015life}, just to name a few).} in the popular framework of heterogeneous-agent models. For example, how does an increase in income inequality influence the relative prices? We provide a first result in this direction in Section 3.3.

We now explain our contributions in more detail.

\textbf{Existence.} One of our main theoretical contributions is to provide a proof for the existence of a CSE  under fairly general conditions on the agents' preferences (see Theorem \ref{Theorem exist}). In a Bewley model setting, as we discussed above, previous existence results assume that there is one consumption good (e.g., \cite{accikgoz2018existence}) or that the interest rate is fixed and is not determined in equilibrium (e.g., \cite{bewley1986stationary}). These assumptions simplify the analysis of Bewley models considerably as they decouple the aggregate demand for savings and the aggregate demand for consumption goods. Without this decoupling, it is not clear that even the basic properties of the excess demand function such as Warlas' law and homogeneity hold. In addition, proving the needed properties of the excess demand function that imply the existence of a CSE is challenging.  
Nonetheless, we show that some well-known results that apply to the static Arrow-Debreu model also hold in the incomplete-market Bewley-type model that we study. In Proposition \ref{Prop homogenous of } we show that the excess demand for savings is homogeneous of degree one in the goods' prices while the excess demand for each consumption good is homogeneous of degree zero. This result implies that the aggregate demand for goods and the aggregate supply of goods depend only on their relative prices, similarly to the static Arrow-Debreu model, and thus the competitive stationary equilibrium depends only on the relative prices of goods. We also show that the excess demand satisfies Walras' law and some boundary conditions that are crucial in order to establish the existence of a CSE. Using these properties of the excess demand function and the properties of the savings policy function, we establish the existence of a CSE. 
Welfare theorems (e.g., \cite{arrow1951extension}) that apply to the static Arrow-Debreu model do not hold in our setting because of market incompleteness.\protect\footnote{
See \cite{davila2012constrained}, \cite{shanker2017existence}, \cite{nuno2018social}, and \cite{park2018constrained} for a study of welfare maximization in Bewley models.}

\textbf{Uniqueness.} We prove the uniqueness of a CSE for the special case that the agents' preferences over bundles can be represented by a CES (constant elasticity of substitution) utility function with an elasticity of substitution that is equal to or higher than one (see Theorem \ref{Theorem uniq}). This assumption on the agents' preferences implies that the consumption goods are gross substitutes, i.e., the demand for each consumption good increases with the prices of the other consumption goods. We note that the standard argument for proving the uniqueness of an equilibrium in the static Arrow-Debreu model cannot be applied in our setting because of the coupling between the aggregate demand for consumption goods and the aggregate demand for savings. The aggregate demand for consumption goods is not necessarily increasing with the interest rate and the aggregate demand for savings is not necessarily increasing with the prices of consumption goods. Thus, the excess demand function does not necessarily satisfy the gross substitutes property. We overcome this difficulty by using the properties of the CES utility function, the specific economy that we study, and the results in \cite{light2017uniqueness}.

\textbf{Comparative statics.} Our main result regarding the wealth distribution's influence on the prices of goods and on the interest rate is Theorem \ref{Theorem wealth ineq}. We prove that if the agents' preferences over bundles can be represented by a CES utility function with an elasticity of substitution that is higher than one, then an increase in the risk of the random future endowments (in the sense of the convex stochastic order) changes the CSE in the following way: the interest rate decreases, and the prices of goods do not change. In the classic Arrow-Debreu model the result that the prices of goods do not change when the wealth inequality is higher is intuitive because the demand for each good is linear in wealth. Thus, the demand for each good does not change when the wealth inequality is higher. In our setting, under a CES utility function, the marginal propensity to consume is decreasing, so the demand for each good is concave in wealth. An increase in the risk of the random future endowments \textit{increases} the aggregate savings because of the precautionary savings effect. Thus, the aggregate demand for each good decreases. At the same time, a decrease in the interest rate \textit{decreases} the aggregate savings because of the substitution effect. It turns out that in general equilibrium, the precautionary savings effect and the substitution effect offset each other exactly and the prices of goods do not change.

The rest of the paper is organized as follows. Section 2 presents the model. In Section 2.1 we define the CSE. In Section 3 we present the main results of this paper. In Section 3.1 we establish the existence of a CSE. In Section 3.2 we provide conditions that ensure the uniqueness of a CSE. In Section 3.3 we discuss how the wealth distribution influences the prices of goods and the interest rate. In Section 3.4 we compare the current paper to recent work on mean field games. In Section 3.5 we extend the model to ex-ante heterogeneous agents. In Section 4 we provide final remarks, followed by an Appendix containing proofs.

\section{The model}
There is a continuum of agents of measure $1$. Every agent has an
individual state. We assume for now that the agents are ex-ante identical. In Section 3.5 we extend the model to ex-ante heterogeneous agents. There are $n$ goods where $n \geq 2$. Let $Y_{i}$ be a random variable that describes the evolution of good $i$'s endowment. We assume that $Y_{i}$ has a finite\protect\footnote{
All the results in this paper can be extended to the case that $Y_{i}$ has a compact support.} support $\mathcal{Y}_{i}$ and a probability mass function $e_{i}(y) : =\Pr (Y_{i} =y)$ for all $1 \leq i \leq n$ and $y \in \mathcal{Y}_{i}$. Let $\mathcal{Y} =\mathcal{Y}_{1} \times  . . . \times \mathcal{Y}_{n}$ and let $e(\boldsymbol{y})=e(y_{1},\ldots,y_{n}) : =\Pr (Y_{1} =y_{1} ,\ldots,Y_{n} =y_{n})$ be the joint probability mass function where we denote elements in $\mathbb{R}^{n}$ by bold letters. In each period $t =1 ,2 ,3\ldots $ the agents receive an endowment vector $\boldsymbol{y} \in \mathcal{Y}$ with probability $e(\boldsymbol{y}) > 0$. We assume that $\boldsymbol{y} \gg 0$  for all $\boldsymbol{y} \in \mathcal{Y}$ where $\boldsymbol{y} \gg 0$ means that $y_{i} >0$ for all $i =1 ,\ldots,n$, and that there exists an element $\boldsymbol{y}'$ in $\mathcal{Y}$ such that $\boldsymbol{y}' - \boldsymbol{y} \gg 0$ for all $\boldsymbol{y} \in \mathcal{Y} \setminus \{ \boldsymbol{y}' \}$. We refer to $e$ as the endowments process. 

Denote the agents' wealth at time $t =1$ by $a (1)$. In each period $t =1 ,2 ,3\ldots $, after receiving their endowment vector $\boldsymbol{y}(t)$, the agents choose a bundle of goods to consume in that period ($\boldsymbol{x}(t) =(x_{1}(t) ,\ldots,x_{n}(t)) \in \mathbb{R}_{ +}^{n}$)\protect\footnote{
As usual,  the positive cone of $\mathbb{R}^{n}$ is denoted by $\mathbb{R}_{ +}^{n}$, i.e., $\mathbb{R}_{ +}^{n} =\{\boldsymbol{x} =(x_{1} ,\ldots  ,x_{n}) :x_{j} \geq 0$ holds for all $j =1 ,\ldots  ,n\}$. 
} and choose how much to save in a risk-free bond for future consumption. The price of good $x_{i}(t)$ is given by $p_{i}(t) >0$, so the price of a bundle $\boldsymbol{x}(t)$ is $\boldsymbol{p}(t) \cdot \boldsymbol{x}(t)$ where $\boldsymbol{p}(t) \cdot \boldsymbol{x}(t) : =\sum p_{i}(t)x_{i}(t)$  denotes the scalar product of two elements in $\mathbb{R}^{n}$. The agents' savings rate of return is $1 +r(t)$ where $r(t)$ is the interest rate in period $t$. The agents are price takers, i.e., they take the sequence of prices $\{\boldsymbol{p}(t) ,r(t)\}_{t=1}^{\infty}$ as given where $\boldsymbol{p}(t) \gg 0$ and $r(t)>0$ for all $t$. If an agent's wealth at time $t$ is $a (t)$, the agent's wealth at time $t +1$ when $\boldsymbol{y}(t +1)$ is the realized endowment vector is 
\begin{equation*}a (t +1) =(1 +r(t)) (a (t) -\boldsymbol{p}(t) \cdot \boldsymbol{x}(t)) + \boldsymbol{p}(t +1) \cdot \boldsymbol{y} (t +1) . 
\end{equation*}
We assume that the agents can borrow, and the borrowing limit is given by $\underline{b}(t)$. Thus, $a(t) -\boldsymbol{p}(t) \cdot \boldsymbol{x}(t) \geq \underline{b}(t)$ for each period $t$. We assume that the borrowing limit in period $t$ is given by $\underline{b}(t) = -\frac{ \min _{\boldsymbol{y} \in \mathcal{Y}}\boldsymbol{p}(t) \cdot \boldsymbol{y}}{r(t)}$. In the stationary environment that we will study in the next section, the borrowing limit $\underline{b}(t)$ equals the natural borrowing limit (see \cite{aiyagari1994}).  


We denote by
$C(a,\boldsymbol{p}(t)) =[\underline{b}(t)\min \{a ,\sum _{i =1}^{n}p_{i}\left (t\right )\overline{b} /(1-r)^{2} \}]$ the interval from which an agent may choose his level of savings when his wealth is $a$ and the prices of goods are $\boldsymbol{p}(t)$.  $\sum _{i =1}^{n}p_{i}(t)\overline{b} / (1-r)^{2}$ is an upper bound on savings that ensure compactness of the state space where $\overline{b}>0$. Because we study a ``real economy" the upper and lower bounds on savings depend on the prices. We assume that the maximal level of savings that an agent can have is bounded to avoid technical difficulties that arise in dynamic programming with unbounded rewards and to provide an upper bound on savings that ensure existence when using numerical methods to find the equilibrium (see Remark \ref{Remark: compactness} for a discussion on  the upper bound on savings). Using known results from the previous literature on the income fluctuation problem, the upper bound on savings can be chosen so that it never binds (see Remark \ref{Remark: compactness}).

We assume that the agents' preferences over bundles are represented by a utility function $U:\mathbb{R}_{ +}^{n} \rightarrow \mathbb{R}$. For $\boldsymbol{x} ,\boldsymbol{x}^{ \prime } \in \mathbb{R}^{n}$ we write $\boldsymbol{x} \geq \boldsymbol{x}^{ \prime }$ if $x_{i} \geq x_{i}^{ \prime }$ for all $i =1 , . . . ,n$. We say that $U$ is increasing if $\boldsymbol{x} \geq \boldsymbol{x}^{ \prime }$ implies $U(\boldsymbol{x}) \geq U(\boldsymbol{x}^{ \prime })$. We say that $U$ is strictly increasing if $\boldsymbol{x} > \boldsymbol{x}^{ \prime }$ implies $U(\boldsymbol{x}) > U(\boldsymbol{x}^{ \prime })$.
Throughout the paper, we assume the following standard conditions on the utility function.

\begin{assumption}
\label{Assumption 0} (i) The utility function $U$ is strictly increasing, continuously differentiable, strictly concave, and $\frac{ \partial U(0)}{ \partial x_{j}} =\infty $ for some $1 \leq j \leq n$. 

(ii) For every $\boldsymbol{p} \gg 0$, the indirect utility function $u_{\boldsymbol{p}}(c) := U(\boldsymbol{x}^{ \ast }(c ,\boldsymbol{p}))$ is continuously differentiable and satisfies the following Inada conditions:  $\lim _{c \rightarrow \infty } u_{\boldsymbol{p}}' (c) =0$ and $\lim _{c \rightarrow 0 } u_{\boldsymbol{p}}' (c) =\infty$.\footnote{The assumption that $u$ is continuously differentiable can be relaxed (part (i) implies that  $u_{\boldsymbol{p}}(c)$ is concave so we can replace the derivative by the right-hand-derivative). }

\end{assumption}

Let $A$ be the set of possible wealth levels that an agent can have, and let $A^{t}:=\underbrace{A \times \ldots  \times A}_{t~ \mathrm{t} \mathrm{i} \mathrm{m} \mathrm{e} \mathrm{s}}$. A strategy $\pi $ for the agents is a function that assigns to every finite history $\boldsymbol{a}^{t} =(a(1) , . . . ,a(t)) \in A^{t}$ a feasible bundle $\boldsymbol{x}(t)$. A strategy $\pi $ induces a probability measure over the space of all infinite histories.\protect\footnote{
The probability measure on the space of all infinite histories $A^{\mathbb{N}}$ is uniquely defined (see for example \cite{bertsekas1978stochastic}).
} We denote the expectation with respect to that probability measure by $\mathbb{E}_{\pi }$.

When the agents follow a strategy $\pi $ and the sequence of prices is given by $\{\boldsymbol{p}(t) ,r(t)\}_{t=1}^{\infty}$, their expected present discounted value is
\begin{equation*}V_{\pi } (a) =\mathbb{E}_{\pi } \Big(\sum \limits _{t =1}^{\infty} \beta ^{t-1}U(\pi (a\left (1\right ),\ldots,a\left (t\right ))\Big),
\end{equation*}  where $a\left (1\right ) =a$ is the initial wealth and $1/2 <\beta  <1$ is the agents' discount factor. Denote
\begin{equation*}V (a) =\sup_{\pi }V_{\pi } (a).
\end{equation*}
That is, $V (a)$ is the maximal expected utility that an agent can have when his initial wealth is $a$. We call $V$ the value function.

\subsection{Competitive stationary equilibrium}
In this section we define a competitive stationary equilibrium (CSE). We first introduce some notations that are necessary in order to define a CSE. In a CSE the prices of goods and the interest rate are constant over time. For the rest of the section we assume that $(\boldsymbol{p}(t) ,r(t)) =(\boldsymbol{p} ,r)$ for all $t \in \mathbb{N}$.  

We denote by $b$ the agents' savings in the next period. When the agents' wealth is $a$, their next period's savings are $b \in C(a ,\boldsymbol{p})$, and the prices of goods are $\boldsymbol{p}$, then the set of consumption bundles available to the agents is given by $X(a-b ,\boldsymbol{p}) =\{x \in \mathbb{R}_{+}^{n} :\boldsymbol{p} \cdot \boldsymbol{x} =a -b\}$. We sometimes change variables and define $c=a-b$ to the total consumption of the agents. 

The minimal level of wealth that an agent can have when $\boldsymbol{p} \gg 0$ and $r \in (0,1)$ is 
$$\underline{a}(\boldsymbol{p},r)  = (1 +r)\underline{b} + \min_{\boldsymbol{y} \in \mathcal{Y}}\boldsymbol{p} \cdot \boldsymbol{y} = -\frac{(1 +r) \min _{\boldsymbol{y} \in \mathcal{Y}} \boldsymbol{p} \cdot \boldsymbol{y} } {r} + \min_{\boldsymbol{y} \in \mathcal{Y}}\boldsymbol{p} \cdot \boldsymbol{y} = - \frac { \min_{\boldsymbol{y} \in \mathcal{Y}}\boldsymbol{p} \cdot \boldsymbol{y} } {r} $$ 
and the maximal level of wealth that an agent can have is $$\overline{a}(\boldsymbol{p},r) : = (1 +r)\sum _{i =1}^{n}\frac{ p_{i}\overline{b} } { (1-r)^{2}} +\max _{\boldsymbol{y} \in \mathcal{Y}}\boldsymbol{p} \cdot \boldsymbol{y}.$$ Hence, the set of possible wealth levels that an agent can have $$A(\boldsymbol{p},r) :=[\underline{a}(\boldsymbol{p},r) ,\overline{a}(\boldsymbol{p},r)]$$ 
is compact for all $\boldsymbol{p} \gg 0$ and $r \in (0,1)$. For the rest of the section we assume that  $\boldsymbol{p} \gg 0$ and $r \in (0,1)$.

Because for any $\boldsymbol{p} \gg 0$ and $r \in (0,1)$ the value function is bounded on the compact set $A(\boldsymbol{p},r)$, we can use standard dynamic programming arguments to solve the agents' problem. Let $B(A)$ be the space of all bounded real-valued functions defined on a set $A$. For any  $\boldsymbol{p} \gg 0$ and $r \in (0,1)$, define the operator $T:B(A(\boldsymbol{p},r)) \rightarrow B(A(\boldsymbol{p},r))$ by \begin{equation*}Tf(a,\boldsymbol{p},r) =\max _{b \in C(a,\boldsymbol{p})}\max _{\boldsymbol{x} \in X(a -b ,\boldsymbol{p})}U(\boldsymbol{x}) +\beta \sum \limits _{\boldsymbol{y} \in \mathcal{Y}}^{\,}e(\boldsymbol{y})f((1 +r)b +\boldsymbol{p} \cdot \boldsymbol{y} ,\boldsymbol{p} ,r).
\end{equation*}
The value function $V$ is the unique fixed point of $T$, i.e., there is a unique function $V \in B \left (A(\boldsymbol{p},r)\right )$ such that $T V =V$.\protect\footnote{
The Banach-fixed point theorem (see Theorem 3.48 in \cite{aliprantis2006infinite}) shows that $T$ has a unique fixed point. Standard dynamic programming arguments (e.g., \cite{blackwell1965}) show that the value function $V$ is the unique fixed point of $T$.} 

We denote by $\boldsymbol{x}^{ \ast }(a -b ,\boldsymbol{p})$ the demand function of an agent, i.e., \begin{equation*}\boldsymbol{x}^{ \ast }(a -b ,\boldsymbol{p}) =\ensuremath{\operatorname*{argmax}}_{\boldsymbol{x} \in X(a -b ,\boldsymbol{p})}U(\boldsymbol{x}).\label{eq:demand function}\end{equation*}
Note that given the choice of the next's period savings $b$, the decision of how to distribute the spending $a-b$ between the different consumption goods is a static decision. 
Also note that $\boldsymbol{x}^{ \ast }$ is single-valued since $U$ is strictly concave and continuous. We denote by $g(a ,\boldsymbol{p} ,r)$ the savings policy function, i.e., 
\begin{equation}g(a ,\boldsymbol{p} ,r) =\ensuremath{\operatorname*{argmax}}_{b \in C(a ,\boldsymbol{p})} \ U(\boldsymbol{x}^{ \ast }(a -b ,\boldsymbol{p})) +\beta \sum \limits _{\boldsymbol{y} \in \mathcal{Y}}^{\,}e(\boldsymbol{y})V((1 +r)b +\boldsymbol{p} \cdot \boldsymbol{y} ,\boldsymbol{p} ,r) . \label{eq:savings policy}
\end{equation}
A standard dynamic programming argument (see Lemma \ref{Lemma 1} in the Appendix) shows that the savings policy function is continuous and single-valued under Assumption \ref{Assumption 0}.

\begin{remark} \label{Remark: compactness}
For simplicity, we assume an exogenous upper bound on savings to prove that the value function satisfies the Bellman equation and the savings policy function exists. Given previous known results derived in the standard income fluctuation problem, this assumption is not, however, necessary in order to prove that the value function satisfies the Bellman equation. Because the decision of how to distribute consumption between the different goods is static, for a fixed vector of positive prices $\boldsymbol{p}$, we can reduce the single agent dynamic optimization problem to a standard income fluctuation problem by defining the indirect utility function  $u_{\boldsymbol{p}}(c) := U(\boldsymbol{x}^{ \ast }(c ,\boldsymbol{p}))$. In this case, we can use the Coleman operator approach to show that under mild conditions, the value function and the savings policy function exist and the upper bound on savings can be chosen so that it never binds (see \cite{li2014solving}, \cite{accikgoz2018existence}, \cite{ma2020income} and references therein). Another reason that we assume an upper bound on savings is that when using numerical methods to find the equilibrium, we must have an upper bound on savings. In this case, we usually can't choose ex-ante an upper bound that does not bind. In contrast to Bewley models with one consumption good, in our model the choice of an upper bound and a lower bound is important for guaranteeing the  existence of a CSE because it influences the properties of the excess demand function such as homogeneity and the boundary conditions (see Remark \ref{Remark Borrowing}). 
\end{remark}

For a set $K \subseteq \mathbb{R}^{n}$ we denote by $\mathcal{P}(K)$ the set of all probability measures on $K$ and by $\mathcal{B}(K)$ the Borel sigma-algebra on $K$. Define
\begin{equation}M\lambda (D;\boldsymbol{p} ,r) =\int_{A(\boldsymbol{p},r)}\sum \limits _{\boldsymbol{y} \in \mathcal{Y}}^{\,}e(\boldsymbol{y})1_{D}((1 +r)g(a ,\boldsymbol{p} ,r) +\boldsymbol{p} \cdot \boldsymbol{y})\lambda (da;\boldsymbol{p} ,r) , \label{eq: operator M}\end{equation} 
for any $D\in\mathcal{B}(A(\boldsymbol{p},r))$ where $1_{D}$ is the indicator function of the set $D \in \mathcal{B}(A(\boldsymbol{p},r))$. $M\lambda  \in \mathcal{P}(A(\boldsymbol{p},r))$ describes the next period's wealth distribution, given that the current wealth distribution is $\lambda  \in \mathcal{P}(A(\boldsymbol{p},r))$ and the prices are $(\boldsymbol{p} ,r)$. A wealth distribution $\mu  \in \mathcal{P}(A(\boldsymbol{p},r))$ is called an invariant wealth distribution if $\mu=M\mu $.  

We now define a CSE.

\begin{definition} \label{Def: CSE}
A competitive stationary equilibrium consists of prices $(\boldsymbol{p} ,r)$, a savings policy function $g$, a demand function $\boldsymbol{x}^{ \ast }$, and a wealth distribution $\mu  \in \mathcal{P}(A(\boldsymbol{p},r))$ such that 

(i) Given the prices $(\boldsymbol{p} ,r)$, the savings policy function $g$ and the demand function $\boldsymbol{x}^{ \ast }$ are optimal for the agents. That is, $g$ satisfies equation (\ref{eq:savings policy}) and \begin{equation*}\boldsymbol{x}^{ \ast }(a -g(a ,\boldsymbol{p} ,r) ,\boldsymbol{p}) =\ensuremath{\operatorname*{argmax}}_{\boldsymbol{x} \in X(a -g(a ,\boldsymbol{p} ,r) ,\boldsymbol{p})}U(\boldsymbol{x}) .
\end{equation*}

(ii) Given the prices $(\boldsymbol{p} ,r)$, $\mu $ is an invariant wealth distribution. That is, $\mu  \in \mathcal{P}(A(\boldsymbol{p},r))$ satisfies $\mu  =M\mu $.

(iii) For each good $1 \leq i \leq n$, the aggregate supply of good $i$ equals the aggregate demand for good $i$: 
\begin{equation*}\int _{A(\boldsymbol{p},r)}x_{i}^{ \ast }(a -g(a ,\boldsymbol{p} ,r) ,\boldsymbol{p})\mu (da;\boldsymbol{p} ,r) =\sum \limits _{y_{i} \in \mathcal{Y}_{i}}^{\,}e_{i}(y_{i})y_{i}
\end{equation*}

(iv) The aggregate supply of savings equals the aggregate demand for savings:
\begin{equation*}\int _{A(\boldsymbol{p},r)}g(a ,\boldsymbol{p} ,r)\mu (da;\boldsymbol{p} ,r) =0.
\end{equation*}
\end{definition}

The first equilibrium condition says that agents choose a demand function and a savings policy function to maximize their expected discounted utility. The second equilibrium condition says that the wealth distribution induced by the agents' savings policy function is invariant. The third equilibrium condition says that the aggregate demand for good $i$ equals the aggregate supply of good $i$. The fourth equilibrium condition says that the aggregate savings in the economy are $0$, i.e., the supply of savings equals the demand for savings. The third and fourth equilibrium conditions require that the prices $(\boldsymbol{p},r)$ are market-clearing prices. The natural interpretation of the stationary equilibrium prices are that the prices represent average prices (see \cite{huggett1993risk}). An alternative to CSE is a competitive recursive equilibrium (see \cite{miao2006competitive}). A competitive recursive equilibrium is a sequence of prices $(\boldsymbol{p}(t),r(t))$, and a sequence of measures $(\lambda (t))$ such that the savings and consumption decisions are optimal for the agents; the prices $(\boldsymbol{p}(t) ,r(t))$ are market-clearing prices for every period $t$; and the wealth distribution follows the law of motion defined by equation (\ref{eq: operator M}). Clearly, if the initial agents' wealth distribution is invariant, then the CSE is also a competitive recursive equilibrium. In this paper we focus on a CSE. The existence result presented in the next section can be applied to the competitive recursive equilibrium case as well.\protect\footnote{
For general existence results of a competitive recursive equilibrium with aggregate shocks see \cite{brumm2017recursive}.} The analysis and computation of a competitive recursive equilibrium are generally much harder than the analysis and computation of a CSE.

We note that the model presented in this paper is closely related to Huggett's model \citep{huggett1993risk} and Bewley's model \citep{bewley1986stationary}. In Huggett's model there is only one consumption good and only the interest rate is determined in equilibrium. In Bewley's model there are many consumption goods and their prices are determined in equilibrium but the interest rate is fixed and is not determined in equilibrium. In the model presented in this paper, however, there are many consumption goods, and both their prices and the interest rate are determined in equilibrium. Our model also generalizes the static pure-exchange Arrow-Debreu model to an incomplete-market economy where the agents can transfer assets from one period to another only by investing in a risk-free bond.

\section{Main results}
In this section we present our main results. In Section 3.1 we state our existence result. In Section 3.2 we provide conditions that ensure the uniqueness of a competitive stationary equilibrium (CSE). In Section 3.3 we discuss how an increase in the risk of the endowments process influences the equilibrium prices of goods and the equilibrium interest rate when the agents' preferences can be represented by a CES utility function. In Section 3.4 we compare our model to mean field equilibrium models. In Section 3.5 we extend the model to include ex-ante heterogeneous agents.

\subsection{Existence of a CSE} \label{SEC: Existence}

The main theorem of this section is the following:
\begin{theorem}
\label{Theorem exist} Suppose that Assumption \ref{Assumption 0}  holds. Then, there exists a competitive stationary equilibrium. 
\end{theorem}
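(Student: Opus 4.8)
The plan is to prove Theorem \ref{Theorem exist} by locating market-clearing prices through a fixed-point argument applied to an aggregate excess demand correspondence built from the agents' optimal policies and the invariant wealth distributions they induce.

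\textbf{First, reduce the problem and construct the excess demand.} By Proposition \ref{Prop homogenous of } the excess demand for each good is homogeneous of degree zero and the excess demand for savings is homogeneous of degree one in $\boldsymbol p$, so I may normalize the goods' prices to the open simplex $\mathrm{int}(\Delta)$, $\Delta := \{\boldsymbol p \in \mathbb R_+^n : \sum_{i=1}^n p_i = 1\}$. Since $\beta > 1/2$ we have $\bar r := 1/\beta - 1 \in (0,1)$, and a standard income-fluctuation argument shows that when $\beta(1+r) \ge 1$ the aggregate demand for savings is unbounded, so it suffices to look for equilibria with $(\boldsymbol p, r) \in P := \mathrm{int}(\Delta) \times (0, \bar r)$. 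Fix $(\boldsymbol p, r) \in P$. By Lemma \ref{Lemma 1} the savings policy $g(\cdot, \boldsymbol p, r)$ and the static demand $\boldsymbol x^*(\cdot, \boldsymbol p)$ are continuous and single-valued, so the operator $M(\cdot\,;\boldsymbol p,r)$ of (\ref{eq: operator M}) is a continuous (in the topology of weak convergence) affine self-map of the compact convex set $\mathcal P(A(\boldsymbol p, r))$; by the Markov--Kakutani fixed point theorem the set $\mathcal M(\boldsymbol p, r)$ of invariant wealth distributions is nonempty, convex and compact. For $\mu \in \mathcal M(\boldsymbol p, r)$ put
\[
Z_i(\boldsymbol p, r; \mu) := \int_{A(\boldsymbol p, r)} x_i^*\big(a - g(a,\boldsymbol p, r),\boldsymbol p\big)\,\mu(da) - \sum_{y_i \in \mathcal Y_i} e_i(y_i)y_i, \qquad 1 \le i \le n,
\]
\[
Z_0(\boldsymbol p, r; \mu) := \int_{A(\boldsymbol p, r)} g(a,\boldsymbol p, r)\,\mu(da),
\]
and let $Z(\boldsymbol p,r) := \{(Z_1,\dots,Z_n,Z_0)(\boldsymbol p,r;\mu):\mu\in\mathcal M(\boldsymbol p,r)\}\subseteq\mathbb R^{n+1}$. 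A CSE exists if and only if $0 \in Z(\boldsymbol p, r)$ for some $(\boldsymbol p, r) \in P$.

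\textbf{Next, establish the three structural properties of $Z$.} (a) \emph{Upper hemicontinuity}: all state spaces $A(\boldsymbol p,r)$ with $(\boldsymbol p,r)$ in a compact subset of $P$ lie in one fixed compact interval, and $g,\boldsymbol x^*,V$ depend continuously on $(a,\boldsymbol p,r)$ (Berge's theorem applied to the dynamic program), whence $(\boldsymbol p,r)\mapsto\mathcal M(\boldsymbol p,r)$ is upper hemicontinuous and $Z$ is a nonempty-, convex-, compact-valued, upper hemicontinuous correspondence on $P$. (b) \emph{Walras' law}: integrating the budget identity $\boldsymbol p\cdot\boldsymbol x^*(a-g(a,\boldsymbol p,r),\boldsymbol p)=a-g(a,\boldsymbol p,r)$ against an invariant $\mu$ and using $\mu=M\mu$, which gives $\int a\,\mu(da)=(1+r)\int g\,\mu(da)+\boldsymbol p\cdot\sum_{\boldsymbol y\in\mathcal Y}e(\boldsymbol y)\boldsymbol y$, yields
\[
\boldsymbol p\cdot\big(Z_1(\boldsymbol p,r;\mu),\dots,Z_n(\boldsymbol p,r;\mu)\big)=r\,Z_0(\boldsymbol p,r;\mu),
\]
so that clearing the $n$ goods markets forces the savings market to clear as well since $r>0$. (c) \emph{Boundary conditions}: using Assumption \ref{Assumption 0} and the hypotheses on $e$, aggregate consumption expenditure $rZ_0+\boldsymbol p\cdot\sum_{\boldsymbol y\in\mathcal Y}e(\boldsymbol y)\boldsymbol y$ stays bounded away from $0$, so strict monotonicity of $U$ makes the aggregate demand for any good whose price tends to $0$ diverge, i.e. $\max_i Z_i\to+\infty$ as $\boldsymbol p\to\partial\Delta$; moreover $Z_0\to+\infty$ as $r\to\bar r^-$ while $Z_0<0$ for $r$ close to $0$ (the natural borrowing limit becomes slack and $\beta<1$). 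Hence every zero of $Z$ lies in the interior of $P$.

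\textbf{Finally, run the fixed-point argument.} With (a)--(c) I would apply a Gale--Debreu--Nikaido / Kakutani argument: on a sufficiently large compact box $\bar P_\varepsilon\subset P$ (chosen, using (c), so that no zero of $Z$ occurs on $\partial\bar P_\varepsilon$) consider the price-adjustment correspondence sending $(\boldsymbol p,r)$ to the renormalized vector $\big(p_i+\max\{z_i,0\}\big)_i$ for the goods' prices together with a monotone adjustment of $r$ in the direction $-z_0$, over all $(z_1,\dots,z_n,z_0)\in Z(\boldsymbol p,r)$. This correspondence is nonempty-, convex-, compact-valued and upper hemicontinuous, so Kakutani's theorem gives a fixed point $(\boldsymbol p^*,r^*)$; the usual computation combined with the Walras' law in (b) shows $0\in Z(\boldsymbol p^*,r^*)$, and (c) shows $(\boldsymbol p^*,r^*)$ is interior. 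Taking the associated policies $g,\boldsymbol x^*$ and a $\mu\in\mathcal M(\boldsymbol p^*,r^*)$ achieving the zero produces a CSE. The main obstacle is the coupling of the two types of markets, which breaks the standard Arrow--Debreu template: the value of aggregate excess demand for goods equals $rZ_0$ rather than $0$, so one must solve the full $(n{+}1)$-dimensional clearing system; the state space $A(\boldsymbol p,r)$ and the invariant distribution move with prices, so uniform control on a common compact domain is needed for upper hemicontinuity of $Z$; and the boundary behaviour in $r$ — and the boundedness-away-from-zero of aggregate consumption expenditure underlying the goods-price boundary conditions — genuinely requires the dynamic structure (impatience, precautionary savings, the natural borrowing limit) rather than static reasoning.
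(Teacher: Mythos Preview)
Your approach is in the same spirit as the paper's, but two choices you make create genuine gaps that the paper's argument avoids.

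\textbf{First gap: the boundary behaviour in $r$.} You assert that $Z_0\to+\infty$ as $r\to\bar r^-$, appealing to ``a standard income-fluctuation argument.'' In the paper's model savings are bounded above by $\sum_i p_i\,\overline b/(1-r)^2$, which under your normalization $\sum_i p_i=1$ equals $\overline b/(1-r)^2$; this stays finite on $(0,\bar r]$ (since $\bar r<1$), so $Z_0$ is uniformly bounded and cannot diverge. The unbounded-savings divergence result from Aiyagari/Chamberlain--Wilson you are implicitly invoking does not apply here. The paper circumvents this by normalizing $(\boldsymbol p,r)$ jointly on the simplex $\sum_i p_i+r=1$: then $r\to 1$ forces $\sum_i p_i\to 0$ and the savings upper bound $\overline b/\sum_i p_i\to\infty$, so the divergence argument becomes available. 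With your separate normalization you would need instead to prove directly that $Z_0>0$ near $r=\bar r$ in the bounded model, which is not a ``standard'' step.

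\textbf{Second gap: the fixed-point construction.} The paper treats $r$ as the $(n{+}1)$-st price on a single simplex; Walras' law then reads $(\boldsymbol p,r)\cdot\zeta(\boldsymbol p,r)=0$ and the Gale--Debreu--Nikaido correspondence (put weight only on coordinates with maximal excess demand) is a self-map of the full simplex, so Kakutani applies cleanly. Your correspondence adjusts $\boldsymbol p$ by renormalizing $(p_i+\max\{z_i,0\})_i$ on $\Delta$ and separately moves $r$ ``in the direction $-z_0$'' inside $(0,\bar r)$. As written this is not a self-map of a compact convex set: renormalization need not keep $\boldsymbol p$ in the truncated simplex $\bar P_\varepsilon$, and ``monotone adjustment of $r$'' is not a precise map. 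More importantly, Walras' law in your variables is $\boldsymbol p\cdot(Z_1,\dots,Z_n)=rZ_0$, not $(\boldsymbol p,r)\cdot Z=0$, so the usual algebraic step that converts a Kakutani fixed point into a zero of excess demand does not go through for your hybrid adjustment without additional argument.

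A smaller divergence: you work with a set-valued $\mathcal M(\boldsymbol p,r)$ of invariant distributions, while the paper proves uniqueness (via a Doeblin condition and uniform ergodicity) and therefore obtains a single-valued, continuous excess demand. Your correspondence route is viable, but upper hemicontinuity of $(\boldsymbol p,r)\mapsto\mathcal M(\boldsymbol p,r)$ needs more than ``Berge's theorem applied to the dynamic program'': you need closedness of the graph of the invariant-measure correspondence as the transition kernel varies, which is what the paper's Lemma \ref{Unique stationary is contin} does in the single-valued case via a bounded-convergence-with-varying-measures argument.
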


To prove the theorem, we construct a natural excess demand function and show that the function has a zero. Under Assumption \ref{Assumption 0}, the savings policy function $g(a ,\boldsymbol{p} ,r)$ is single-valued and continuous. Furthermore, there exists a unique invariant wealth distribution $\mu (da;\boldsymbol{p} ,r)$ for all $(\boldsymbol{p} ,r) \in \boldsymbol{P}$ where $\boldsymbol{P}$ is the non-empty and convex set defined in equation (\ref{eq: P}) below. We define an excess demand function $\zeta (\boldsymbol{p} ,r)$ from $\boldsymbol{P} \subseteq \mathbb{R}_{ +}^{n +1}$ into $\mathbb{R}^{n +1}$, where 
\begin{equation*}\zeta _{i}(\boldsymbol{p} ,r) =\int _{A(\boldsymbol{p},r)}x_{i}^{ \ast }(a -g(a ,\boldsymbol{p} ,r) ,\boldsymbol{p})\mu (da;\boldsymbol{p} ,r) -\sum \limits _{y_{i} \in \mathcal{Y}_{i}}^{\,}e_{i}(y_{i})y_{i}
\end{equation*} 
is the excess demand for good $i$, $i =1 , . . . ,n$, and 
\begin{equation*}\zeta _{n +1}(\boldsymbol{p} ,r) = -\int _{A(\boldsymbol{p},r)}g(a ,\boldsymbol{p} ,r)\mu (da;\boldsymbol{p} ,r)\end{equation*} 
is the excess demand for savings. The excess demand function $\zeta  :\boldsymbol{P} \rightarrow \mathbb{R}^{n +1}$ is defined by     \begin{equation*}\zeta (\boldsymbol{p} ,r) =(\zeta _{1}(\boldsymbol{p} ,r) , \ldots ,\zeta _{n}(\boldsymbol{p} ,r) ,\zeta _{n +1}(\boldsymbol{p} ,r)) .
\end{equation*}
Note that if $\zeta (\boldsymbol{p} ,r) =\boldsymbol{0}$ then $(\boldsymbol{p} ,r)$ are equilibrium prices, $\mu ( \cdot ;\boldsymbol{p} ,r)$ is the equilibrium invariant wealth distribution, $\boldsymbol{x}^{ \ast }(a -g(a ,\boldsymbol{p} ,r) ,\boldsymbol{p})$ is the equilibrium demand function, and $g(a ,\boldsymbol{p} ,r)$ is the equilibrium savings policy function. 

To prove that the excess demand function has a zero we prove properties of the excess demand function. We first extend a well-known result from the static Arrow-Debreu model to the incomplete-markets economy studied in this paper. We show that the aggregate demand for goods and the aggregate supply of goods depend only on their relative prices. In particular, the next Proposition shows that if $(\boldsymbol{p} ,r)$ are equilibrium prices then $(\theta \boldsymbol{p} ,r)$ are also equilibrium prices for all $\theta  >0$. This result is trivial in the standard static Arrow-Debreu model, however, in the incomplete-market  Arrow-Debreu economy this result is challenging to establish because the excess demand function depends on the invariant wealth distribution. The key part of proving part (ii) of Proposition \ref{Prop homogenous of }  is to show that the invariant wealth distribution is homogeneous in the prices of goods for a suitable definition of homogeneity that we introduce for probability measures.

\begin{proposition}
\label{Prop homogenous of }Fix $\boldsymbol{p} \gg 0$, $ 0 < r <1$, and $\theta  >0$. Then 

(i) $\theta g(a ,\boldsymbol{p} ,r) =g(\theta a ,\theta \boldsymbol{p} ,r)$ and $\boldsymbol{x}^{ \ast }(a -g(a ,\boldsymbol{p} ,r) ,\boldsymbol{p}) =\boldsymbol{x}^{ \ast }(\theta a -g(\theta a ,\theta \boldsymbol{p} ,r) ,\theta \boldsymbol{p})$ for all $a$.

(ii) $\zeta _{i}(\theta \boldsymbol{p} ,r) =\zeta _{i}(\boldsymbol{p} ,r)$ for $1 \leq i \leq n$ and $\zeta _{n +1}(\theta \boldsymbol{p} ,r) =\theta \zeta _{n +1}(\boldsymbol{p} ,r)$. 

Thus, if $(\boldsymbol{p} ,r)$ are equilibrium prices then $(\theta \boldsymbol{p} ,r)$ are also equilibrium prices. 
\end{proposition}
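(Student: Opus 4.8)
The plan is to reduce everything to the homogeneity of the model's primitive data. The starting point is that the budget frontier is scale-invariant: $X(c,\boldsymbol{p})=\{\boldsymbol{x}\in\mathbb{R}_+^n:\boldsymbol{p}\cdot\boldsymbol{x}=c\}$ satisfies $X(c,\boldsymbol{p})=X(\theta c,\theta\boldsymbol{p})$, hence $\boldsymbol{x}^{\ast}(c,\boldsymbol{p})=\boldsymbol{x}^{\ast}(\theta c,\theta\boldsymbol{p})$ and the indirect utility obeys $u_{\boldsymbol{p}}(c)=u_{\theta\boldsymbol{p}}(\theta c)$ for every $c\ge 0$. Next, from the explicit formulas one checks $\underline{b}(\theta\boldsymbol{p})=\theta\underline{b}(\boldsymbol{p})$, $C(\theta a,\theta\boldsymbol{p})=\theta C(a,\boldsymbol{p})$, $\underline{a}(\theta\boldsymbol{p},r)=\theta\underline{a}(\boldsymbol{p},r)$ and $\overline{a}(\theta\boldsymbol{p},r)=\theta\overline{a}(\boldsymbol{p},r)$, so in particular $A(\theta\boldsymbol{p},r)=\theta A(\boldsymbol{p},r)$ --- the state space is carried onto itself by the dilation $a\mapsto\theta a$. (This is precisely where the particular choice of the upper and lower bounds on savings enters, cf.\ Remark~\ref{Remark: compactness}.)

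For part (i) I would prove $V(\theta a,\theta\boldsymbol{p},r)=V(a,\boldsymbol{p},r)$ for all $a\in A(\boldsymbol{p},r)$ by a uniqueness-of-fixed-point argument: set $W(a):=V(\theta a,\theta\boldsymbol{p},r)$ on $A(\boldsymbol{p},r)$, write out $TV(\theta a,\theta\boldsymbol{p},r)=V(\theta a,\theta\boldsymbol{p},r)$, substitute $b'=\theta b$ (legitimate since $C(\theta a,\theta\boldsymbol{p})=\theta C(a,\boldsymbol{p})$), and use $u_{\theta\boldsymbol{p}}(\theta a-\theta b)=u_{\boldsymbol{p}}(a-b)$ together with $(1+r)\theta b+\theta\boldsymbol{p}\cdot\boldsymbol{y}=\theta((1+r)b+\boldsymbol{p}\cdot\boldsymbol{y})$ to obtain $W(a)=\max_{b\in C(a,\boldsymbol{p})}u_{\boldsymbol{p}}(a-b)+\beta\sum_{\boldsymbol{y}}e(\boldsymbol{y})W((1+r)b+\boldsymbol{p}\cdot\boldsymbol{y})$, i.e.\ $TW(\cdot,\boldsymbol{p},r)=W$; since $T(\cdot,\boldsymbol{p},r)$ has a unique fixed point, $W\equiv V(\cdot,\boldsymbol{p},r)$. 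The very same change of variable shows that the function maximized in the definition (\ref{eq:savings policy}) of $g(\theta a,\theta\boldsymbol{p},r)$ is, under $b'=\theta b$, the function maximized in the definition of $g(a,\boldsymbol{p},r)$, so by single-valuedness (Lemma~\ref{Lemma 1}) $g(\theta a,\theta\boldsymbol{p},r)=\theta g(a,\boldsymbol{p},r)$; therefore $\theta a-g(\theta a,\theta\boldsymbol{p},r)=\theta(a-g(a,\boldsymbol{p},r))$ and the consumption identity follows from $\boldsymbol{x}^{\ast}(\theta c,\theta\boldsymbol{p})=\boldsymbol{x}^{\ast}(c,\boldsymbol{p})$.

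For part (ii) I would first transfer the invariant distribution. For $\lambda\in\mathcal{P}(A(\boldsymbol{p},r))$ define its dilation $\theta_{*}\lambda\in\mathcal{P}(A(\theta\boldsymbol{p},r))$ --- recall $A(\theta\boldsymbol{p},r)=\theta A(\boldsymbol{p},r)$ --- by $\theta_{*}\lambda(D)=\lambda(\theta^{-1}D)$; this is the notion of homogeneity for probability measures referred to in the text. Feeding $\theta_{*}\mu(\cdot;\boldsymbol{p},r)$ into the operator $M(\cdot;\theta\boldsymbol{p},r)$ of (\ref{eq: operator M}), changing variables $a'=\theta a$, and using $g(\theta a,\theta\boldsymbol{p},r)=\theta g(a,\boldsymbol{p},r)$ and $1_{D}(\theta z)=1_{\theta^{-1}D}(z)$, one gets $M(\theta_{*}\mu)(D;\theta\boldsymbol{p},r)=M\mu(\theta^{-1}D;\boldsymbol{p},r)=\mu(\theta^{-1}D;\boldsymbol{p},r)=\theta_{*}\mu(D;\theta\boldsymbol{p},r)$, so $\theta_{*}\mu(\cdot;\boldsymbol{p},r)$ is invariant for $(\theta\boldsymbol{p},r)$, and by uniqueness of the invariant distribution $\mu(\cdot;\theta\boldsymbol{p},r)=\theta_{*}\mu(\cdot;\boldsymbol{p},r)$. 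Substituting this together with part (i) into the definitions of $\zeta_i$ and $\zeta_{n+1}$ and changing variables $a'=\theta a$ then gives $\zeta_i(\theta\boldsymbol{p},r)=\zeta_i(\boldsymbol{p},r)$ for $1\le i\le n$ and $\zeta_{n+1}(\theta\boldsymbol{p},r)=\theta\,\zeta_{n+1}(\boldsymbol{p},r)$; finally, $\zeta(\boldsymbol{p},r)=\boldsymbol{0}$ forces $\zeta(\theta\boldsymbol{p},r)=\boldsymbol{0}$ componentwise, so $(\theta\boldsymbol{p},r)$ are equilibrium prices.

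I expect the substantive step to be the second one in part (ii): recognizing that the correct sense in which the wealth distribution is ``homogeneous of degree one'' is invariance of the law under the dilation $a\mapsto\theta a$, and then checking that this dilation conjugates $M(\cdot;\boldsymbol{p},r)$ into $M(\cdot;\theta\boldsymbol{p},r)$ so that uniqueness of the invariant distribution can be invoked. Everything else --- the scaling of $\underline{b}$, $C$ and $A$, and the change-of-variable bookkeeping --- is mechanical. One small caveat to handle is that uniqueness of the invariant distribution must be available at $(\theta\boldsymbol{p},r)$ as well as at $(\boldsymbol{p},r)$, which is not an issue since $\boldsymbol{p}\gg 0$ and $r\in(0,1)$ are arbitrary and the argument that establishes uniqueness applies verbatim.
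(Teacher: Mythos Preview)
Your proof is correct and rests on the same scaling identities the paper uses ($X(\theta c,\theta\boldsymbol{p})=X(c,\boldsymbol{p})$, $C(\theta a,\theta\boldsymbol{p})=\theta C(a,\boldsymbol{p})$, $A(\theta\boldsymbol{p},r)=\theta A(\boldsymbol{p},r)$), but the way you close the argument is cleaner than the paper's. For part~(i), the paper shows that $T$ preserves homogeneity of degree $0$ in $(a,\boldsymbol{p})$, iterates to get $T^n f$ homogeneous, and then invokes \emph{uniform convergence} $T^n f\to V$ together with closedness of the homogeneous class; you instead exhibit $W(a)=V(\theta a,\theta\boldsymbol{p},r)$ as a fixed point of $T(\cdot,\boldsymbol{p},r)$ and appeal directly to uniqueness. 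For part~(ii), the paper introduces a test-function definition of ``$\lambda$ homogeneous of degree $l$'' (equation~(\ref{homog})), shows $M$ preserves it, iterates, and passes to the limit via weak convergence $M^k\lambda\Rightarrow\mu$ (Lemma~\ref{unique stationary dist}); you instead push forward $\mu(\cdot;\boldsymbol{p},r)$ by the dilation $a\mapsto\theta a$, check that this pushforward is a fixed point of $M(\cdot;\theta\boldsymbol{p},r)$, and invoke uniqueness of the invariant distribution. Your pushforward formulation is exactly equivalent to the paper's test-function one, but it bypasses the iteration-plus-limit step in both places, so it needs only the \emph{uniqueness} parts of the contraction/ergodicity lemmas rather than their convergence parts. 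Your caveat about needing uniqueness at $(\theta\boldsymbol{p},r)$ is well placed; the ergodicity argument in Lemma~\ref{unique stationary dist} indeed works for any $\boldsymbol{p}\gg 0$, $r\in(0,1)$, not just normalized prices.
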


We note that the excess demand for savings is homogeneous of degree one in the prices of goods, while the excess demand for good $k$ is homogeneous of degree zero for all $1 \leq k \leq n$. These results use  the fact that 
$\theta C(a,\boldsymbol{p}) = C(\theta a,\theta \boldsymbol{p})$ for all $\theta > 0$ where \begin{equation*}
C(a,\boldsymbol{p}) =  \left [- \frac{ \min _{\boldsymbol{y} \in \mathcal{Y}}\boldsymbol{p} \cdot \boldsymbol{y}}{r},\min \left \{a ,\sum _{i =1}^{n}\frac{ p_{i}\overline{b}} {(1-r)^{2} } \right \} \right ]
\end{equation*}
is the interval from which an agent may choose his level of savings and $\theta A =\{\theta x :x \in A\}$ for any set $A$. That is, if the agent can save an amount  $b$ given the wealth level $a$ and the prices $\boldsymbol{p}$, then the agent can save an amount $\theta b$ given the wealth level $\theta a$ and the prices $\theta \boldsymbol{p}$. This is reasonable in our setting since all the prices in our model are real prices (see also Remark \ref{Remark: compactness}).    
Also note that if the aggregate savings do not equal zero in equilibrium, then it is not true that if $(\boldsymbol{p},r)$ are equilibrium prices then $(\theta\boldsymbol{p},r)$ are also equilibrium prices for all $\theta >0$. This follows because the aggregate demand for savings is homogeneous of degree one in the prices of goods. Hence, in an economy with production Proposition \ref{Prop homogenous of } can fail (unless the demand for capital is also homogeneous).  

From Proposition \ref{Prop homogenous of }, if $(\boldsymbol{p},r)$ are equilibrium prices then $(\theta\boldsymbol{p},r)$ are also equilibrium prices for all $\theta >0$. Thus, we can normalize the prices of the goods. More precisely, the search for equilibrium prices can be confined to sets that contain at least one element from the half-ray $\{\theta \boldsymbol{p} :\theta  >0\}$.

We define the sets $\Lambda  =\{(\boldsymbol{p} ,r) \in \mathbb{R}_{ +}^{n} \times \mathbb{R}_{ +} :\sum _{i =1}^{n}p_{i} +r =1\}$ and 
\begin{equation} \boldsymbol{P} =\{(\boldsymbol{p} ,r) \in \Lambda  :\boldsymbol{p} \gg 0 ,r >0\} . \label{eq: P}
\end{equation} 
In order to prove the existence of a CSE, that is, to prove that there are prices $(\boldsymbol{p} ,r)\in \boldsymbol{P}$ such that  $\zeta (\boldsymbol{p} ,r) =\boldsymbol{0}$, we show that the excess demand function is continuous, satisfies Walras' law and satisfies suitable boundness and boundary conditions (see Proposition \ref{Prop: existence} in the Appendix). We then show that these conditions guarantee the existence of at least one vector $(\boldsymbol{p} ,r) \in \boldsymbol{P}$ that satisfies $\zeta (\boldsymbol{p} ,r) =\boldsymbol{0}$.   

\begin{remark}\label{Remark Borrowing} We prove the existence of an equilibrium with a strictly positive interest rate. An equilibrium with a strictly positive interest rate exists since we assume that the borrowing constraint tends to minus infinity as the interest rate tends to zero (a similar observation is made on page 673 in \cite{aiyagari1994}). Also note that for $(\boldsymbol{p},r) \in \boldsymbol{P}$ we have $\sum _{i=1}^{n} p_{i} / (1-r)^{2} = 1/\sum _{i=1}^{n} p_{i}$ so the upper bound on savings tends to infinity when $\sum _{i=1}^{n} p_{i}$ tends to $0$. These conditions are important for proving that the excess demand function satisfies appropriate boundary conditions. 
\end{remark}

\subsection{Uniqueness of a CSE}
In this section we prove the uniqueness of a CSE for the special case where the utility function is given by: $U(\boldsymbol{x}) =\sum _{i =1}^{n}\alpha _{i}x_{i}^{\gamma }$ for some $0 <\gamma  <1$, $\alpha _{i} >0$, $\sum _{i =1}^{n}\alpha _{i} =1$, i.e., the agents' preferences over bundles can be represented by a CES utility function with an elasticity of substitution that is higher than one. 

There is a vast literature that provides sufficient conditions to ensure the uniqueness of an equilibrium in the standard static pure-exchange Arrow-Debreu model.\protect\footnote{
For a survey of the work done on the uniqueness of equilibrium, see  \cite{arrow1971competitive}, \cite{mas1991uniqueness}, and \cite{kehoe1998uniqueness}. For recent results, see \cite{toda2017edgeworth} and \cite{geanakoplos2018uniqueness}, and references therein.}  The property that the demand for each good increases with the prices of the other goods (``gross substitutes property'') usually plays a crucial role in proving the uniqueness of an equilibrium in the static Arrow-Debreu model. Given the gross substitutes property, an easy argument shows that the equilibrium must be unique. This fact led most of the previous literature on the uniqueness of an equilibrium to find conditions on agents' preferences that ensure that the gross substitutes property holds. While the gross substitutes property remains an important property in proving the uniqueness of an equilibrium in the dynamic incomplete-market Arrow-Debreu model considered in this paper also, the standard argument that proves the uniqueness of an equilibrium does not apply. The reason is that the aggregate demand for goods does not necessarily increase with the interest rate, and the aggregate demand for savings does not necessarily increase with the prices of goods. Thus, the excess demand function does not necessarily have the gross substitutes property. The coupling of the aggregate savings and the aggregate demand for goods leads to a complicated behavior of the excess demand function. Nonetheless, we show that when the agents' preferences are represented by a CES utility function with an elasticity of substitution that is higher than one the CSE is unique even when the excess demand does not have the gross substitutes property.

It is well known and easy to check that when the agents' preferences are represented by a CES utility function, the indirect utility function \begin{equation*}
 v(a -b ,\boldsymbol{p}) =\max _{x \in X(a -b ,\boldsymbol{p})}U(\boldsymbol{x})
\end{equation*} 
is given by a constant relative risk aversion (CRRA) utility function. The CRRA utility function is popular in the applied literature and is often used in numerical analysis of incomplete markets heterogeneous agent models. The uniqueness of an equilibrium in these models with one consumption good and a CRRA utility function has recently been studied in \cite{proehl2018existence} and \cite{light2017uniqueness}. The next theorem generalizes the results in \cite{light2017uniqueness} to a model with many consumption goods.    

\begin{theorem}
\label{Theorem uniq}Assume that $U(\boldsymbol{x}) =\sum _{i =1}^{n}\alpha _{i}x_{i}^{\gamma }$ for some $0 <\gamma  <1$, $\alpha _{i} >0$, $\sum _{i =1}^{n}\alpha _{i} =1$. Then there exists a unique competitive stationary equilibrium. 
\end{theorem}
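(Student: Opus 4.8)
The plan is to exploit the CES structure to reduce the problem to a one-good Huggett economy and then invoke \cite{light2017uniqueness}. First I would solve the static consumption problem $\max_{\boldsymbol{x}\in X(c,\boldsymbol{p})}\sum_{i=1}^{n}\alpha_i x_i^{\gamma}$, which gives the demand $x_i^{\ast}(c,\boldsymbol{p})=(\alpha_i/p_i)^{1/(1-\gamma)}c/S(\boldsymbol{p})$ with $S(\boldsymbol{p}):=\sum_{j=1}^{n}p_j(\alpha_j/p_j)^{1/(1-\gamma)}$, and hence the indirect utility $u_{\boldsymbol{p}}(c)=U(\boldsymbol{x}^{\ast}(c,\boldsymbol{p}))=S(\boldsymbol{p})^{1-\gamma}c^{\gamma}$ --- a power (CRRA) utility times a strictly positive constant. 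Since multiplying the per-period reward by a positive constant multiplies the value function by the same constant and leaves the maximizer in (\ref{eq:savings policy}) unchanged, the savings policy $g(a,\boldsymbol{p},r)$ and the invariant distribution $\mu(\,\cdot\,;\boldsymbol{p},r)$ coincide with those of a one-good income-fluctuation problem with period utility $c^{\gamma}$, discount factor $\beta$, interest rate $r$, i.i.d.\ income shock $\boldsymbol{p}\cdot\boldsymbol{y}$ (the pushforward of $e$), and the natural borrowing limit $-\min_{\boldsymbol{y}\in\mathcal{Y}}\boldsymbol{p}\cdot\boldsymbol{y}/r$; that is, with a Huggett economy with CRRA utility.

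Next I would pin down the goods prices. Write $\bar{y}_i:=\sum_{y_i\in\mathcal{Y}_i}e_i(y_i)y_i>0$ and $C(\boldsymbol{p},r):=\int_{A(\boldsymbol{p},r)}(a-g(a,\boldsymbol{p},r))\mu(da;\boldsymbol{p},r)$. Substituting $x_i^{\ast}$ into market-clearing condition (iii) gives $(\alpha_i/p_i)^{1/(1-\gamma)}C(\boldsymbol{p},r)/S(\boldsymbol{p})=\bar{y}_i$ for every $i$; taking the ratio of the $i$-th and $j$-th equations cancels $C$ and $S$ and forces $p_i/p_j=(\alpha_i/\alpha_j)(\bar{y}_j/\bar{y}_i)^{1-\gamma}$, so the direction $\boldsymbol{p}^{\ast}$ of the price vector is determined by the primitives alone, independently of $r$ and $\mu$. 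When the relative prices equal $\boldsymbol{p}^{\ast}$ one has $(\alpha_i/p_i)^{1/(1-\gamma)}=K\bar{y}_i$ for a common constant $K$, hence $S(\boldsymbol{p})=K\,\boldsymbol{p}\cdot\bar{\boldsymbol{y}}$, and all $n$ goods-market conditions collapse to the single equation $C(\boldsymbol{p},r)=\boldsymbol{p}\cdot\bar{\boldsymbol{y}}$. Applying the invariance $\mu=M\mu$ to the (bounded) function $a\mapsto a$ yields $\int a\,\mu(da)=(1+r)\int g\,\mu(da)+\boldsymbol{p}\cdot\bar{\boldsymbol{y}}$, so $C(\boldsymbol{p},r)=\boldsymbol{p}\cdot\bar{\boldsymbol{y}}+r\int g\,\mu(da)$; since $r>0$ this makes $C(\boldsymbol{p},r)=\boldsymbol{p}\cdot\bar{\boldsymbol{y}}$ equivalent to $\int g(a,\boldsymbol{p},r)\mu(da;\boldsymbol{p},r)=0$, i.e.\ to condition (iv). Hence a CSE is exactly a price vector on the ray $\{\theta\boldsymbol{p}^{\ast}:\theta>0\}$ together with an interest rate that clears the savings market of the reduced one-good economy of the first step.

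Finally I would settle uniqueness of the interest rate. By Proposition \ref{Prop homogenous of }(ii) the aggregate savings of the reduced economy are homogeneous of degree one in the scale $\theta$ of $\boldsymbol{p}=\theta\boldsymbol{p}^{\ast}$, so whether they vanish at a given $r$ does not depend on $\theta$; fixing a scale, the one-good Huggett economy with CRRA utility has a unique market-clearing interest rate $r^{\ast}\in(0,1)$ by \cite{light2017uniqueness}. Combined with the previous step, the direction of $\boldsymbol{p}$ and the value of $r$ are uniquely determined, and the normalization $\sum_{i=1}^{n}p_i+r=1$ defining $\boldsymbol{P}$ fixes the remaining scale; together with existence (Theorem \ref{Theorem exist}), this gives a unique CSE.

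The routine algebra aside, I expect the main obstacle to be twofold: (a) checking that the reduced one-good problem meets the hypotheses of \cite{light2017uniqueness} even though its borrowing constraint and upper bound on savings are the endogenous, price-dependent objects of our model rather than fixed constants; and (b) making rigorous the homogeneity of the invariant measure --- that $\mu(\,\cdot\,;\theta\boldsymbol{p},r)$ is the pushforward of $\mu(\,\cdot\,;\boldsymbol{p},r)$ under $a\mapsto\theta a$ --- which underpins both the cancellations above and the scale-invariance of aggregate savings, and which should follow from the proof of Proposition \ref{Prop homogenous of }.
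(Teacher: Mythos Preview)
Your proposal is correct and takes a genuinely different route from the paper's own proof.

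\textbf{Goods prices.} The paper's Step~1 argues by contradiction: assuming two equilibrium price vectors $\boldsymbol{p}\geq\boldsymbol{p}'$ with $p_k=p_k'=1$, it uses the equilibrium identity $\int a\,\mu(da)=\sum e(\boldsymbol{y})\boldsymbol{p}\cdot\boldsymbol{y}$ together with the gross-substitutes property of CES demand ($z_k(\boldsymbol{p})\geq z_k(\boldsymbol{p}')$) to derive $\zeta_k(\boldsymbol{p},r)>\zeta_k(\boldsymbol{p}',r')$, a contradiction. You instead compute the relative prices explicitly: taking the ratio of the $i$-th and $j$-th goods-market equations cancels the endogenous quantities $C(\boldsymbol{p},r)$ and $S(\boldsymbol{p})$ and yields $p_i/p_j=(\alpha_i/\alpha_j)(\bar y_j/\bar y_i)^{1-\gamma}$ directly from the primitives. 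Your argument is shorter and more transparent about \emph{why} the prices are pinned down; the paper's argument makes the role of gross substitutes more visible but is less constructive.

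\textbf{Interest rate.} Both proofs ultimately reduce the savings-market condition to a one-good Huggett economy with CRRA period utility. The paper (Steps~2--4) re-derives in the present setting the monotonicity properties needed---$g$ increasing and convex in $a$, $g$ increasing in $r$, and aggregate savings strictly increasing in $r$---adapting the arguments of \cite{light2017uniqueness}; you invoke that reference as a black box after noting that the multiplicative constant $S(\boldsymbol{p})^{1-\gamma}$ is irrelevant for the argmax. The paper's longer route has the advantage of absorbing your obstacle~(a): because it reproves the needed monotonicity inside the current model, the price-dependent borrowing limit $-\min_{\boldsymbol{y}}\boldsymbol{p}\cdot\boldsymbol{y}/r$ and the upper bound $\sum_i p_i\bar b/(1-r)^2$ are handled explicitly (via the observation that $C(a,\boldsymbol{p},r)$ is ascending in $r$). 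If you keep your black-box citation, you should verify that these endogenous constraints fit the hypotheses of \cite{light2017uniqueness}, or else import the paper's Steps~2--4 verbatim. Obstacle~(b) is fully covered by the proof of Proposition~\ref{Prop homogenous of }, which shows precisely the pushforward property you need.
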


If the agents' preferences can be represented by a  Cobb-Douglas utility function, i.e., $U(\boldsymbol{x}) =\sum _{i =1}^{n}\alpha _{i}\ln (x_{i})$ for $\alpha _{i} >0$, $\sum _{i =1}^{n}\alpha _{i} =1$, then the same proof as the proof of Theorem \ref{Theorem uniq} shows that there exists a unique CSE in this case, as well. Note that in this case, the indirect utility function corresponds to a log utility function, which is often used in the quantitative literature (for example, see \cite{aiyagari1994} and \cite{krusell2010labour}).

The conditions on the agents' preferences that ensure uniqueness are restrictive.\protect\footnote{
A natural question that arises is: under what conditions is there a finite number of equilibria? (see \cite{debreu1970economies} for an answer to this question in the static Arrow-Debreu model). A related question can be asked about the stability of the CSE (see \cite{arrow1958stability} and \cite{arrow1959stability}). We did not explore these directions in the current paper.
} However, uniqueness results in Bewley models are rare, and a multiplicity of equilibria can arise even under the standard specifications of the model (for examples of the multiplicity of equilibria see \cite{toda2017huggett} and \cite{accikgoz2018existence}). Even in a static Arrow-Debreu model, a multiplicity of equilibria can easily arise.
\cite{kubler2010competitive} and \cite{kubler2010tackling} provide examples of multiplicity in the case that the agents' preferences can be represented by a CES utility function, and also provide a general method of finding all the equilibria in semi-algebraic Arrow-Debreu models.

\subsection{The price of goods and wealth inequality}

In this section we show that if the agents' preferences are represented by a CES  utility function with an elasticity of substitution that is higher than one, then an increase in the risk of the endowments process (in the sense of the convex stochastic order) does not change the equilibrium prices of goods, and decreases the equilibrium interest rate. 

In response to an increase in the risk of the endowments process, we show that the partial equilibrium wealth inequality is higher in the sense of the convex stochastic order. That is, for a fixed interest rate and fixed prices of goods, the wealth inequality is higher when the endowments process is riskier. This follows from the facts that agents save more when the future endowment is riskier and that the savings policy function is convex in wealth, i.e., the marginal propensity to consume is decreasing.\footnote{The convexity of the savings policy function follows from the CES assumption and is not easy to establish for general utility functions.} In addition, for a fixed interest rate $r$ and prices of goods $\boldsymbol{p}$, the precautionary savings effect increases the aggregate savings, and thus the aggregate expenditure on goods decreases. Since the goods are normal, the decrease in the aggregate expenditure on goods implies that the aggregate demand for each good decreases. 

We note that this is different from the static Arrow-Debreu model where riskier endowments do not change the demand for each good. In the static Arrow-Debreu model the demand for each good is linear in wealth while in our setting the demand for each good is concave in wealth because the marginal propensity to consume is decreasing. Thus, in the dynamic incomplete-market Arrow-Debreu model the equilibrium prices might change in response to an increase in the risk of the endowments process. The prices of goods, however, do not change at all in the new CSE. While the interest rate decreases, the negative effect of this decrease on the aggregate savings is exactly offset by the positive effect on the aggregate savings of an increase in the risk of the endowments process. In other words, the negative substitution effect on the aggregate savings and the positive precautionary effect on the aggregate savings are equal. 


We now introduce notations that are needed to state the main theorem of this section. For two probability measures $\lambda _{1} ,\lambda _{2}$ we define the partial order $ \succeq _{I -CX}$ by $\lambda _{2} \succeq _{I -CX}\lambda _{1}$ if and only if $\int f(a)\lambda _{2}(da) \geq \int f(a)\lambda _{1}(da)$ for every convex and increasing function $f$. Similarly, we write $\lambda _{2} \succeq _{CX}\lambda _{1}$ if and only if $\int f(a)\lambda _{2}(da) \geq \int f(a)\lambda _{1}(da)$ for every convex function $f$. We say that the endowments process $e$ is riskier than the endowments process $e^{ \prime }$ if $e \succeq _{CX}e^{ \prime }$. With slight abuse of notation, we add the argument $e$ to the functions defined above, when $e(\boldsymbol{y})$ is the probability of receiving the endowment vector $y \in \mathcal{Y}$. For example, we write $\mu ( \cdot ;\boldsymbol{p} ,r ,e)$ for the invariant wealth distribution, $g(a ,\boldsymbol{p} ,r ,e)$ for the savings policy function, and $\boldsymbol{x}^{ \ast }(a -g(a ,\boldsymbol{p} ,r ,e) ,\boldsymbol{p})$ for the demand function.

\begin{theorem}
\label{Theorem wealth ineq}Assume that $U(\boldsymbol{x}) =\sum _{i =1}^{n}\alpha _{i}x_{i}^{\gamma }$ for some $0 <\gamma  <1$, $\alpha _{i} >0$, $\sum _{i =1}^{n}\alpha _{i} =1$. Assume that the endowments process $e$ is riskier than the endowments process $e^{ \prime }$. Then

(i) The partial equilibrium wealth inequality is higher under $e$ than under $e^{ \prime }$, i.e., $\mu ( \cdot ;\boldsymbol{p},r,e) \succeq _{I-CX}\mu ( \cdot ;\boldsymbol{p} ,r ,e^{ \prime })$ for all $(\boldsymbol{p} ,r) \in \boldsymbol{P}$. In addition, if  $(\boldsymbol{p}(e),r(e))$ are equilibrium prices under the endowments process $e$ then $\mu ( \cdot ;\boldsymbol{p}(e),r(e),e) \succeq _{CX}\mu ( \cdot ;\boldsymbol{p}(e) ,r(e) ,e^{ \prime })$.  

(ii) The equilibrium prices of goods do not change, i.e., $\boldsymbol{p}(e) =\boldsymbol{p}(e^{ \prime })$. The equilibrium interest rate is lower under $e$ than under $e^{ \prime }$, i.e., $r(e^{ \prime }) \geq r(e)$.
\end{theorem}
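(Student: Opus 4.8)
The plan is to reduce everything to a one‑good income‑fluctuation problem and then to exploit the exact linearity of CES demand in expenditure. Under $U(\boldsymbol x)=\sum_i\alpha_i x_i^{\gamma}$ the static demand is $x_i^{*}(c,\boldsymbol p)=(w_i(\boldsymbol p)/p_i)\,c$, where the expenditure shares $w_i(\boldsymbol p)$ depend only on the relative prices and satisfy $\sum_i w_i(\boldsymbol p)=1$, and the indirect utility $u_{\boldsymbol p}(c)=U(\boldsymbol x^{*}(c,\boldsymbol p))$ is a CRRA function of $c$. Fixing $\boldsymbol p$, the agent's program is then a standard income‑fluctuation problem with CRRA felicity and income $\boldsymbol p\cdot Y$, and I would use the properties of the savings policy this delivers (as in Section~3.1 and in \cite{light2017uniqueness}): $g(\cdot,\boldsymbol p,r,e)$ is continuous, nondecreasing and convex in $a$, and, because the indirect utility is prudent, $g$ is monotone in the riskiness of the endowment, i.e.\ $e\succeq_{CX}e'$ implies $g(a,\boldsymbol p,r,e)\ge g(a,\boldsymbol p,r,e')$ for all $a$ (precautionary saving). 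Two further elementary facts will be used repeatedly: since $\boldsymbol y\mapsto\boldsymbol p\cdot\boldsymbol y$ is affine, $e\succeq_{CX}e'$ makes $\boldsymbol p\cdot Y$ under $e$ dominate $\boldsymbol p\cdot Y$ under $e'$ in the convex order; and the convex order preserves means, so $\sum_{y_i\in\mathcal Y_i}e_i(y_i)y_i=\sum_{y_i\in\mathcal Y_i}e'_i(y_i)y_i$ for every good $i$.

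For part (i), I would show that the wealth law of motion $M_e$ of \eqref{eq: operator M} (a) preserves the increasing‑convex order and (b) dominates $M_{e'}$ in it. Indeed, for fixed $(\boldsymbol p,r)$ and any increasing convex $f$, the map $a\mapsto\sum_{\boldsymbol y}e(\boldsymbol y)f\big((1+r)g(a,\boldsymbol p,r,e)+\boldsymbol p\cdot\boldsymbol y\big)$ is increasing and convex in $a$ (an increasing convex $f$ composed with the increasing convex inner map, then averaged), which gives (a); for (b) I first replace $g(\cdot,e')$ by the larger $g(\cdot,e)$ inside $f$ (using $f$ nondecreasing and precautionary saving) and then replace the measure $e'$ by $e$ (using that $\boldsymbol y\mapsto f\big((1+r)g(a,e)+\boldsymbol p\cdot\boldsymbol y\big)$ is convex in $\boldsymbol y$ and $e\succeq_{CX}e'$). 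Iterating $M_e$ and $M_{e'}$ from a common initial distribution and combining (a)–(b) inductively gives $M_e^{n}\lambda_0\succeq_{I-CX}M_{e'}^{n}\lambda_0$ for all $n$; letting $n\to\infty$ (the iterates converge to the invariant distribution by the ergodicity established en route to Theorem~\ref{Theorem exist}) and using that the increasing‑convex order passes to the weak limit on a compact support yields $\mu(\cdot;\boldsymbol p,r,e)\succeq_{I-CX}\mu(\cdot;\boldsymbol p,r,e')$. At the equilibrium prices under $e$, integrating the invariance identity and using $\int g\,d\mu=0$ shows that $\mu(\cdot;\boldsymbol p(e),r(e),e)$ has mean $\sum_i p_i(e)\sum_{y_i\in\mathcal Y_i}e_i(y_i)y_i$, which—since the convex order preserves means—equals the corresponding mean under $e'$; on the compact state space, increasing‑convex dominance together with equality of means upgrades to convex dominance, giving the second assertion of (i).

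For part (ii), the linearity of $x_i^{*}$ in expenditure is the crux of the price statement. Aggregate demand for good $i$ at $(\boldsymbol p,r)$ equals $(w_i(\boldsymbol p)/p_i)\int\big(a-g(a,\boldsymbol p,r,e)\big)\,d\mu(\cdot;\boldsymbol p,r,e)$, and integrating the invariance identity gives $\int(a-g)\,d\mu=\sum_j p_j\sum_{y_j\in\mathcal Y_j}e_j(y_j)y_j+r\!\int g\,d\mu$; hence in any CSE (where $\int g\,d\mu=0$) goods‑market clearing for good $i$ becomes $w_i(\boldsymbol p)\sum_j p_j\sum_{y_j}e_j(y_j)y_j=p_i\sum_{y_i}e_i(y_i)y_i$. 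Because $w_i$ depends only on relative prices and the average endowments coincide under $e$ and $e'$, these equations are identical for $e$ and $e'$, so they pin down the same equilibrium relative prices and $\boldsymbol p(e)=\boldsymbol p(e')$ (uniqueness of the CSE is Theorem~\ref{Theorem uniq}). For the interest rate, part (i) gives, at every $(\boldsymbol p,r)$, $S(\boldsymbol p,r,e):=\int g(a,\boldsymbol p,r,e)\,d\mu(da;\boldsymbol p,r,e)\ge S(\boldsymbol p,r,e')$ (combine $g(\cdot,e)\ge g(\cdot,e')$, convexity of $g(\cdot,e')$, and $\mu(\cdot;e)\succeq_{I-CX}\mu(\cdot;e')$). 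Evaluating at the common equilibrium relative prices and $r=r(e)$ gives $S(\boldsymbol p(e),r(e),e')\le S(\boldsymbol p(e),r(e),e)=0$, i.e.\ the excess demand for savings under $e'$ is nonnegative at $r(e)$; since that excess demand is positive below the equilibrium rate and negative above it—by the single‑crossing behaviour in $r$ underlying Theorem~\ref{Theorem uniq} together with the boundary conditions used for Theorem~\ref{Theorem exist}—it follows that $r(e')\ge r(e)$.

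The main obstacle is the reduction step: justifying that $g$ is convex in wealth and monotone in the riskiness of income in the present multi‑good setting, both of which rest on the CRRA form of $u_{\boldsymbol p}$ and are exactly where the CES restriction (and the machinery of \cite{light2017uniqueness}) is indispensable. A secondary delicate point is the transfer of the comparative statics from the policy function to the invariant distribution and then to general equilibrium—one must check that the increasing‑convex order survives the ergodic limit and that, as above, the goods‑ and savings‑market conditions can be disentangled so that the endowment risk enters only through $r$.
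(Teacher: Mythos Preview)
Your proposal is correct and follows essentially the same route as the paper: reduce to a one-good income-fluctuation problem via the CRRA indirect utility, use precautionary saving and convexity of $g$ in wealth to propagate the increasing-convex order through the Markov operator to the invariant distribution, equate means at equilibrium to upgrade $\succeq_{I-CX}$ to $\succeq_{CX}$, and then exploit the linearity of CES demand in expenditure together with monotonicity of aggregate savings in $r$ for part~(ii). The only noteworthy difference is cosmetic: for $\boldsymbol p(e)=\boldsymbol p(e')$ the paper argues by contradiction (normalizing $\boldsymbol p(e)\ge\boldsymbol p(e')$ with one common coordinate and comparing $\zeta_k$), whereas you derive the goods-market clearing equations $w_i(\boldsymbol p)\sum_j p_j\bar y_j=p_i\bar y_i$ directly and observe they depend only on mean endowments; similarly, for $r(e')\ge r(e)$ the paper runs a contradiction chain using Step~4 of the proof of Theorem~\ref{Theorem uniq}, while you phrase it as $S(\boldsymbol p,r,e)\ge S(\boldsymbol p,r,e')$ plus single-crossing in $r$. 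These are equivalent arguments.
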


We note that when the endowments process $e$ is riskier than the endowments process $e^{\prime}$ then the total supply of each consumption good does not change and the relative total supply of each consumption good does not change either (see more details in the proof of Theorem \ref{Theorem wealth ineq}). This fact plays a major rule in the proof of Theorem \ref{Theorem wealth ineq}, in particular, in proving that the prices of consumption goods do not change.

When the agents' preferences are not represented by a CES utility function, Theorem \ref{Theorem wealth ineq} does not necessarily hold. In future research, it would be interesting to explore the connection between the prices of the consumption goods and the risk of the endowment process for different utility functions.

\subsection{Comparison to mean field equilibrium models} \label{sec:MFE}
Mean field equilibrium models have been popularized in the recent literature in operations research, economics, and optimal control (e.g., see \cite{lasry2007mean}, \cite{weintraub2008markov}). 
In a mean field model, the agents' utility functions and the evolution of the agents' states depend on the distribution of the other agents' states. In a mean field equilibrium, each agent maximizes his expected discounted payoff, assuming that the distribution of the other agents' states is fixed. Given the agents' strategy, the distribution of the agents' states is an invariant distribution of the Markov process that governs the dynamics of the agents' states. While the notion of a mean field equilibrium is conceptually similar to the notion of a CSE, we cannot write the dynamic incomplete markets model studied in this paper as a discrete-time mean field model. This is because the market-clearing conditions (see conditions (iii) and (iv) in Definition \ref{Def: CSE}) are not consistent with the definition of a mean field equilibrium as the prices and the interest rate cannot be written as a function of the agents' states distribution (the wealth distribution in our setting).\footnote{Note that other heterogeneous agent macro models can be written as a discrete-time mean field model (see \cite{acemoglu2012} and \cite{LW2018}). In particular, the model in Aiyagari (1994) can be written as a discrete-time mean field model. } Thus, we cannot apply the recent existence, uniqueness and comparative statics results developed for discrete-time mean field equilibrium models (e.g., \cite{acemoglu2012}, and \cite{LW2018}).

\subsection{Ex-ante heterogeneous agents}

In this section we extend the model described in Section 2 to the case of ex-ante heterogeneous agents. We assume that the agents are heterogeneous
in their preferences over consumption bundles as well as in their endowments. Assume that before the process starts, each agent has a type $\theta  \in \Theta $. For simplicity we assume that $\Theta $ is a finite set. Each agent's type is fixed throughout the horizon. An agent with type $\theta  \in \Theta $ has preferences that are represented by a utility function $U (\boldsymbol{x} ,\theta )$ and receives an endowment $\boldsymbol{y} (\theta )$ with probability $e (\boldsymbol{y} (\theta ))$ in each period. Let $\phi $ be the probability mass function over the type space; $\phi  (\theta )$ is the mass of agents whose types are $\theta  \in \Theta $. Adding the argument $\theta  \in \Theta $ to the functions defined in Section 2, we can modify the definitions of Section 2 to include the ex-ante heterogeneity of agents. For example, $g (a ,\boldsymbol{p} ,r ,\theta )$ is the savings policy function of type $\theta $ agents and $\boldsymbol{x}^{ \ast } (a -g (a ,\boldsymbol{p} ,r ,\theta ) ,\boldsymbol{p} ,\theta )$ is the demand function of type $\theta $ agents.

Let $A_{h} =\mathbb{R} \times \Theta $ be an extended state space for the model with ex-ante heterogeneous agents. If an agent's extended state is $a_{h} =(a ,\theta ) \in A_{h}$ then the agent's wealth is $a$ and his type is $\theta $. Let $\lambda _{h}$ be a probability measure over the extended state space, i.e., $\lambda _{h} \in \mathcal{P}(A_{h})$. 

Define the Markov kernel
\begin{equation*}Q _{h}((a ,\theta ) ,D \times E) =\sum _{\boldsymbol{y} \in \mathcal{Y}}e (\boldsymbol{y}(\theta )) 1_{D} ((1 +r) g (a ,\boldsymbol{p} ,r ,\theta ) +\boldsymbol{p} \cdot \boldsymbol{y}(\theta ))1_{E}\text{}\left (\theta \right )
\end{equation*}for any $D \times E \in \mathcal{B}(\mathbb{R}) \times 2^{\Theta }$. The Markov kernel $Q_{h}$ describes the evolution of the extended state. That is, when the agent's wealth is $a$ and his type is $\theta $, the probability that the next period's pair of wealth-type will lie in $D \times E \in \mathcal{B}(\mathbb{R}) \times 2^{\Theta }$ is given by $Q _{h}((a ,\theta ) ,D \times E)$.

Define
\begin{equation*}M \lambda _{h} (D \times E ;\boldsymbol{p} ,r) =\int \sum _{\boldsymbol{y} \in \mathcal{Y}}^{\,}e (\boldsymbol{y}(\theta )) 1_{D} ((1 +r) g (a ,\boldsymbol{p} ,r ,\theta ) +\boldsymbol{p} \cdot \boldsymbol{y}(\theta )) 1_{E}\left (\theta \right )\lambda _{h} (d( a ,\theta ) ;\boldsymbol{p} ,r) ,
\end{equation*}for any $D \times E \in \mathcal{B}(\mathbb{R}) \times 2^{\Theta }$. $M \lambda _{h} \in \mathcal{P} (A_{h})$ describes the next period's wealth-types distribution, given that the current wealth-types distribution is $\lambda _{h} \in \mathcal{P} (A_{h})$ and the prices are $(\boldsymbol{p} ,r)$. A wealth-types distribution $\mu _{h} \in \mathcal{P} (A_{h})$ is called an invariant wealth-types distribution if $\mu _{h} =M \mu _{h}$.

These definitions map the model with ex-ante heterogeneous agents to the model with ex-ante homogeneous agents that we considered in Section 2. We can define a competitive stationary equilibrium as in Definition \ref{Def: CSE}. A competitive stationary equilibrium consists of prices $(\boldsymbol{p} ,r)$, savings policy functions $g$, demand functions $\boldsymbol{x}^{ \ast }$, and a wealth-types distribution $\mu _{h} \in \mathcal{P} (A_{h})$ such that the savings policy function $g$ and the demand function $\boldsymbol{x}^{ \ast }$ are optimal for each type $\theta $, the wealth-types distribution $\mu _{h}$ is invariant, and the prices of goods and the interest rate are market-clearing (see conditions (iii) and (iv) in Definition \ref{Def: CSE}).  

We note that if $U (\boldsymbol{x} ,\theta )$ satisfies Assumption \ref{Assumption 0} for each $\theta $, then Theorem \ref{Theorem exist} holds and there exists a CSE. The proof is similar to the proof of Theorem \ref{Theorem exist} so we omit the details.

\section{Final remarks}
In this paper we study a dynamic incomplete-market Arrow-Debreu economy which combines a Huggett-Bewley economy with the classic static pure-exchange Arrow-Debreu economy. We study a competitive stationary equilibrium where the prices of consumption goods and the interest rate are market clearing and the wealth distribution is invariant. Under mild conditions on the agents' preferences, we prove that the aggregate demand for consumption goods is homogeneous of degree 0, while the aggregate demand for savings is homogeneous of degree 1 (see Proposition \ref{Prop homogenous of }), and we prove the existence of a competitive stationary equilibrium (CSE) (see Theorem \ref{Theorem exist}). Under a CES utility function, we discuss how a riskier endowments process affects wealth inequality, the prices of goods and the interest rate. We prove that if the agents' preferences can be represented by a CES utility function with an elasticity of substitution that is equal to or higher than one, then there exists a unique CSE (see Theorem \ref{Theorem uniq}), and that a riskier endowments process increases the partial equilibrium wealth inequality, decreases the equilibrium interest rate, and does not change the equilibrium prices of goods (see Theorem \ref{Theorem wealth ineq}). It remains an open question whether Theorem \ref{Theorem uniq} and Theorem \ref{Theorem wealth ineq} can be extended to different utility functions. Many other open questions remain concerning the CSE. For example, studying the stability of a CSE awaits future research. 

\appendix 

\section{Appendix}

\subsection{Homogeneity of the excess demand function}

In this section we prove Proposition \ref{Prop homogenous of }.

Recall that for a set $K$ we denote by $\mathcal{P}(K)$ the set of all probability measures defined on $K$. We endow $\mathcal{P}(\mathbb{R})$ with the topology of weak convergence. We
say that $\lambda _{n} \in \mathcal{P} (\mathbb{R})$ converges weakly to $\lambda  \in \mathcal{P} (\mathbb{R})$ if for all bounded and continuous functions $f :\mathbb{R} \rightarrow \mathbb{R}$ we have
\begin{equation*}\underset{n \rightarrow \infty }{\lim }\int _{\mathbb{R}}f (a) \lambda _{n} (d a) =\int _{\mathbb{R}}f (a) \lambda  (d a)\text{.}
\end{equation*}

\noindent \textbf{Proposition \ref{Prop homogenous of }.}
\emph{Fix $\boldsymbol{p} \gg 0$, $0 < r < 1$, and $\theta  >0$. Then \newline (i) $\theta g(a ,\boldsymbol{p} ,r) =g(\theta a ,\theta \boldsymbol{p} ,r)$ and $\boldsymbol{x}^{ \ast }(a -g(a ,\boldsymbol{p} ,r) ,\boldsymbol{p}) =\boldsymbol{x}^{ \ast }(\theta a -g(\theta a ,\theta \boldsymbol{p} ,r) ,\theta \boldsymbol{p})$ for all $a$. 
\newline (ii) $\zeta _{i}(\theta \boldsymbol{p} ,r) =\zeta _{i}(\boldsymbol{p} ,r)$ for $1 \leq i \leq n$ and $\zeta _{n +1}(\theta \boldsymbol{p} ,r) =\theta \zeta _{n +1}(\boldsymbol{p} ,r)$. \newline Thus, if $(\boldsymbol{p} ,r)$ are equilibrium prices then $(\theta \boldsymbol{p} ,r)$ are also equilibrium prices. }

\begin{proof}
(i) Recall that a function $f(a ,\boldsymbol{p} ,r)$ is homogeneous of degree $l \geq 0$ in $(a ,\boldsymbol{p})$ if $f(\theta a ,\theta \boldsymbol{p} ,r) =\theta ^{l}f(a ,\boldsymbol{p} ,r)$ for all $\theta  >0$. We now show that $g$ is homogeneous of degree $1$ in $(a ,\boldsymbol{p})$. 

Fix $\boldsymbol{p} \gg 0$, $\theta  >0$, $a \in \mathbb{R}$, and $0 < r < 1$. 

Assume that $f(a,\boldsymbol{p} ,r)$ is homogeneous of degree $0$ in $(a ,\boldsymbol{p})$. We have 
\begin{align*}Tf(a ,\boldsymbol{p} ,r) &  =\max _{b \in C(a ,\boldsymbol{p})}\max _{\boldsymbol{x} \in X(a -b ,\boldsymbol{p})}U(\boldsymbol{x}) +\beta \sum _{\boldsymbol{y} \in \mathcal{Y}}e(\boldsymbol{y})f((1 +r)b +\boldsymbol{p} \cdot \boldsymbol{y} ,\boldsymbol{p} ,r) \\
 &  =\max _{\theta b \in C(\theta a ,\theta \boldsymbol{p})}\max _{\boldsymbol{x} \in X(\theta a -\theta b ,\theta \boldsymbol{p})}U(\boldsymbol{x}) +\beta \sum \limits _{\boldsymbol{y} \in \mathcal{Y}}^{\,}e(\boldsymbol{y})f((1 +r)\theta b +\theta \boldsymbol{p} \cdot \boldsymbol{y} ,\theta \boldsymbol{p} ,r) \\
 &  =\max _{z \in C(\theta a ,\theta \boldsymbol{p})}\max _{\boldsymbol{x} \in X(\theta a -z,\theta \boldsymbol{p})}U(\boldsymbol{x}) +\beta \sum \limits _{\boldsymbol{y} \in \mathcal{Y}}^{\,}e(\boldsymbol{y})f((1 +r)z +\theta \boldsymbol{p} \cdot \boldsymbol{y} ,\theta \boldsymbol{p} ,r) \\
 &  =Tf(\theta a ,\theta \boldsymbol{p} ,r).\end{align*}Thus, $Tf(a ,\boldsymbol{p} ,r)$ is homogeneous of degree $0$ in $(a ,\boldsymbol{p})$. The first and fourth equalities follow from the definition of $Tf$. The second equality follows from the facts that $X(a-b ,\boldsymbol{p}) =X(\theta a -\theta b ,\theta \boldsymbol{p})$, $b \in C(a ,\boldsymbol{p})$ if and only if $\theta b \in C(\theta a ,\theta \boldsymbol{p})$,\protect\footnote{ 
Recall that $C(a,\boldsymbol{p}) = [-\frac{ \min _{\boldsymbol{y} \in \mathcal{Y}}\boldsymbol{p} \cdot \boldsymbol{y}}{r},\min \left  \{a ,\sum _{i =1}^{n} \frac{ p_{i}\overline{b} } {(1-r)^{2} }\right \}]$.
} and $f((1 +r)\theta b +\theta \boldsymbol{p} \cdot \boldsymbol{y} ,\theta \boldsymbol{p} ,r) =f((1 +r)b +\boldsymbol{p} \cdot \boldsymbol{y} ,\boldsymbol{p} ,r)$ for $\theta  >0$.

We conclude that for all $n =1 ,2 ,3 \ldots$, $T^{n} f$ is homogeneous of degree $0$. From standard dynamic programming arguments,
$T^{n} f$ converges to $V$ uniformly. Since the set of functions that are homogeneous of degree $0$ is closed under uniform convergence, $V$ is homogeneous of degree zero. 

Let \begin{equation*}h(a,b ,\boldsymbol{p} ,r ,f) : =v(a -b ,\boldsymbol{p}) +\beta \sum \limits _{\boldsymbol{y} \in \mathcal{Y}}^{\,}e(\boldsymbol{y})f((1 +r)b +\boldsymbol{p} \cdot \boldsymbol{y} ,\boldsymbol{p} ,r) ,
\end{equation*}
where $v(a -b ,\boldsymbol{p}) =\max _{\boldsymbol{x} \in X(a -b ,\boldsymbol{p})}U(\boldsymbol{x})$. 

Note that $v$ is homogeneous of degree $0$ in $(a ,b ,\boldsymbol{p})$ since $X(a-b ,\boldsymbol{p}) =X(\theta a -\theta b ,\theta \boldsymbol{p})$. Thus, $h(a ,b ,\boldsymbol{p} ,r ,f)$ is homogeneous of degree $0$ in $(a ,b ,\boldsymbol{p})$ whenever $f$ is  homogeneous of degree $0$ in $(a ,\boldsymbol{p})$. Since $V$ is homogeneous of degree zero in $(a ,\boldsymbol{p})$ we conclude that  $h(a ,b ,\boldsymbol{p} ,r ,V)$ is homogeneous of degree $0$ in $(a ,b ,\boldsymbol{p})$. We have    
\begin{align*}h(\theta a ,\theta g(a ,\boldsymbol{p} ,r) ,\theta \boldsymbol{p} ,r ,V) &  =h(a ,g(a ,\boldsymbol{p} ,r) ,\boldsymbol{p} ,r ,V) \\
 &  =\max _{b \in C(a ,\boldsymbol{p})}h(a ,b ,\boldsymbol{p} ,r ,V) \\
 &  =v(a -\boldsymbol{p} ,r) \\
 &  =V(\theta a ,\theta \boldsymbol{p} ,r) \\
 &  =h(\theta a ,g(\theta a ,\theta \boldsymbol{p} ,r) ,\theta \boldsymbol{p} ,r ,V) .\end{align*}The single-valuedness of the savings policy function $g$ (see Lemma \ref{Lemma 1} below) implies that $g(\theta a ,\theta \boldsymbol{p} ,r) =\theta g(a ,\boldsymbol{p} ,r)$. We conclude that $g$ is homogeneous of degree $1$ in $(a ,\boldsymbol{p})$. 

The following equalities show that the demand function $\boldsymbol{x}^{ \ast }$ is homogeneous of degree $0$ in $(a ,\boldsymbol{p})$:   
\begin{align*}\boldsymbol{x}^{ \ast }(a -g(a ,\boldsymbol{p} ,r) ,\boldsymbol{p}) &  =\ensuremath{\operatorname*{argmax}}_{\boldsymbol{x} \in X(a -g(a ,\boldsymbol{p} ,r) ,\boldsymbol{p})}U(\boldsymbol{x}) \\
 &  =\ensuremath{\operatorname*{argmax}}_{\boldsymbol{x} \in X(\theta a -\theta g(a ,\boldsymbol{p} ,r) ,\theta \boldsymbol{p})}U(\boldsymbol{x}) \\
 &  =\ensuremath{\operatorname*{argmax}}_{\boldsymbol{x} \in X(\theta a - g(\theta a ,\theta \boldsymbol{p} ,r) ,\theta \boldsymbol{p})}U(\boldsymbol{x}) \\
 &  =\boldsymbol{x}^{ \ast }\left (\theta a -g(\theta a ,\theta \boldsymbol{p} ,r) ,\theta \boldsymbol{p}\right ) .\end{align*}

(ii) Let $0<r<1$ be fixed. Then $A(\boldsymbol{p},r)$ is compact for all $\boldsymbol{p} \gg 0$. We start with the following definitions: 

We say that a function $f(a ,\boldsymbol{p})$ is bounded in $a$ if for every $\boldsymbol{p} \gg 0$, there exists an  $M>0$ such that   $\vert f(a,\boldsymbol{p}) | \leq M$ for all $a \in A(\boldsymbol{p},r)$.  

We say that a probability measure $\lambda ( \cdot ;\boldsymbol{p} ,r) \in \mathcal{P}(A(\boldsymbol{p} ,r))$ is homogeneous of degree $l \geq 0$ in $\boldsymbol{p}$ if for every continuous function $f(a ,\boldsymbol{p})$ that is homogeneous of degree $l$ in $(a ,\boldsymbol{p})$, and bounded in $a$, and all $\theta >0$ we have \begin{equation}\int f(a ,\theta \boldsymbol{p})\lambda (da;\theta \boldsymbol{p} ,r) =\theta ^{l}\int f(a ,\boldsymbol{p})\lambda (da;\boldsymbol{p} ,r). \label{homog}
\end{equation} 
We now show that the invariant wealth distribution $\mu $ is homogeneous of degree $l \geq 0$ in $\boldsymbol{p}$. 

Assume that the probability measure $\lambda ( \cdot ;\boldsymbol{p} ,r) \in \mathcal{P}(A(\boldsymbol{p} ,r))$ is homogeneous of degree $l$ in $\boldsymbol{p}$. Let $f(a ,\boldsymbol{p})$ be a continuous function that is homogeneous of degree $l$ in $(a ,\boldsymbol{p})$ and bounded in $a$, and let $\theta  >0$, $\boldsymbol{p} \gg 0$, and $0<r<1$.  We have
\begin{align*}\theta ^{l}\int f(a ,\boldsymbol{p})M\lambda (da;\boldsymbol{p} ,r) &  =\theta ^{l}\int \sum \limits _{\boldsymbol{y} \in \mathcal{Y}}^{\,}e(\boldsymbol{y})f((1 +r)g(a ,\boldsymbol{p} ,r) +\boldsymbol{p} \cdot \boldsymbol{y} ,\boldsymbol{p})\lambda (da;\boldsymbol{p} ,r) \\
 &  =\int \sum \limits _{\boldsymbol{y} \in \mathcal{Y}}^{\,}e(\boldsymbol{y})f((1 +r)g(a ,\theta \boldsymbol{p} ,r) +\theta \boldsymbol{p} \cdot \boldsymbol{y} ,\theta \boldsymbol{p})\lambda (da;\theta \boldsymbol{p} ,r) \\
 &  =\int f(a ,\theta \boldsymbol{p})M\lambda (da;\theta \boldsymbol{p} ,r) .\end{align*} 
 The first and last equalities follow from Equation (\ref{eq:stationary}) (see Lemma \ref{Unique stationary is contin} below). The second equality follows from the facts that $\widetilde{f}(a ,\boldsymbol{p}) : =$$\sum \limits _{\boldsymbol{y} \in \mathcal{Y}}^{\,}e(\boldsymbol{y})f((1 +r)g(a ,\boldsymbol{p} ,r) +\boldsymbol{p} \cdot \boldsymbol{y} ,\boldsymbol{p})$ is homogeneous of degree $l$ in $(a ,\boldsymbol{p})$ and  $\lambda $ is homogeneous of degree $l$ in $\boldsymbol{p}$. We conclude that for all $k$, $M^{k}\lambda $ is homogeneous of degree $l$ in $\boldsymbol{p}$. 


From Lemma \ref{unique stationary dist} (see below), $M^{k}\lambda $ converges weakly to $\mu$ for all $(\boldsymbol{p} ,r)$ such that $\boldsymbol{p} \gg 0$, and $0 < r <1$. For every continuous function $f(a ,\boldsymbol{p})$ that is homogeneous of degree $l$ in $(a ,\boldsymbol{p})$ and bounded in $a$, we have 
\begin{align*}\int f(a ,\theta \boldsymbol{p})\mu (da;\theta \boldsymbol{p} ,r) &  =\lim _{k \rightarrow \infty }\int f(a,\theta \boldsymbol{p})M^{k}\lambda (da;\theta \boldsymbol{p} ,r) \\
 &  =\lim _{k \rightarrow \infty }\theta ^{l}\int f(a ,\boldsymbol{p})M^{k}\lambda (da;\boldsymbol{p} ,r) \\
 &  =\theta ^{l}\int f(a ,\boldsymbol{p})\mu (da;\boldsymbol{p} ,r) .\end{align*}
 Thus, $\mu $ is homogeneous of degree $l$ in $\boldsymbol{p}$.  

The facts that $g(a,\boldsymbol{p} ,r)$ is a continuous function on $\mathbb{R} \times \mathbb{R}_{+ +}^{n} \times (0,1)$ (see Lemma \ref{Lemma 1}) and that $A(\boldsymbol{p},r)$ is compact for all $\boldsymbol{p} \gg 0$, $r \in (0,1)$ imply that $g$ is bounded in $a$. We established in part (i) that $g$ is homogeneous of degree $1$ in $(a ,\boldsymbol{p})$. Thus, using the fact that $\mu $ is homogeneous of degree $1$ in $\boldsymbol{p}$ yields 
\begin{equation*}\zeta _{n +1}(\theta \boldsymbol{p} ,r) =\int g(a ,\theta \boldsymbol{p} ,r)\mu (da;\theta \boldsymbol{p} ,r) =\theta \int  g(a ,\boldsymbol{p} ,r)\mu (da;\boldsymbol{p} ,r) =\theta \zeta _{n +1}(\boldsymbol{p} ,r).
\end{equation*} 
Similarly, in part (i) we established that  for all $1 \leq i \leq n$ the demand  function $x_{i}^{ \ast }(a -g(a ,\boldsymbol{p} ,r) ,\boldsymbol{p})$ is homogeneous of degree $0$ in $(a ,\boldsymbol{p})$. The demand function is continuous (see Lemma \ref{Lemma 1} below) and  bounded in $a$. Hence, using the fact that $\mu $ is homogeneous of degree $0$ in $\boldsymbol{p}$ yields 
\begin{align*}\zeta _{i}(\theta \boldsymbol{p} ,r) &  =\int x_{i}^{ \ast }(a -g(a ,\theta \boldsymbol{p} ,r) ,\theta \boldsymbol{p})\mu (da;\theta \boldsymbol{p} ,r) -\sum \limits _{y_{i} \in \mathcal{Y}_{i}}^{\,}e_{i}(y_{i})y_{i} \\
 &  =\int x_{i}^{ \ast }(a -g(a ,\boldsymbol{p} ,r) ,\boldsymbol{p})\mu (da;\boldsymbol{p} ,r) -\sum \limits _{y_{i} \in \mathcal{Y}_{i}}^{\,}e_{i}(y_{i})y_{i} =\zeta _{i}(\boldsymbol{p} ,r) .\end{align*} Thus, if $(\boldsymbol{p} ,r)$ are equilibrium prices, i.e., $\zeta (\boldsymbol{p} ,r) =0$, then $\zeta (\theta \boldsymbol{p} ,r) =0$; and so $(\theta \boldsymbol{p} ,r)$ are also equilibrium prices. 
\end{proof}

\subsection{The existence of a competitive stationary equilibrium}
In this section we prove the existence of a competitive stationary equilibrium. 

\noindent \textbf{Theorem \ref{Theorem exist}.}
\emph{Suppose that Assumption \ref{Assumption 0} holds. Then, there exists a competitive stationary equilibrium. }

Recall that the sets $\Lambda$ and $\boldsymbol{P}$ are given by  $\Lambda  =\{(\boldsymbol{p} ,r) \in \mathbb{R}_{ +}^{n} \times \mathbb{R}_{ +} :\sum _{i =1}^{n}p_{i} +r = 1\}$ and
\begin{equation*}\boldsymbol{P} =\{(\boldsymbol{p} ,r) \in \Lambda  :\boldsymbol{p} \gg 0 ,r >0\} . 
\end{equation*}
The excess demand function $\zeta  :\boldsymbol{P} \rightarrow \mathbb{R}^{n +1}$ is given by
\begin{equation*}
\zeta (\boldsymbol{p} ,r) =(\zeta _{1}(\boldsymbol{p} ,r) , \ldots ,\zeta _{n}(\boldsymbol{p} ,r) ,\zeta _{n +1}(\boldsymbol{p} ,r))\end{equation*} 
where for $i =1 , \ldots ,n$,
\begin{equation*}\zeta _{i}(\boldsymbol{p} ,r) =\int _{A(\boldsymbol{p},r)}x_{i}^{ \ast }(a -g(a ,\boldsymbol{p} ,r) ,\boldsymbol{p})\mu (da;\boldsymbol{p} ,r) -\sum \limits _{y_{i} \in \mathcal{Y}_{i}}^{\,}e_{i}(y_{i})y_{i}
\end{equation*}
is the excess demand for good $i$, and \begin{equation*}
 \zeta _{n +1}(\boldsymbol{p} ,r) = -\int _{A(\boldsymbol{p} ,r)}g(a ,\boldsymbol{p} ,r)\mu (da;\boldsymbol{p} ,r) \end{equation*}
 is the excess demand for savings. 
 Note that if $\zeta (\boldsymbol{p} ,r) =\boldsymbol{0}$ then $(\boldsymbol{p},r)$ are equilibrium prices, $\mu ( \cdot ;\boldsymbol{p} ,r)$ is the equilibrium wealth distribution, $\boldsymbol{x}^{ \ast }(a -g(a ,\boldsymbol{p} ,r) ,\boldsymbol{p})$ is the equilibrium demand function, and $g(a ,\boldsymbol{p} ,r)$ is the equilibrium savings policy function.

In the next Proposition we show that the excess demand function is continuous, satisfies Walras' law and some boundary and boundness conditions. For $x \in \mathbb{R}^{n}$ we write $\left \Vert x\right \Vert _{1} =\sum \limits _{j =1}^{n}\vert x_{i}\vert $.

\begin{proposition}
\label{Prop: existence} The excess demand function $\zeta  :\boldsymbol{P} \rightarrow \mathbb{R}^{n +1}$ satisfies the following properties. 

(i) The function $\zeta $ is continuous. 

(ii) The function $\zeta $ satisfies Walras' law, i.e., $(\boldsymbol{p} ,r) \cdot \zeta (\boldsymbol{p} ,r) =0$ for all $(\boldsymbol{p} ,r) \in \boldsymbol{P}$.

(iii) If  $(\boldsymbol{p}_{q} ,r_{q}) \rightarrow (\boldsymbol{p} ,r) =(p_{1}, \ldots ,p_{n} ,r) \in \Lambda \backslash \boldsymbol{P}$ with $\{\boldsymbol{p}_{q} ,r_{q}\} \subseteq \boldsymbol{P}$ and $p_{k}>0$ for some $1\leq k \leq n$, then $\lim _{q \rightarrow \infty }\left \Vert \zeta (\boldsymbol{p}_{q} ,r_{q})\right \Vert _{1} =\infty $. 

(iv) If $\{\boldsymbol{p}_{q} ,r_{q}\} \subseteq \boldsymbol{P}$, $(\boldsymbol{p}_{q} ,r_{q}) \rightarrow (\boldsymbol{p} ,r) =(p_{1} ,\ldots  ,p_{n} ,r)$ and $p_{k} >0$, then the sequence $\{\zeta _{k} (\boldsymbol{p}_{q} ,r_{q})\}$ of the $k^{t h}$ components of $\{\zeta  (\boldsymbol{p}_{q} ,r_{q})\}$ is bounded. Similarly, $r \in (0,1)$ implies that the sequence $\{\zeta _{n +1} (\boldsymbol{p}_{q} ,r_{q})\}$ is bounded. 

(v)  There exists $\xi  >0$ such that $\zeta _{i} (\boldsymbol{p} ,r) \geq  -\xi $ for all $1 \leq i \leq n$ and all $(\boldsymbol{p} ,r) \in \boldsymbol{P}$, and $\zeta _{n+1} (\boldsymbol{p} ,r) \geq  -\xi $ for all $(\boldsymbol{p} ,r) \in \boldsymbol{P}$ such that $r \leq \delta < 1$ for some $\delta \in (0,1)$. 

\end{proposition}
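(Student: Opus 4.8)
The plan is to verify the five properties in order, leaning on the structural results already established—single-valuedness and continuity of $g$ and $\boldsymbol{x}^{\ast}$ (Lemma \ref{Lemma 1}), existence and uniqueness of the invariant distribution $\mu(\cdot;\boldsymbol{p},r)$ together with weak convergence $M^{k}\lambda \Rightarrow \mu$ (Lemmas \ref{unique stationary dist}–\ref{Unique stationary is contin}), and the homogeneity facts from Proposition \ref{Prop homogenous of }. For part (i), continuity of $\zeta$, I would first argue that $(\boldsymbol{p},r)\mapsto \mu(\cdot;\boldsymbol{p},r)$ is continuous in the weak topology. The natural route is: the Markov operator $M$ depends continuously on $(\boldsymbol{p},r)$ through the continuous map $g$ and the continuous affine map $a\mapsto (1+r)g(a,\boldsymbol{p},r)+\boldsymbol{p}\cdot\boldsymbol{y}$; combined with a uniform-ergodicity / unique-invariant-measure argument, a standard continuity-of-fixed-point lemma gives weak continuity of $\mu$. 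A subtlety is that the state space $A(\boldsymbol{p},r)$ itself moves with $(\boldsymbol{p},r)$, so I would embed all measures in $\mathcal{P}(\mathbb{R})$ (or a fixed large compact interval, using the boundedness of savings locally uniformly in $(\boldsymbol{p},r)$) and test against bounded continuous functions on $\mathbb{R}$. Then $\zeta_i$ and $\zeta_{n+1}$ are integrals of the continuous, locally-uniformly-bounded integrands $x_i^{\ast}(a-g(a,\boldsymbol{p},r),\boldsymbol{p})$ and $g(a,\boldsymbol{p},r)$ against a weakly continuous family of measures, which yields continuity of $\zeta$.

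For part (ii), Walras' law, I would integrate the agent's budget identity. In the stationary equilibrium candidate, an agent with wealth $a$ spends $a-g(a,\boldsymbol{p},r)=\boldsymbol{p}\cdot\boldsymbol{x}^{\ast}(a-g(a,\boldsymbol{p},r),\boldsymbol{p})$ (the budget binds because $U$ is strictly increasing), and next period's wealth is $(1+r)g(a,\boldsymbol{p},r)+\boldsymbol{p}\cdot\boldsymbol{y}$. Integrating against $\mu$ and using invariance $\mu=M\mu$ to equate $\int a\,\mu(da)$ with $\int\big((1+r)g(a,\boldsymbol{p},r)+\boldsymbol{p}\cdot\boldsymbol{y}\big)e(\boldsymbol{y})\,\mu(da)$ gives, after cancellation, $\sum_i p_i\big(\int x_i^{\ast}\,\mu(da)-\sum_{y_i}e_i(y_i)y_i\big) = r\int g(a,\boldsymbol{p},r)\,\mu(da)$, which is exactly $\boldsymbol{p}\cdot(\zeta_1,\dots,\zeta_n) + r\,\zeta_{n+1}=0$. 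Parts (iv) and (v) are the boundedness/one-sided-bound statements: the supply side $\sum_{y_i}e_i(y_i)y_i$ is a fixed finite constant, and the demand $\int x_i^{\ast}\,\mu(da)$ is nonnegative, giving $\zeta_i \ge -\xi$ with $\xi=\max_i\sum_{y_i}e_i(y_i)y_i$; similarly $g(a,\boldsymbol{p},r)\ge \underline b = -\min_{\boldsymbol{y}}\boldsymbol{p}\cdot\boldsymbol{y}/r$, which is bounded below uniformly once $r\le\delta<1$ and $\boldsymbol{p}$ ranges over the simplex, giving $\zeta_{n+1}\ge -\xi$. For the two-sided boundedness in (iv): if $p_k>0$ in the limit, then along the sequence $x_k^{\ast}$ stays bounded because total expenditure $a-g(a,\boldsymbol{p},r)$ is bounded on the (locally uniformly) compact wealth range and $x_k^{\ast}\le (a-g)/p_k$; and $r\in(0,1)$ keeps $g$ bounded above (upper bound on savings, which stays finite as $\sum p_i$ is bounded below away from $0$ when $r$ stays away from $1$) and below (natural borrowing limit with $r$ bounded away from $0$).

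The main obstacle is part (iii), the boundary blow-up condition. The claim is that when some price $p_j\to 0$ while at least one good price $p_k$ stays positive, the $\ell^1$ norm of excess demand explodes. The mechanism I would exploit is the Inada condition $\partial U(0)/\partial x_j=\infty$ for some coordinate $j$ together with strict monotonicity: if $p_j\to 0$, then for any agent with a strictly positive consumption budget, optimal demand $x_j^{\ast}$ for the cheap good diverges—intuitively $x_j^{\ast}\gtrsim (a-g)/p_j$ along the expenditure-exhausting budget. The delicate points are (a) ensuring the aggregate consumption budget $\int (a-g(a,\boldsymbol{p},r))\,\mu(da)$ does not itself collapse to $0$ along the sequence (here one uses that aggregate savings are bounded—e.g. from Walras' law or directly from the borrowing limit and the upper bound—so aggregate expenditure stays bounded below by the bounded-below aggregate endowment value, which is $\gg 0$ since every $\boldsymbol{y}\gg 0$ and not all prices vanish), and (b) handling the case where the coordinate with the Inada property is not the one whose price vanishes, in which case one instead argues via gross demand for good $j$: the relative price $p_j/p_k\to 0$ forces substitution into good $j$ until its marginal utility, which is finite at any bounded positive level but the agent would rather buy more, drives $x_j^{\ast}\to\infty$; making this quantitative requires the first-order condition $\partial U(\boldsymbol{x}^{\ast})/\partial x_j = \mu_{\text{budget}}\,p_j$ and a lower bound on total expenditure. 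I would also separately note that if instead $r\to 0$ (the $\Lambda\setminus\boldsymbol{P}$ boundary with $r=0$), the borrowing limit $\underline b=-\min_{\boldsymbol{y}}\boldsymbol{p}\cdot\boldsymbol{y}/r\to-\infty$ and the upper bound on savings (which behaves like $1/\sum_i p_i$, staying bounded since $\sum_i p_i\to 1$) combine with the fixed-point/ergodicity argument to force $\int g\,\mu(da)$ to diverge, giving $|\zeta_{n+1}|\to\infty$; this is the analogue of Aiyagari's observation cited in Remark \ref{Remark Borrowing}. Assembling these cases according to which boundary face $(\boldsymbol{p},r)$ approaches completes (iii).
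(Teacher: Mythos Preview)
Your plan matches the paper's approach closely: parts (i), (ii), and (iv) are handled exactly as in Lemmas~\ref{Lemma ED is contin}, \ref{Lemma ED Walras}, and \ref{Lemma boundary}(ii), and your case split in (iii) between $r\to 0$ (Aiyagari's borrowing-limit argument) and $p_j\to 0$ with some $p_k>0$ is the same as the paper's Lemma~\ref{Lemma boundary}(i). One stylistic difference worth noting in (iii): for the case $p_j\to 0$, the paper avoids the Inada/first-order-condition route you sketch and instead argues by contradiction using only strict monotonicity of $U$---if the demand sequence had a bounded subsequence, its limit would be optimal in a budget set with $p_j=0$, but then one could add a unit of good $j$ at zero cost and strictly improve, a contradiction. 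This is cleaner and sidesteps your concern (b) about whether the Inada coordinate coincides with the vanishing-price coordinate. The paper does, however, need your point (a), and handles it by showing the set $D_\epsilon=\{a:a-g(a,\boldsymbol{p}_q,r_q)\ge\epsilon\}$ has $\mu$-mass uniformly bounded below along the sequence.

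There is one genuine slip in your treatment of (v) for the savings component. You write that $g\ge\underline b=-\min_{\boldsymbol{y}}\boldsymbol{p}\cdot\boldsymbol{y}/r$ and that this gives $\zeta_{n+1}\ge -\xi$; but $\zeta_{n+1}=-\int g\,\mu(da)$, so a \emph{lower} bound on $g$ yields an \emph{upper} bound on $\zeta_{n+1}$, not a lower bound. (Moreover, $\underline b$ is controlled by bounding $r$ away from $0$, not by $r\le\delta$.) The correct argument, as in Lemma~\ref{Lemma Bdd from below}, uses the \emph{upper} bound on savings: $g(a,\boldsymbol{p},r)\le \sum_i p_i\bar b/(1-r)^2$, whence $\zeta_{n+1}\ge -\sum_i p_i\bar b/(1-r)^2\ge -\bar b/(1-\delta)^2$ when $r\le\delta<1$ and $(\boldsymbol{p},r)\in\boldsymbol{P}$. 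You already invoke this upper bound correctly in (iv), so the fix is just to swap in the right inequality here.
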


\begin{remark} \label{Remark:Boundary}
Note that the standard boundary conditions that imply the existence of a competitive equilibrium in the static Arrow-Debreu model (see \cite{debreu1982existence} and \cite{hildenbrand2014introduction}) are not satisfied in our setting. This is because conditions (iii) of Proposition \ref{Prop: existence} holds only for sequences $\{\boldsymbol{p}_{q} ,r_{q}\} $ such that $(\boldsymbol{p}_{q} ,r_{q}) \rightarrow (\boldsymbol{p} ,r) =(p_{1}, \ldots ,p_{n} ,r) \in \Lambda \backslash \boldsymbol{P}$ with $p_{k}>0$ for some $1\leq k \leq n$ and condition (v) does not imply that the excess demand is bounded over  $\boldsymbol{P}$ as required in \cite{debreu1982existence}). In proving the existence of a CSE, we are able to overcome this difficulty by using the properties of the savings policy function. 
\end{remark}

The idea of the proof of Theorem \ref{Theorem exist} is to show that the conditions of Proposition \ref{Prop: existence} hold and then to prove that these conditions together with the properties of savings policy function imply the existence of a CSE. 
Lemmas \ref{Lemma 1} and \ref{unique stationary dist} show that the excess demand function $\zeta (\boldsymbol{p} ,r)$ is a well-defined function. In Lemmas \ref{Unique stationary is contin} and \ref{Lemma ED is contin} we prove that the excess demand function is continuous. In Lemma \ref{Lemma ED Walras} we prove that the excess demand function satisfies Walras' law. In Lemma \ref{Lemma Bdd from below} and Lemma \ref{Lemma boundary} we prove that the excess demand function satisfies the boundness and boundary conditions (conditions (iii), (iv) and (v) of Proposition \ref{Prop: existence}).

\begin{lemma}
\label{Lemma 1}The savings policy function $g(a ,\boldsymbol{p} ,r)$ is single-valued and continuous in $(a ,\boldsymbol{p} ,r)$ on $\mathbb{R} \times \mathbb{R}^{n}_{++} \times (0,1)$. The value function $V(a ,\boldsymbol{p} ,r)$ is continuous in $(a ,\boldsymbol{p} ,r)$ on $\mathbb{R} \times \mathbb{R}^{n}_{++} \times(0,1)$, increasing in $a$, and strictly concave in $a$.
\end{lemma}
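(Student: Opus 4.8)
The plan is to cast the agent's problem, for each fixed $(\boldsymbol{p},r)$ with $\boldsymbol{p}\gg 0$ and $r\in(0,1)$, as a one-dimensional dynamic program in the single state variable $a$, using the indirect (reduced-form) utility $v(c,\boldsymbol{p})=\max_{\boldsymbol{x}\in X(c,\boldsymbol{p})}U(\boldsymbol{x})$, and then to invoke the classical theory of the income-fluctuation problem. First I would record the basic properties of $v$: by the Theorem of the Maximum $v$ is continuous in $(c,\boldsymbol{p})$; strict monotonicity and strict concavity of $U$, together with the linearity of the budget set in $c$, give that $c\mapsto v(c,\boldsymbol{p})$ is strictly increasing and strictly concave; and Assumption \ref{Assumption 0}(ii) supplies the Inada conditions $\lim_{c\to\infty}v'(c,\boldsymbol{p})=0$, $\lim_{c\to 0}v'(c,\boldsymbol{p})=\infty$ (here $v=u_{\boldsymbol{p}}$ in the notation of the Assumption). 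The constraint correspondence $a\mapsto C(a,\boldsymbol{p})$ is nonempty, compact-valued, and continuous (it is an interval with continuous endpoints), and $A(\boldsymbol{p},r)$ is forward-invariant and compact, so $V$ is the unique bounded fixed point of $T$ as already noted in the text.

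Next I would prove the stated properties of $V$ by the standard monotone/concave iteration: the set of bounded functions that are increasing in $a$ is closed under $T$ (because the transition $(1+r)b+\boldsymbol{p}\cdot\boldsymbol{y}$ is increasing in $b$, the feasible set $C(a,\boldsymbol{p})$ expands in $a$, and $v(a-b,\boldsymbol{p})$ is increasing in $a$), and it is closed under uniform limits, so $V$ is increasing in $a$. For strict concavity, I would use the now-classical argument (e.g.\ following the concavity-preservation lemma for dynamic programs): the set of bounded functions that are concave in $a$ is $T$-invariant and closed under uniform convergence, so $V$ is concave; then one bootstraps to \emph{strict} concavity using strict concavity of $v$ in $c$ together with the fact that the optimal $b$ does not jump discontinuously — more precisely, for $a_0\neq a_1$ and $a_\lambda=\lambda a_0+(1-\lambda)a_1$, choosing $b_\lambda=\lambda b_0+(1-\lambda)b_1$ with $b_0,b_1$ optimal at $a_0,a_1$ is feasible at $a_\lambda$, and either the consumption components $a_j-b_j$ differ (so strict concavity of $v$ gives the strict inequality) or they coincide, in which case $b_0\neq b_1$ and strict concavity is inherited from the continuation value after finitely many iterations. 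Continuity of $V$ in $(a,\boldsymbol{p},r)$ follows from a Theorem-of-the-Maximum argument applied to $T$ together with uniform convergence $T^n\mathbf{0}\to V$, noting that $C(a,\boldsymbol{p})$, $v$, and the transition map are jointly continuous in all arguments on $\mathbb{R}\times\mathbb{R}^n_{++}\times(0,1)$, and that the bound $M$ in $\|T^n\mathbf 0-V\|\le M\beta^n$ can be taken locally uniform in $(\boldsymbol{p},r)$.

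Finally, for single-valuedness and continuity of $g$: strict concavity of $a\mapsto V(a,\boldsymbol{p},r)$ (hence of $b\mapsto\sum_{\boldsymbol{y}}e(\boldsymbol{y})V((1+r)b+\boldsymbol{p}\cdot\boldsymbol{y},\boldsymbol{p},r)$) and strict concavity of $b\mapsto v(a-b,\boldsymbol{p})$ make the objective in \eqref{eq:savings policy} strictly concave in $b$ on the convex feasible set $C(a,\boldsymbol{p})$, so the maximizer is unique; thus $g$ is single-valued. Continuity of $g$ in $(a,\boldsymbol{p},r)$ then follows from Berge's maximum theorem, since the objective is jointly continuous and $C(a,\boldsymbol{p})$ is a continuous, compact-valued correspondence, and an upper-hemicontinuous single-valued correspondence is continuous. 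The main obstacle I anticipate is the upgrade from concavity to \emph{strict} concavity of $V$: the naive one-step argument fails when the optimal savings happen to coincide at two different wealth levels while consumption differs only through the continuation, so one has to run the argument through several iterations of $T$ (or, alternatively, exploit the Inada condition to rule out corner behavior and differentiate), and care is needed that the borrowing-constraint endpoint of $C(a,\boldsymbol{p})$, which is independent of $a$, does not destroy the strict inequality on the region where the constraint binds — there consumption equals $a-\underline{b}$, which is strictly increasing in $a$, so strict concavity of $v$ saves the day. I would also flag that the Inada conditions are what keep the optimal consumption interior and bounded away from $0$, which is needed for the maximum-theorem arguments to give continuity up to the boundary of the state space.
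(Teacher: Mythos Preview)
Your proposal is correct and follows essentially the same route as the paper: reduce the problem to a one-dimensional income-fluctuation problem via the indirect utility $v(c,\boldsymbol{p})=\max_{\boldsymbol{x}\in X(c,\boldsymbol{p})}U(\boldsymbol{x})$, establish that $v$ is continuous, strictly increasing, and strictly concave in $c$, and then invoke the standard dynamic-programming machinery (Berge's maximum theorem plus the concavity/monotonicity preservation results of \cite{stokey1989}, Chapter~9). The paper's own proof is considerably terser---it verifies strict concavity and continuity of $v$ and then simply cites Stokey--Lucas for the rest---whereas you spell out the iteration arguments for monotonicity, concavity, and joint continuity, and you explicitly flag (and resolve) the one genuine subtlety: strict concavity of $V$ does not follow immediately from the one-step return $F(a,b)=v(a-b,\boldsymbol{p})$, because $F$ is constant along lines $a-b=\text{const}$ and hence only weakly concave in $(a,b)$; your bootstrap via the continuation term (or, on the constrained region, via $a-\underline{b}$ being strictly increasing) is the right patch and is in fact what underlies the ``standard'' argument the paper is citing.
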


\begin{proof}
Note that $v(a- b,\boldsymbol{p}) =\max _{\boldsymbol{x} \in X(a -b ,\boldsymbol{p})}U(\boldsymbol{x})$ 
is strictly concave in $(a,b)$. To see this, let $(a_{1} ,b_{1}) \in \mathbb{R}^{2}$, $(a_{2} ,b_{2}) \in \mathbb{R}^{2}$, $\gamma  \in [0 ,1]$, $a_{\gamma } =\gamma a_{1} +(1 -\gamma )a_{2}$, and $b_{\gamma } =\gamma b_{1} +(1 -\gamma )b_{2}$. 

We have
\begin{align*}v(a_{\gamma } - b_{\gamma } ,\boldsymbol{p}) &  =\max _{\boldsymbol{x} \in X(a_{\gamma } -b_{\gamma } ,\boldsymbol{p})}U(\boldsymbol{x}) \\
 &  \geq U(\gamma \boldsymbol{x}^{ \ast }(a_{1} -b_{1} ,\boldsymbol{p}) +(1 -\gamma )\boldsymbol{x}^{ \ast }(a_{2} -b_{2} ,\boldsymbol{p})) \\
 &  >\gamma U(\boldsymbol{x}^{ \ast }(a_{1} -b_{1} ,\boldsymbol{p})) +(1 -\gamma )U(\boldsymbol{x}^{ \ast }(a_{2} -b_{2} ,\boldsymbol{p})) \\
 &  =\gamma v(a_{1} -b_{1} ,\boldsymbol{p}) +(1 -\gamma )v(a_{2} -b_{2} ,\boldsymbol{p}) .\text{}\end{align*} The first inequality follows from the fact that 
 $\gamma \boldsymbol{x}^{ \ast }(a_{1} -b_{1} ,\boldsymbol{p}) +(1 -\gamma )\boldsymbol{x}^{ \ast }(a_{2} -b_{2} ,\boldsymbol{p}) \in X(a_{\gamma } -b_{\gamma } ,\boldsymbol{p})$. The second inequality follows from the fact that $U$ is strictly concave. We conclude that $v$ is strictly concave in $(a ,b)$. Furthermore, since $U$ is continuous and $X(a -b ,\boldsymbol{p})$ is a continuous correspondence, i.e., $X$ is upper hemicontinuous and lower hemicontinuous, the maximum theorem (see Theorem 17.31 in \cite{aliprantis2006infinite}) implies that $v(a -b ,\boldsymbol{p})$ is continuous. Because $U$ is an increasing function, $v$ is increasing in $a$. Now standard dynamic programming arguments show that $g(a ,\boldsymbol{p} ,r)$ is single-valued and continuous in $(a ,\boldsymbol{p} ,r)$ and that $V(a,\boldsymbol{p} ,r)$ is continuous, as well as strictly concave and increasing in $a$ (see Chapter 9 in \cite{stokey1989}). 
\end{proof}

\begin{lemma}
\label{unique stationary dist}For every $(\boldsymbol{p} ,r) \in \boldsymbol{P}$ there exists a unique invariant wealth distribution $\mu ( \cdot ;\boldsymbol{p} ,r) \in \mathcal{P}(A(\boldsymbol{p} ,r))$. Furthermore, for all $\lambda ( \cdot ;\boldsymbol{p} ,r) \in \mathcal{P}(A(\boldsymbol{p},r))$, the sequence of measures $\{M^{n}\lambda \}$ converges weakly to $\mu ( \cdot ;\boldsymbol{p} ,r) \in \mathcal{P}(A(\boldsymbol{p} ,r))$.
\end{lemma}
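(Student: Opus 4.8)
The plan is to establish uniqueness of the invariant wealth distribution and global weak convergence of the iterates $\{M^n\lambda\}$ by verifying a uniform "drift-toward-a-common-point" condition on the wealth dynamics, which lets me invoke a standard ergodicity criterion for Markov chains on a compact state space. Fix $(\boldsymbol{p},r)\in\boldsymbol{P}$ and abbreviate the state space $A=A(\boldsymbol{p},r)=[\underline a,\overline a]$, which is compact. The transition kernel associated with $M$ is $a\mapsto(1+r)g(a,\boldsymbol{p},r)+\boldsymbol{p}\cdot\boldsymbol{y}$, $\boldsymbol{y}\sim e$. First I would record the key monotonicity and contraction facts about the savings map: from Lemma \ref{Lemma 1}, $g$ is continuous in $a$; moreover $g$ is nondecreasing in $a$ and, crucially, the induced next-period-wealth map $a\mapsto(1+r)g(a,\boldsymbol{p},r)$ is a contraction in the sense that $0\le (1+r)g(a',\boldsymbol{p},r)-(1+r)g(a,\boldsymbol{p},r)\le (1+r)(a'-a)$ fails in general but the consumption map $a\mapsto a-g(a,\boldsymbol{p},r)$ is nondecreasing, so $g(a',\boldsymbol{p},r)-g(a,\boldsymbol{p},r)\le a'-a$. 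What I actually need is that iterating the dynamics from any two initial wealth levels brings the resulting wealth levels together; since $(1+r)g$ may expand distances when $r>0$, the cleaner route is to use the standard income-fluctuation-problem result that because $\beta(1+r)<1$ would be needed for a pure contraction — which we do not have — I instead use the monotone-mixing / Doeblin-type argument below.

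The main step is to prove a Doeblin minorization condition: there exist $m\ge1$, $\varepsilon>0$, and a probability measure $\nu$ on $A$ such that $M^m(a,\cdot)\ge\varepsilon\nu(\cdot)$ for all $a\in A$. Here I exploit the assumption on the endowment process that there is a "worst" vector $\boldsymbol{y}'\in\mathcal{Y}$ with $\boldsymbol{y}'-\boldsymbol{y}\gg0$ for all other $\boldsymbol{y}$, together with the fact — which follows from the Inada conditions in Assumption \ref{Assumption 0} and the structure of the income fluctuation problem (cf. \cite{huggett1993risk}, \cite{aiyagari1994}) — that there is a point $a^\dagger\in A$ and an integer $m$ such that, with probability bounded below uniformly in the starting state, the chain reaches a neighborhood of $a^\dagger$ within $m$ steps. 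Concretely: if the agent draws the low endowment vector repeatedly, wealth is driven down toward the borrowing-constrained region; once near the lower boundary the savings policy is pinned down (the constraint binds or nearly binds because $g$ is continuous and $g(\underline a,\boldsymbol{p},r)$ is well-defined), so the next-period wealth distribution has a common component across all starting points. Each low-endowment draw has probability $e(\text{low})>0$, and there are finitely many steps, giving a uniform lower bound $\varepsilon>0$ on the mass of the common component $\nu$. This is the obstacle I expect to be most delicate: making precise that finitely many "bad" draws compress the reachable wealth set to a single common interval regardless of the starting point, which is where the monotonicity of $g$ in $a$ and the uniform lower bound on wealth $\underline a$ do the work — the constrained agents all land on the same wealth value after a bad draw, so one more step spreads them identically.

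Given the minorization condition, the conclusion is standard: the operator $M$ acting on $\mathcal{P}(A)$ is a contraction in the total-variation metric with modulus $1-\varepsilon$ on blocks of length $m$, so by the Banach fixed point theorem $M$ has a unique fixed point $\mu(\cdot;\boldsymbol{p},r)\in\mathcal{P}(A(\boldsymbol{p},r))$, and $\|M^n\lambda-\mu\|_{TV}\to0$ for every $\lambda\in\mathcal{P}(A(\boldsymbol{p},r))$; in particular $M^n\lambda\to\mu$ weakly, which is the stated conclusion. I would close by noting that existence of some invariant measure can alternatively be obtained directly from the Krylov–Bogolyubov argument (Feller property of $M$, which follows from continuity of $g$ via Lemma \ref{Lemma 1}, plus compactness of $A$), so the content of the lemma is really the uniqueness and the global convergence, both of which flow from the Doeblin condition above.
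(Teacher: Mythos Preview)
Your overall strategy --- verify a Doeblin minorization for the wealth kernel on the compact state space $A(\boldsymbol{p},r)$ and deduce uniform ergodicity (hence uniqueness of the invariant distribution and convergence of $M^n\lambda$ in total variation, which implies weak convergence) --- is exactly the route the paper takes. The paper also obtains existence of at least one invariant measure from the Feller property of $M$ plus compactness (via Schauder rather than Krylov--Bogolyubov, but this is immaterial), and then invokes the standard result that Doeblin plus existence yields uniform ergodicity.

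There is, however, a genuine gap in your minorization argument. You argue that repeated low-endowment draws drive wealth down toward the borrowing constraint, where the policy is pinned and a common component emerges. This is correct only when $\beta(1+r)<1$; in that regime the usual income-fluctuation results (e.g.\ \cite{rabault2002borrowing}) do show the borrowing constraint binds with positive probability from any initial state. But on $\boldsymbol{P}$ we have $r\in(0,1)$ and $\beta\in(1/2,1)$, so $\beta(1+r)\ge 1$ is perfectly possible, and in that regime the agent's incentive to accumulate is strong enough that wealth drifts upward rather than downward: repeated bad draws need not push the agent to the lower boundary, and your argument stalls. The paper handles this second case separately by invoking the result of \cite{chamberlain2000optimal} that, when $\beta(1+r)\ge 1$, the \emph{upper} savings bound binds with positive probability after finitely many steps from any initial state; the common component $\nu$ is then built at the top rather than the bottom of the state space. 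You should split into these two cases explicitly. (A minor point: the distinguished endowment vector $\boldsymbol{y}'$ in the paper satisfies $\boldsymbol{y}'-\boldsymbol{y}\gg 0$, so it is the \emph{best} endowment, not the worst; this does not affect the logic but your labeling is reversed.)
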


\begin{proof}
Fix $(\boldsymbol{p} ,r) \in \boldsymbol{P}$. Define the Markov chain 
\begin{equation*}
 Q(a,D) =\sum _{\boldsymbol{y}  \in \mathcal{Y}}e(\boldsymbol{y})1_{D}((1 +r)g(a ,\boldsymbol{p} ,r) +\boldsymbol{p} \cdot \boldsymbol{y}). \end{equation*} for any  $D \in \mathcal{B}(A(\boldsymbol{p},r))$ where $1_{D}$ is the indicator function of the set $D \in \mathcal{B}(A(\boldsymbol{p},r))$. 
 
 We prove a more general result than the result stated in Lemma \ref{unique stationary dist}: we show that the Markov chain $Q$ is uniformly ergodic.\protect\footnote{
Recall that the Markov chain $Q$ is called uniformly ergodic if it has an invariant distribution $\mu $ and $\sup _{D \in \mathcal{B}(A(\boldsymbol{p} ,r))}\vert Q^{n}(a ,D) -\mu (D)\vert  \leq M\rho ^{n}$ for some $\rho  <1$, $M <\infty $ and for all $n \in \mathbb{N}$, $a \in A(\boldsymbol{p},r)$. Clearly, if $Q$ is uniformly ergodic then Lemma \ref{unique stationary dist} holds.
} The proof follows a similar line to the proofs in \cite{rabault2002borrowing} and in \cite{benhabib2015wealth}, so we only provide a sketch of the proof.\protect\footnote{
See also \cite{schechtman1977some}, \cite{ma2020income}, and \cite{foss2018stochastic}.}

The Markov chain $Q$ is said to satisfy the Doeblin condition if there exists a positive integer $n_{0}$, $\epsilon  >0$ and a probability measure $\upsilon $ on $A(\boldsymbol{p},r)$ such that $Q^{n_{0}}(a ,D) \geq \epsilon \upsilon (D)$ for all $a \in A(\boldsymbol{p},r)$ and all $D \in \mathcal{B}(A(\boldsymbol{p},r))$.

Assume that $(1+r)\beta <1$. Under Assumption \ref{Assumption 0}, the arguments in Proposition 3.1 in \cite{rabault2002borrowing} show that the borrowing constraint binds with positive probability after a finite number of periods for any initial wealth level $a \in A(\boldsymbol{p},r)$. In other words, for any initial wealth level $a \in A(\boldsymbol{p},r)$, we have $g(a ,\boldsymbol{p} ,r) =\underline{b}$ with a positive probability after a finite number of periods. Thus, if we define the probability measure $\upsilon (D) =\sum _{\boldsymbol{y} \in \mathcal{Y}}e(\boldsymbol{y})1_{D}((1 +r)\underline{b} +\boldsymbol{p} \cdot \boldsymbol{y})$, we can find a positive integer $n_{0}$ and $\epsilon  >0$ such that $Q^{n_{0}}(a ,D) \geq \epsilon \upsilon (D)$ for all $a \in A(\boldsymbol{p},r)$ and all $D \in \mathcal{B}(A(\boldsymbol{p},r))$. We conclude that $Q$ satisfies the Doeblin condition. From the facts that $M :\mathcal{P}(A(\boldsymbol{p},r)) \rightarrow \mathcal{P}(A(\boldsymbol{p},r))$ is continuous (see a more general result in Lemma \ref{Unique stationary is contin}) and  $\mathcal{P}(A(\boldsymbol{p},r))$ is compact in the weak topology (since $A(\boldsymbol{p},r)$ is compact), Schauder's fixed-point theorem (see Corollary 17.56 in \cite{aliprantis2006infinite}) implies that $M$ has at least one fixed point. That is, $Q$ has at least one invariant distribution. A Markov chain that has an invariant distribution and satisfies the Doeblin condition is uniformly ergodic (see Theorem 8 in \cite{roberts2004general}). 

If $(1+r)\beta \geq 1$ then the results in \cite{chamberlain2000optimal} show that the upper bound on savings binds with positive probability after a finite number of periods for any initial state. A similar argument to the argument above proves that $Q$ is uniformly ergodic. This completes the proof the Lemma.   
\end{proof}

We say that $w_{n} :\mathbb{R} \rightarrow \mathbb{R}$ converges continuously to $w$ if $w_{n} (a_{n}) \rightarrow w (a)$ whenever $a_{n} \rightarrow a$. Lemma \ref{serfozo} provides a  bounded convergence theorem with varying measures. For a proof, see Theorem 3.3 in \cite{serfozo1982convergence}.\footnote{See \cite{feinberg2019fatou} for a more general result of this type.} We will use this Lemma to prove the continuity of the excess demand function.

\begin{lemma}
\label{serfozo}Assume that $w_{n} :\mathbb{R} \rightarrow \mathbb{R}$ is a uniformly bounded sequence of functions. If $w_{n} :\mathbb{R} \rightarrow \mathbb{R}$ converges continuously to $w$ and $\lambda _{n} \in \mathcal{P}(\mathbb{R})$ converges weakly to $\lambda  \in \mathcal{P}(\mathbb{R})$ then
\begin{equation*}\underset{n \rightarrow \infty }{\lim }\int w_{n} (a) \lambda _{n} (d a) =\int w (a) \lambda  (d a)\text{.}
\end{equation*}
\end{lemma}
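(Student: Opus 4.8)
The plan is to split the difference into a term that involves only the fixed limit $w$, which is handled by ordinary weak convergence, and an error term $\int(w_n-w)\,d\lambda_n$, which is controlled by combining uniform smallness of $w_n-w$ on a large compact set with the uniform bound $M:=\sup_n\sup_a|w_n(a)|<\infty$ on its complement. Before carrying this out I would record two preliminary facts. Taking the constant sequence $a_n\equiv a$ in the definition of continuous convergence gives $w_n(a)\to w(a)$ pointwise, hence $|w(a)|\le M$ for every $a$. I would then show that $w$ is continuous: if $a_k\to a$ while $|w(a_k)-w(a)|\ge\epsilon$ along a subsequence, choose (using the pointwise convergence $w_n(a_k)\to w(a_k)$) indices $n_k\uparrow\infty$ with $|w_{n_k}(a_k)-w(a_k)|<\epsilon/2$, and form the sequence $z_j:=a_k$ if $j=n_k$ and $z_j:=a$ otherwise; then $z_j\to a$, so continuous convergence yields $w_j(z_j)\to w(a)$, contradicting $|w_{n_k}(a_k)-w(a)|\ge|w(a_k)-w(a)|-|w_{n_k}(a_k)-w(a_k)|>\epsilon/2$.

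The next step is to show that $w_n\to w$ uniformly on every compact interval $K\subseteq\mathbb{R}$. If this failed, there would be $\epsilon_0>0$, indices $n_k\uparrow\infty$, and points $a_{n_k}\in K$ with $|w_{n_k}(a_{n_k})-w(a_{n_k})|\ge\epsilon_0$; by compactness I would extract a further subsequence with $a_{n_k}\to a^{*}\in K$, embed it in a full sequence converging to $a^{*}$ (again interleaving with the constant $a^{*}$), and conclude $w_{n_k}(a_{n_k})\to w(a^{*})$ from continuous convergence, while $w(a_{n_k})\to w(a^{*})$ from the continuity of $w$ established above — contradicting $\epsilon_0>0$.

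Finally I would assemble the estimate. Fix $\epsilon>0$. Since $\lambda_n\to\lambda$ weakly on $\mathbb{R}$, the sequence $\{\lambda_n\}$ is tight: picking $T$ so that $\pm T$ are continuity points of $\lambda$ and $\lambda([-T,T])>1-\epsilon$, the Portmanteau theorem gives $\lambda_n([-T,T])\to\lambda([-T,T])$, and enlarging $T$ to absorb the finitely many remaining indices produces a compact $K$ with $\lambda_n(\mathbb{R}\setminus K)\le\epsilon$ for all $n$. Then
\[
\Bigl|\int w_n\,d\lambda_n-\int w\,d\lambda\Bigr|\le\sup_{K}|w_n-w|+2M\,\lambda_n(\mathbb{R}\setminus K)+\Bigl|\int w\,d\lambda_n-\int w\,d\lambda\Bigr|,
\]
where the first term tends to $0$ by the uniform-convergence step, the second is at most $2M\epsilon$, and the third tends to $0$ because $w$ is bounded and continuous and $\lambda_n\to\lambda$ weakly. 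Letting $n\to\infty$ and then $\epsilon\downarrow0$ finishes the argument.

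The main obstacle is the uniform-convergence-on-compacts step. Because the $w_n$ are not assumed continuous, I cannot invoke the classical equivalence between continuous convergence and locally uniform convergence; the argument must be run by hand, and some care is needed in the bookkeeping that turns a subsequence of points into a genuine convergent sequence to which the hypothesis can be applied. Once $w$ is known to be bounded and continuous and $w_n\to w$ uniformly on compacts, the remainder is a routine tightness-plus-triangle-inequality estimate.
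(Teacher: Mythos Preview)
Your argument is correct. The paper does not actually prove this lemma: it simply refers the reader to Theorem~3.3 in \cite{serfozo1982convergence} (and, in a footnote, to a more general version in Feinberg et al.). So there is no ``paper's own proof'' to compare against beyond that citation.

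Your route is the natural elementary one: first extract from continuous convergence that the limit $w$ is bounded and continuous, then upgrade to uniform convergence on compacta by a subsequence/interleaving argument, and finally combine tightness of a weakly convergent sequence with a split over $K$ and $\mathbb{R}\setminus K$. All three steps are sound; the only place that demands care is the interleaving that turns a subsequence $(a_{n_k})$ into a full sequence $(z_j)$ to which the continuous-convergence hypothesis applies, and you handle that correctly. Serfozo's theorem is stated in considerably greater generality (signed measures, possibly unbounded integrands under a uniform-integrability-type condition), which is why the paper cites it rather than reproving the special case; your self-contained argument is exactly what is needed for the bounded, probability-measure setting used here.
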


\begin{lemma}
\label{Unique stationary is contin}The unique invariant wealth distribution $\mu $ is continuous in $(\boldsymbol{p} ,r)$ on $\boldsymbol{P}$, i.e., if $\{\boldsymbol{p}_{n} ,r_{n}\}$ converges to $(\boldsymbol{p} ,r)$, then $\mu (\boldsymbol{p}_{n} ,r_{n})$ converges weakly to $\mu (\boldsymbol{p} ,r)$. 
\end{lemma}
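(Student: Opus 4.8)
The plan is a compactness-plus-uniqueness argument, in which the crucial step --- passing to the limit inside the stationarity equation over the varying state spaces $A(\boldsymbol{p}_{n},r_{n})$ --- is handled by the bounded-convergence-with-varying-measures result of Lemma~\ref{serfozo}.

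First I would record the integral form of the operator $M$. Directly from equation (\ref{eq: operator M}), for an indicator $f=1_{D}$ and any probability measure $\lambda$ on $\mathbb{R}$ one has $\int f(a)\,M\lambda(da;\boldsymbol{p},r)=\int\sum_{\boldsymbol{y}\in\mathcal{Y}}e(\boldsymbol{y})f((1+r)g(a,\boldsymbol{p},r)+\boldsymbol{p}\cdot\boldsymbol{y})\lambda(da;\boldsymbol{p},r)$, and by the standard approximation of bounded measurable functions by simple functions the identity
\begin{equation}
\int f(a)\,M\lambda(da;\boldsymbol{p},r)=\int\sum_{\boldsymbol{y}\in\mathcal{Y}}e(\boldsymbol{y})f\big((1+r)g(a,\boldsymbol{p},r)+\boldsymbol{p}\cdot\boldsymbol{y}\big)\lambda(da;\boldsymbol{p},r)\label{eq:stationary}
\end{equation}
holds for every bounded measurable $f$. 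In particular, since $\mu(\cdot;\boldsymbol{p},r)=M\mu(\cdot;\boldsymbol{p},r)$, identity (\ref{eq:stationary}) with $\lambda=\mu(\cdot;\boldsymbol{p},r)$ characterizes the invariant distribution on bounded continuous test functions. Note also that, because $g(a,\boldsymbol{p},r)\in C(a,\boldsymbol{p})$, the point $(1+r)g(a,\boldsymbol{p},r)+\boldsymbol{p}\cdot\boldsymbol{y}$ always lies in $A(\boldsymbol{p},r)$, so $M$ maps $\mathcal{P}(\mathbb{R})$ into $\mathcal{P}(A(\boldsymbol{p},r))$.

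Next, fix $(\boldsymbol{p},r)\in\boldsymbol{P}$ and a sequence $(\boldsymbol{p}_{n},r_{n})\to(\boldsymbol{p},r)$ with $(\boldsymbol{p}_{n},r_{n})\in\boldsymbol{P}$. Since $\underline{a}$ and $\overline{a}$ are continuous on $\boldsymbol{P}$, there is a fixed compact interval $K\subset\mathbb{R}$ with $A(\boldsymbol{p}_{n},r_{n})\subseteq K$ for all large $n$ and $A(\boldsymbol{p},r)\subseteq K$; I regard every $\mu(\cdot;\boldsymbol{p}_{n},r_{n})$ and $\mu(\cdot;\boldsymbol{p},r)$ as an element of $\mathcal{P}(K)$ by extension by zero. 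As $\mathcal{P}(K)$ is compact in the weak topology, it suffices to show that every weakly convergent subsequence of $\{\mu(\cdot;\boldsymbol{p}_{n},r_{n})\}$ has limit $\mu(\cdot;\boldsymbol{p},r)$; the whole sequence then converges to $\mu(\cdot;\boldsymbol{p},r)$. So suppose $\mu(\cdot;\boldsymbol{p}_{n_{k}},r_{n_{k}})$ converges weakly to some $\nu\in\mathcal{P}(K)$ and fix a bounded continuous $f:\mathbb{R}\to\mathbb{R}$. Apply (\ref{eq:stationary}) along the subsequence: the left-hand side $\int f\,d\mu(\cdot;\boldsymbol{p}_{n_{k}},r_{n_{k}})$ converges to $\int f\,d\nu$. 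For the right-hand side, set $w_{k}(a):=\sum_{\boldsymbol{y}\in\mathcal{Y}}e(\boldsymbol{y})f\big((1+r_{n_{k}})g(a,\boldsymbol{p}_{n_{k}},r_{n_{k}})+\boldsymbol{p}_{n_{k}}\cdot\boldsymbol{y}\big)$. These functions are uniformly bounded by $\|f\|_{\infty}$, and since $\mathcal{Y}$ is finite, $g$ is jointly continuous on $\mathbb{R}\times\mathbb{R}^{n}_{++}\times(0,1)$ (Lemma~\ref{Lemma 1}), and $f$ is continuous, $w_{k}$ converges continuously to $w(a):=\sum_{\boldsymbol{y}\in\mathcal{Y}}e(\boldsymbol{y})f\big((1+r)g(a,\boldsymbol{p},r)+\boldsymbol{p}\cdot\boldsymbol{y}\big)$: if $a_{k}\to a$ then $g(a_{k},\boldsymbol{p}_{n_{k}},r_{n_{k}})\to g(a,\boldsymbol{p},r)$, hence $w_{k}(a_{k})\to w(a)$. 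Lemma~\ref{serfozo} then yields that the right-hand side of (\ref{eq:stationary}) converges to $\int w\,d\nu$, which by (\ref{eq:stationary}) again equals $\int f\,d\big(M\nu(\cdot;\boldsymbol{p},r)\big)$. Hence $\int f\,d\nu=\int f\,d\big(M\nu(\cdot;\boldsymbol{p},r)\big)$ for all bounded continuous $f$, so $\nu=M\nu(\cdot;\boldsymbol{p},r)$, i.e., $\nu$ is an invariant wealth distribution at $(\boldsymbol{p},r)$ (and $\nu\in\mathcal{P}(A(\boldsymbol{p},r))$ by the last observation of the preceding paragraph). By the uniqueness part of Lemma~\ref{unique stationary dist}, $\nu=\mu(\cdot;\boldsymbol{p},r)$, which finishes the argument.

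The main obstacle is precisely that the state space $A(\boldsymbol{p}_{n},r_{n})$ moves with $n$, so the integrands in the fixed-point equation live on different domains and an ordinary dominated-convergence argument does not apply directly; this is what forces the use of continuous convergence of the $w_{k}$ together with Lemma~\ref{serfozo}, rather than a pointwise-convergence argument. Once the common compact $K$ is fixed, joint continuity of $g$ (Lemma~\ref{Lemma 1}) and the finiteness of $\mathcal{Y}$ make the continuous-convergence claim routine.
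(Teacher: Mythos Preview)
Your proof is correct and follows essentially the same approach as the paper's: both use the integral identity (\ref{eq:stationary}), invoke Lemma~\ref{serfozo} to pass to the limit via continuous convergence of the functions $w_k$ built from the jointly continuous $g$, identify any subsequential weak limit as a fixed point of $M$, and conclude by the uniqueness in Lemma~\ref{unique stationary dist} together with tightness coming from a common compact state space. Your treatment is slightly more explicit about embedding all the measures in a fixed $\mathcal{P}(K)$ and about why the limit lies in $\mathcal{P}(A(\boldsymbol{p},r))$, but the argument is the same.
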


\begin{proof}
First note that for every bounded and measurable function $f :\mathbb{R} \rightarrow \mathbb{R}$ and for all $(\boldsymbol{p} ,r)$ such that $\boldsymbol{p} \gg 0$ and $r \in (0,1)$  we have
\begin{equation}\int f(a)M\lambda (da;\boldsymbol{p} ,r) =\int \sum \limits _{\boldsymbol{y} \in \mathcal{Y}}e(\boldsymbol{y})f((1 +r)g(a ,\boldsymbol{p} ,r) +\boldsymbol{p} \cdot \boldsymbol{y})\lambda (da;\boldsymbol{p} ,r). \label{eq:stationary}
\end{equation}To see this, note that if $f =1_{D}$ then Equality (\ref{eq:stationary}) holds from the definition of $M$. A standard argument shows that Equality (\ref{eq:stationary}) holds for any bounded and measurable $f$. 

Assume that $\{\boldsymbol{p}_{n} ,r_{n}\} \subseteq \boldsymbol{P}$ converges to $(\boldsymbol{p},r)\in\boldsymbol{P}$. Let $\{\mu(\boldsymbol{p}_{k} ,r_{k})\}$ be a subsequence of $\{\mu (\boldsymbol{p}_{n} ,r_{n})\}$ that converges weakly to $\overline{\mu }(\boldsymbol{p} ,r)$. Let $f :\mathbb{R} \rightarrow \mathbb{R}$ be a bounded and continuous function. From the continuity of the savings policy function $g$, we have \begin{equation*}\lim _{k\rightarrow \infty }f((1 +r_{k})g(a_{k} ,\boldsymbol{p}_{k} ,r_{k}) +\boldsymbol{p}_{k} \cdot \boldsymbol{y}) =f((1 +r)g(a ,\boldsymbol{p} ,r) +\boldsymbol{p} \cdot \boldsymbol{y})
\end{equation*} whenever $\lim_{k\rightarrow\infty} (a_{k} ,\boldsymbol{p}_{k} ,r_{k}) =(a ,\boldsymbol{p} ,r)$.

Let us define $$w_{k}(a) =\sum \limits _{\boldsymbol{y} \in \mathcal{Y}}^{\,}e(\boldsymbol{y})f((1 +r_{k})g(a ,\boldsymbol{p}_{k} ,r_{k}) +\boldsymbol{p}_{k} \cdot \boldsymbol{y}) \text{  and  } w(a) =\sum \limits _{\boldsymbol{y} \in \mathcal{Y}}^{\,}e(\boldsymbol{y})f((1 +r)g(a ,\boldsymbol{p} ,r) +\boldsymbol{p} \cdot \boldsymbol{y}).$$ Then, $w_{k}(a)$ is a uniformly bounded sequence of functions that converges continuously to $w(a)$.

Applying Lemma \ref{serfozo} and using Equality (\ref{eq:stationary}) twice yield
\begin{align*}\lim _{k \rightarrow \infty }\int f(a)\mu (da;\boldsymbol{p}_{k} ,r_{k}) &  =\lim _{k \rightarrow \infty }\int \sum \limits _{\boldsymbol{y} \in \mathcal{Y}}^{\,}e(\boldsymbol{y})f((1 +r_{k})g(a ,\boldsymbol{p}_{k} ,r_{k}) +\boldsymbol{p}_{k} \cdot \boldsymbol{y})\mu (da;\boldsymbol{p}_{k} ,r_{k}) \\
 &  =\lim _{k \rightarrow \infty }\int w_{k}(a)\mu (da;\boldsymbol{p}_{k} ,r_{k}) \\
 &  =\int w(a)\overline{\mu }(da;\boldsymbol{p} ,r) \\
 &  =\int f(a)M\overline{\mu }(da;\boldsymbol{p} ,r) .\end{align*}
 Because $\mu (\boldsymbol{p}_{k} ,r_{k})$ converges weakly to $\overline{\mu }(\boldsymbol{p} ,r)$, we also have
 \begin{equation*} \lim _{k \rightarrow \infty }\int f(a)\mu (da;\boldsymbol{p}_{k} ,r_{k}) =\int f(a)\overline{\mu }(da;\boldsymbol{p} ,r).
 \end{equation*}
 Thus,  \begin{equation*}\int f(a)M\overline{\mu }(da;\boldsymbol{p} ,r) =\int f(a)\overline{\mu }(da;\boldsymbol{p} ,r)
\end{equation*}
which implies that $\overline{\mu } =M\overline{\mu }$, since $\overline{\mu }$ and $M\overline{\mu }$ are Borel probability measures that agree on all open sets. From Lemma \ref{unique stationary dist}, $\mu $ is the unique fixed point of $M$, and thus,  $\overline{\mu } =\mu $. 

We conclude that each subsequence of $\{\mu (\boldsymbol{p}_{n} ,r_{n})\}$ that converges weakly at all converges weakly to $\mu (\boldsymbol{p} ,r)$. Furthermore, since $A(\boldsymbol{p},r)$ is compact, for all $(\boldsymbol{p},r) \in \boldsymbol{P}$ we can assume that the supports of $\mu (\boldsymbol{p}_{n} ,r_{n})$ and $\mu (\boldsymbol{p},r)$ are contained in a compact set so the sequence $\{\mu (\boldsymbol{p}_{n} ,r_{n})\}$ is a tight sequence of probability measures. Thus, $\mu (\boldsymbol{p}_{n} ,r_{n})$ converges weakly to $\mu (\boldsymbol{p} ,r)$ (see the Corollary after Theorem 25.10 in \cite{billingsley2008probability}). 
\end{proof}

\begin{lemma}
\label{Lemma ED is contin} The excess demand function $\zeta (\boldsymbol{p} ,r)$ is continuous on $\boldsymbol{P}$. 
\end{lemma}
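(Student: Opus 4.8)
The plan is to reduce the continuity of $\zeta$ to the weak continuity of the invariant wealth distribution (Lemma \ref{Unique stationary is contin}) together with the bounded convergence theorem with varying measures (Lemma \ref{serfozo}). First I would note that the aggregate-supply terms $\sum_{y_i \in \mathcal{Y}_i} e_i(y_i) y_i$ in $\zeta_i$ are constants independent of $(\boldsymbol{p}, r)$, so it suffices to prove that the map $(\boldsymbol{p}, r) \mapsto \int_{A(\boldsymbol{p},r)} w_j(a, \boldsymbol{p}, r)\, \mu(da; \boldsymbol{p}, r)$ is continuous on $\boldsymbol{P}$ for $j = 1, \dots, n+1$, where $w_i(a, \boldsymbol{p}, r) := x_i^{\ast}(a - g(a,\boldsymbol{p},r), \boldsymbol{p})$ for $1 \le i \le n$ and $w_{n+1}(a, \boldsymbol{p}, r) := g(a, \boldsymbol{p}, r)$. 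By Lemma \ref{Lemma 1} each $w_j$ is continuous on $\mathbb{R} \times \mathbb{R}^{n}_{++} \times (0,1)$.

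Fix $(\boldsymbol{p}, r) \in \boldsymbol{P}$ and a sequence $\{(\boldsymbol{p}_q, r_q)\} \subseteq \boldsymbol{P}$ converging to $(\boldsymbol{p}, r)$. Since the endpoint functions $\underline{a}(\cdot,\cdot)$ and $\overline{a}(\cdot,\cdot)$ are continuous and $(\boldsymbol{p}, r) \in \boldsymbol{P}$, after discarding finitely many terms all the intervals $A(\boldsymbol{p}_q, r_q)$ and $A(\boldsymbol{p}, r)$ lie in a common compact interval $\widetilde{A} \subset \mathbb{R}$; I would regard each $\mu(\cdot; \boldsymbol{p}_q, r_q)$ and $\mu(\cdot; \boldsymbol{p}, r)$ as a Borel probability measure on $\mathbb{R}$ supported in $\widetilde{A}$. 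Set $v_q^j(a) := w_j(a, \boldsymbol{p}_q, r_q)$ and $v^j(a) := w_j(a, \boldsymbol{p}, r)$. Continuity of $w_j$ gives $v_q^j(a_q) \to v^j(a)$ whenever $a_q \to a$, i.e., $v_q^j$ converges continuously to $v^j$; and since the set $\{(\boldsymbol{p}_q, r_q) : q \ge 1\} \cup \{(\boldsymbol{p}, r)\}$ is a compact subset of $\boldsymbol{P}$, the continuous function $w_j$ is bounded on its product with $\widetilde{A}$, hence $\sup_q \sup_{a \in \widetilde{A}} |v_q^j(a)| < \infty$. Composing $v_q^j$ and $v^j$ with the nearest-point projection onto $\widetilde{A}$ changes none of the integrals against the measures $\mu(\cdot; \boldsymbol{p}_q, r_q)$ and $\mu(\cdot;\boldsymbol{p},r)$, and produces a uniformly bounded sequence on all of $\mathbb{R}$ that still converges continuously.

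Finally I would apply Lemma \ref{serfozo}: since by Lemma \ref{Unique stationary is contin} the measures $\mu(\cdot; \boldsymbol{p}_q, r_q)$ converge weakly to $\mu(\cdot; \boldsymbol{p}, r)$, we get $\int v_q^j\, d\mu(\cdot; \boldsymbol{p}_q, r_q) \to \int v^j\, d\mu(\cdot; \boldsymbol{p}, r)$ for each $1 \le j \le n+1$; together with the constancy of the supply terms this yields $\zeta(\boldsymbol{p}_q, r_q) \to \zeta(\boldsymbol{p}, r)$, so $\zeta$ is continuous on $\boldsymbol{P}$. I expect the main friction to be purely the bookkeeping just described — confining all integrands to one compact set, checking uniform boundedness there, and transporting the problem to $\mathcal{P}(\mathbb{R})$ so that Lemma \ref{serfozo} applies — since the substantive inputs (weak continuity of $\mu$ in $(\boldsymbol{p}, r)$, and continuity of the savings policy function $g$ and the demand function) are already provided by Lemmas \ref{Lemma 1} and \ref{Unique stationary is contin}.
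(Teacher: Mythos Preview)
Your proposal is correct and follows essentially the same approach as the paper: both reduce continuity of $\zeta$ to Lemma~\ref{serfozo} applied to the integrands $x_i^{\ast}(a-g(a,\boldsymbol{p},r),\boldsymbol{p})$ and $g(a,\boldsymbol{p},r)$, using the continuous convergence from Lemma~\ref{Lemma 1} and the weak continuity of $\mu$ from Lemma~\ref{Unique stationary is contin}. Your treatment of the bookkeeping (confining everything to a common compact interval $\widetilde{A}$ and extending via projection to get uniform boundedness on $\mathbb{R}$) is in fact more explicit than the paper's, which simply asserts boundedness with a forward reference to Lemma~\ref{Lemma boundary}.
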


\begin{proof}
Assume that the sequence $\{\boldsymbol{p}_{n} ,r_{n}\} \subseteq \boldsymbol{P}$ converges to $(\boldsymbol{p} ,r) \in \boldsymbol{P}$. Fix $i$ such that $1 \leq i \leq n$. Define $w_{n}(a) =x_{i}^{ \ast }(a -g(a ,\boldsymbol{p}_{n} ,r_{n}) ,\boldsymbol{p}_{n})$ and $w(a) =x_{i}^{ \ast }(a -g(a ,\boldsymbol{p} ,r) ,\boldsymbol{p})$. The continuity of $x_{i}^{ \ast }$ and of $g$ imply that $w_{n}$ converges continuously to $w$, i.e., $w_{n}(a_{n}) \rightarrow w(a)$ whenever $a_{n} \rightarrow a$. The sequence of functions $\{w_{n}(a)\}$ is bounded (see Lemma \ref{Lemma boundary}). Using Lemma \ref{serfozo} and  the fact that $\mu (\boldsymbol{p}_{n} ,r_{n})$ converges weakly to $\mu (\boldsymbol{p} ,r)$ (see Lemma \ref{Unique stationary is contin}) yield 
\begin{align*}\lim _{n \rightarrow \infty }\zeta _{i}(\boldsymbol{p}_{n} ,r_{n}) &  =\lim _{n \rightarrow \infty }\int w_{n}(a)\mu (da;\boldsymbol{p}_{n} ,r_{n}) -\sum _{y_{i} \in \mathcal{Y}_{i}}e_{i}(y_{i})y_{i} \\
 &  =\int w(a)\mu (da;\boldsymbol{p} ,r) -\sum _{y_{i} \in \mathcal{Y}_{i}}e_{i}(y_{i})y_{i} \\
 &  =\zeta _{i}(\boldsymbol{p} ,r) .\end{align*}
 Thus, $\zeta _{i}(\boldsymbol{p} ,r)$ is continuous for $1 \leq i \leq n$. A similar argument shows that $\zeta _{n +1}(\boldsymbol{p} ,r)$ is continuous. We conclude that $\zeta (\boldsymbol{p} ,r)$ is continuous.
\end{proof}

\begin{lemma} \label{Lemma ED Walras}
The excess demand function $\zeta (\boldsymbol{p} ,r)$ satisfies Walras' law, i.e., $(\boldsymbol{p} ,r) \cdot \zeta (\boldsymbol{p} ,r) =0$ for all $(\boldsymbol{p} ,r) \in \boldsymbol{P}$.
\end{lemma}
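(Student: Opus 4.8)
The plan is to prove Walras' law by the usual aggregation of budget constraints, with the twist that the accounting identity tying aggregate consumption expenditure to aggregate savings comes here from stationarity of the wealth distribution rather than from a single agent's intertemporal budget. Concretely, I would expand $(\boldsymbol{p},r)\cdot\zeta(\boldsymbol{p},r)=\sum_{i=1}^{n}p_{i}\zeta_{i}(\boldsymbol{p},r)+r\,\zeta_{n+1}(\boldsymbol{p},r)$ and show it equals zero.

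The first step uses the static budget constraint pointwise in wealth. Since $g(a,\boldsymbol{p},r)\in C(a,\boldsymbol{p})$ we have $g(a,\boldsymbol{p},r)\le a$, so $a-g(a,\boldsymbol{p},r)\ge 0$, the set $X(a-g(a,\boldsymbol{p},r),\boldsymbol{p})$ is nonempty, and $\boldsymbol{x}^{\ast}(a-g(a,\boldsymbol{p},r),\boldsymbol{p})$ is well defined; moreover, by the definition $X(a-g,\boldsymbol{p})=\{\boldsymbol{x}\in\mathbb{R}_{+}^{n}:\boldsymbol{p}\cdot\boldsymbol{x}=a-g\}$, the demand exhausts the budget, i.e. $\boldsymbol{p}\cdot\boldsymbol{x}^{\ast}(a-g(a,\boldsymbol{p},r),\boldsymbol{p})=a-g(a,\boldsymbol{p},r)$. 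Integrating this against the invariant measure $\mu(\cdot;\boldsymbol{p},r)$, and noting that $\sum_{i=1}^{n}p_{i}\sum_{y_{i}\in\mathcal{Y}_{i}}e_{i}(y_{i})y_{i}=\sum_{\boldsymbol{y}\in\mathcal{Y}}e(\boldsymbol{y})\,\boldsymbol{p}\cdot\boldsymbol{y}$, yields $\sum_{i=1}^{n}p_{i}\zeta_{i}(\boldsymbol{p},r)=\int a\,\mu(da;\boldsymbol{p},r)-\int g(a,\boldsymbol{p},r)\,\mu(da;\boldsymbol{p},r)-\sum_{\boldsymbol{y}\in\mathcal{Y}}e(\boldsymbol{y})\,\boldsymbol{p}\cdot\boldsymbol{y}$.

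The second step computes $\int a\,\mu(da;\boldsymbol{p},r)$ from stationarity. Applying $\mu=M\mu$ together with identity (\ref{eq:stationary}) to $f(a)=a$ — legitimate because $A(\boldsymbol{p},r)$ is a compact interval, so one may replace $f$ by a bounded function agreeing with it on $A(\boldsymbol{p},r)$ without changing any of the integrals, since $\mu$ and $M\mu$ are supported there — gives $\int a\,\mu(da)=(1+r)\int g(a,\boldsymbol{p},r)\,\mu(da)+\sum_{\boldsymbol{y}\in\mathcal{Y}}e(\boldsymbol{y})\,\boldsymbol{p}\cdot\boldsymbol{y}$. Substituting into the expression from the first step, the two $\sum_{\boldsymbol{y}}e(\boldsymbol{y})\,\boldsymbol{p}\cdot\boldsymbol{y}$ terms cancel and one is left with $\sum_{i=1}^{n}p_{i}\zeta_{i}(\boldsymbol{p},r)=r\int g(a,\boldsymbol{p},r)\,\mu(da;\boldsymbol{p},r)=-r\,\zeta_{n+1}(\boldsymbol{p},r)$, which is precisely $(\boldsymbol{p},r)\cdot\zeta(\boldsymbol{p},r)=0$.

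I do not anticipate a genuine obstacle: this is the textbook aggregation-of-budget-constraints proof of Walras' law. The only points that deserve an explicit sentence are (a) that $g(a,\boldsymbol{p},r)\le a$, so the static demand is defined and the budget is spent exactly — immediate from the definition of $C(a,\boldsymbol{p})$ — and (b) the mild truncation remark needed to apply (\ref{eq:stationary}) to the (compactly supported) linear function $f(a)=a$. Everything else is bookkeeping.
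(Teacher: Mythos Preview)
Your proposal is correct and follows essentially the same route as the paper: integrate the static budget identity $\boldsymbol{p}\cdot\boldsymbol{x}^{\ast}=a-g$ against $\mu$, use the stationarity relation (\ref{eq:stationary}) with $f(a)=a$ to rewrite $\int a\,\mu(da)$, and combine the two to obtain $(\boldsymbol{p},r)\cdot\zeta(\boldsymbol{p},r)=0$. The only differences are cosmetic---you reverse the order of the two ingredients and add explicit remarks about $g\le a$ and about truncating $f(a)=a$ on the compact support, points the paper leaves implicit.
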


\begin{proof}
Fix $(\boldsymbol{p} ,r) \in \boldsymbol{P}$. Equation (\ref{eq:stationary}) implies that
\begin{align*}\int a\mu (da;\boldsymbol{p} ,r) &  =\int \sum \limits _{\boldsymbol{y} \in \mathcal{Y}}^{\,}e(\boldsymbol{y})((1 +r)g(a ,\boldsymbol{p} ,r) +\boldsymbol{p} \cdot \boldsymbol{y})\mu (da ;\boldsymbol{p} ,r) \\
 &  =(1 +r)\int g(a ,\boldsymbol{p} ,r)\mu (da;\boldsymbol{p} ,r) +\sum \limits _{\boldsymbol{y} \in \mathcal{Y}}^{\,}e(\boldsymbol{y})\boldsymbol{p} \cdot \boldsymbol{y} .\end{align*}Note that $\sum \limits _{\boldsymbol{y} \in \mathcal{Y}}^{\,}e(\boldsymbol{y})\boldsymbol{p} \cdot \boldsymbol{y} =\sum \limits _{i =1}^{n}p_{i}\sum _{y_{i} \in \mathcal{Y}_{i}}e_{i}(y_{i})y_{i}$. To see this, let $\mathcal{Y} =\{\boldsymbol{y}^{1} ,\ldots  ,\boldsymbol{y}^{l}\}$ and reason as follows:  \begin{equation*}\sum \limits _{\boldsymbol{y} \in \mathcal{Y}}^{\,}e(\boldsymbol{y})\boldsymbol{p} \cdot \boldsymbol{y} =e(\boldsymbol{y}^{1})\boldsymbol{p} \cdot \boldsymbol{y}^{1} +\ldots  +e(\boldsymbol{y}^{l})\boldsymbol{p} \cdot \boldsymbol{y}^{l} =\sum \limits _{i =1}^{n}p_{i}\sum \limits _{j =1}^{l}e(\boldsymbol{y}^{j})y_{i}^{j} =\sum \limits _{i =1}^{n}p_{i}\sum _{y_{i} \in \mathcal{Y}_{i}}e_{i}(y_{i})y_{i} .
\end{equation*}

From the agents' budget constraints, we have $\boldsymbol{p} \cdot \boldsymbol{x}^{ \ast }(a -g(a ,\boldsymbol{p} ,r) ,\boldsymbol{p}) =a -g(a ,\boldsymbol{p} ,r)$. 

The last equation implies \begin{equation*} \sum \limits _{i =1}^{n}p_{i}\int x_{i}^{ \ast }(a -g(a ,\boldsymbol{p} ,r) ,\boldsymbol{p})\mu (da;\boldsymbol{p} ,r) =\int (a -g(a ,\boldsymbol{p} ,r))\mu (da;\boldsymbol{p} ,r). 
\end{equation*}

Thus,
\begin{align*}(\boldsymbol{p} ,r) \cdot \zeta (\boldsymbol{p} ,r) &  =\sum \limits _{i =1}^{n}p_{i}\int x_{i}^{ \ast }(a -g(a ,\boldsymbol{p} ,r) ,\boldsymbol{p})\mu (da;\boldsymbol{p} ,r) -\sum \limits _{i =1}^{n}p_{i}\sum _{y_{i} \in \mathcal{Y}_{i}}e_{i}(y_{i})y_{i} -r\int g(a ,\boldsymbol{p} ,r)\mu (da;\boldsymbol{p} ,r) \\
 &  =\int (a -g(a ,\boldsymbol{p} ,r))\mu (da;\boldsymbol{p} ,r) -\sum \limits _{i =1}^{n}p_{i}\sum _{y_{i} \in \mathcal{Y}_{i}}e_{i}(y_{i})y_{i} -r\int g(a ,\boldsymbol{p} ,r)\mu (da;\boldsymbol{p} ,r) \\
 &  =\int a\mu (da;\boldsymbol{p} ,r) -(1 +r)\int g(a ,\boldsymbol{p} ,r)\mu (da;\boldsymbol{p} ,r) -\sum \limits _{\boldsymbol{y} \in \mathcal{Y}}^{\,}e(\boldsymbol{y})\boldsymbol{p} \cdot \boldsymbol{y} =0,\end{align*}
 which proves that $\zeta (\boldsymbol{p} ,r)$ satisfies Walras' law.
\end{proof}

\begin{lemma}
\label{Lemma Bdd from below} There exists $\xi  >0$ such that $\zeta _{i} (\boldsymbol{p} ,r) \geq  -\xi $ for all $1 \leq i \leq n$ and all $(\boldsymbol{p} ,r) \in \boldsymbol{P}$, and $\zeta _{n+1} (\boldsymbol{p} ,r) \geq  -\xi $ for all $(\boldsymbol{p} ,r) \in \boldsymbol{P}$ such that $r \leq \delta < 1$ for some $\delta \in (0,1)$. 
\end{lemma}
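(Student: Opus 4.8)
The plan is to bound each component of the excess demand from below by hand, relying on only two elementary observations: consumption is nonnegative, and the savings policy always lies in the feasible savings interval $C(a,\boldsymbol{p})$, whose upper endpoint takes a simple form on $\boldsymbol{P}$ because there $\sum_{i=1}^{n}p_{i}=1-r$.

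For the goods components I would argue as follows. Fix $(\boldsymbol{p},r)\in\boldsymbol{P}$ and $a\in A(\boldsymbol{p},r)$. Since $g(a,\boldsymbol{p},r)\in C(a,\boldsymbol{p})$ and the upper endpoint of $C(a,\boldsymbol{p})$ is $\min\{a,\sum_{i=1}^{n}p_{i}\overline{b}/(1-r)^{2}\}\le a$, we have $g(a,\boldsymbol{p},r)\le a$, so $X(a-g(a,\boldsymbol{p},r),\boldsymbol{p})$ is a nonempty subset of $\mathbb{R}_{+}^{n}$; hence $x_{i}^{\ast}(a-g(a,\boldsymbol{p},r),\boldsymbol{p})\ge 0$ for every $i$. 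Integrating against the probability measure $\mu(\cdot;\boldsymbol{p},r)$ (which exists by Lemma \ref{unique stationary dist}) gives $\int x_{i}^{\ast}(a-g(a,\boldsymbol{p},r),\boldsymbol{p})\,\mu(da;\boldsymbol{p},r)\ge 0$, so $\zeta_{i}(\boldsymbol{p},r)\ge-\sum_{y_{i}\in\mathcal{Y}_{i}}e_{i}(y_{i})y_{i}$. The right-hand side is a finite constant that does not depend on $(\boldsymbol{p},r)$, so setting $\xi_{1}:=\max_{1\le i\le n}\sum_{y_{i}\in\mathcal{Y}_{i}}e_{i}(y_{i})y_{i}$ gives the claimed bound for all $i$ and all $(\boldsymbol{p},r)\in\boldsymbol{P}$, with no restriction on $r$.

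For the savings component, I would use that $g(a,\boldsymbol{p},r)\in C(a,\boldsymbol{p})$ implies $g(a,\boldsymbol{p},r)\le\sum_{i=1}^{n}p_{i}\overline{b}/(1-r)^{2}$, and that on $\boldsymbol{P}$ the identity $\sum_{i=1}^{n}p_{i}=1-r$ reduces this to $g(a,\boldsymbol{p},r)\le\overline{b}/(1-r)$ (as noted in Remark \ref{Remark Borrowing}). Thus, whenever $r\le\delta$, we get $g(a,\boldsymbol{p},r)\le\overline{b}/(1-\delta)$ for every $a\in A(\boldsymbol{p},r)$, and integrating against $\mu(\cdot;\boldsymbol{p},r)$ yields $\int g(a,\boldsymbol{p},r)\,\mu(da;\boldsymbol{p},r)\le\overline{b}/(1-\delta)$, i.e.\ $\zeta_{n+1}(\boldsymbol{p},r)\ge-\overline{b}/(1-\delta)=:-\xi_{2}$. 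Taking $\xi:=\max\{\xi_{1},\xi_{2}\}$ then finishes the proof.

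I do not expect a real obstacle here: the lemma is essentially the two observations above. The only subtlety is that the savings bound genuinely cannot be made uniform over all of $\boldsymbol{P}$, since the upper endpoint $\overline{b}/(1-r)$ of the feasible savings set blows up as $r\uparrow 1$; this is exactly why the hypothesis $r\le\delta<1$ is needed for $\zeta_{n+1}$ but not for the goods, whose total supply is a fixed constant.
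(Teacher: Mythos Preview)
Your proof is correct and follows essentially the same approach as the paper: bound the goods components below using nonnegativity of demand, and bound the savings component using the upper endpoint of the feasible savings interval $C(a,\boldsymbol{p})$. The only cosmetic difference is that you simplify $\sum_{i}p_{i}\overline{b}/(1-r)^{2}$ to $\overline{b}/(1-r)$ via the identity $\sum_{i}p_{i}=1-r$ on $\boldsymbol{P}$ before bounding by $\overline{b}/(1-\delta)$, whereas the paper bounds it directly by $\overline{b}/(1-\delta)^{2}$; your constant is slightly sharper but the argument is the same.
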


\begin{proof}
We have $\zeta _{i} (\boldsymbol{p} ,r) \geq  -\sum _{y_{i} \in \mathcal{Y}_{i}}^{\,}e_{i} (y_{i}) y_{i}$ for all $1 \leq i \leq n$ and all $(\boldsymbol{p} ,r) \in \boldsymbol{P}$. Thus, $\zeta _{i}$ is bounded from below for all $1 \leq i \leq n$. 

Since $g\left (a ,\boldsymbol{p} ,r\right )$ is bounded from above by $\sum \limits_{i =1}^{n}p_{i} \overline{b}/ (1-r)^{2}$ 
we have 
$$\int g\left (a ,\boldsymbol{p} ,r\right )\mu \left (da;\boldsymbol{p} ,r\right ) \leq \sum \limits_{i =1}^{n}p_{i} \overline{b}/(1-r)^{2},$$ so
\begin{equation*}\zeta _{n +1}\left (\boldsymbol{p} ,r\right ) = -\int g\left (a ,\boldsymbol{p} ,r\right )\mu \left (da;\boldsymbol{p} ,r\right ) \geq  -\frac{ \sum \limits_{i =1}^{n}p_{i} \overline{b} }{(1-r)^{2}} \geq - \frac {\overline{b} } {(1-\delta)^{2} }
\end{equation*} for all $(\boldsymbol{p} ,r) \in \boldsymbol{P}$ such that $r \leq \delta <1$.
\end{proof}

\begin{lemma}
\label{Lemma boundary} (i) If  $(\boldsymbol{p}_{q} ,r_{q}) \rightarrow (\boldsymbol{p} ,r) =(p_{1}, \ldots ,p_{n} ,r) \in \Lambda \backslash \boldsymbol{P}$ with $\{\boldsymbol{p}_{q} ,r_{q}\} \subseteq \boldsymbol{P}$ and $p_{k}>0$ for some $1\leq k \leq n$, then $\lim _{q \rightarrow \infty }\left \Vert \zeta (\boldsymbol{p}_{q} ,r_{q})\right \Vert _{1} =\infty $. 

(ii) If $\{\boldsymbol{p}_{q} ,r_{q}\} \subseteq \boldsymbol{P}$, $(\boldsymbol{p}_{q} ,r_{q}) \rightarrow (\boldsymbol{p} ,r) =(p_{1} ,\ldots  ,p_{n} ,r)$ and $p_{k} >0$, then the sequence $\{\zeta _{k} (\boldsymbol{p}_{q} ,r_{q})\}$ of the $k^{t h}$ components of $\{\zeta  (\boldsymbol{p}_{q} ,r_{q})\}$ is bounded. Similarly, $r \in (0,1)$ implies that the sequence $\{\zeta _{n +1} (\boldsymbol{p}_{q} ,r_{q})\}$ is bounded. 
\end{lemma}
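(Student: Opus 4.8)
The plan is to handle the two parts separately; part (ii) is the workhorse, and one of its computations is reused inside part (i). Throughout write $\mu_q:=\mu(\cdot\,;\boldsymbol{p}_q,r_q)$, $\bar p_q:=\sum_{\boldsymbol{y}\in\mathcal{Y}}e(\boldsymbol{y})\boldsymbol{p}_q\cdot\boldsymbol{y}$, $\underline b_q:=-\min_{\boldsymbol{y}}\boldsymbol{p}_q\cdot\boldsymbol{y}/r_q$. The engine for part (ii) is the invariance identity obtained by putting $f(a)=a$ into Equation (\ref{eq:stationary}), exactly as in the proof of Lemma \ref{Lemma ED Walras}: $\int a\,\mu_q(da)=(1+r_q)\int g(a,\boldsymbol{p}_q,r_q)\,\mu_q(da)+\bar p_q$, i.e. $\int (a-g(a,\boldsymbol{p}_q,r_q))\,\mu_q(da)=r_q\int g(a,\boldsymbol{p}_q,r_q)\,\mu_q(da)+\bar p_q$. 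The crucial observation is that $r_q\int g\,d\mu_q$ stays bounded along the sequence: from below by $r_q\underline b_q=-\min_{\boldsymbol{y}}\boldsymbol{p}_q\cdot\boldsymbol{y}$, and from above by $r_q\sum_i p_{i,q}\overline b/(1-r_q)^2=r_q\overline b/(1-r_q)\le\overline b$, using $\sum_i p_{i,q}=1-r_q$ as in Remark \ref{Remark Borrowing}; and $\bar p_q$ is trivially bounded. Then $\int x_k^{\ast}(a-g(a,\boldsymbol{p}_q,r_q),\boldsymbol{p}_q)\,\mu_q(da)\le p_{k,q}^{-1}\int\boldsymbol{p}_q\cdot\boldsymbol{x}^{\ast}\,\mu_q(da)=p_{k,q}^{-1}\big(r_q\int g\,d\mu_q+\bar p_q\big)$, which stays bounded because $p_{k,q}\to p_k>0$; since $\sum_{y_k}e_k(y_k)y_k$ is a constant, $\{\zeta_k(\boldsymbol{p}_q,r_q)\}$ is bounded. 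For $\zeta_{n+1}=-\int g\,d\mu_q$ I would just sandwich $\int g\,d\mu_q$ between $\underline b_q$ and $\overline b/(1-r_q)$: when $r\in(0,1)$ both endpoints remain bounded (away from $-\infty$ since $r_q\to r>0$, away from $+\infty$ since $1-r_q\to 1-r>0$), so $\{\zeta_{n+1}(\boldsymbol{p}_q,r_q)\}$ is bounded.

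For part (i), first note that $p_k>0$ forces $r=1-\sum_i p_i\le 1-p_k<1$, so the limit point $(\boldsymbol{p},r)\in\Lambda\setminus\boldsymbol{P}$ falls into one of two cases: (A) $r>0$ and $p_j=0$ for some $j\ne k$, or (B) $r=0$. In case (A), since $r_q\to r\in(0,1)$ the interval $[\underline a(\boldsymbol{p}_q,r_q),\overline a(\boldsymbol{p}_q,r_q)]$ stays uniformly bounded, hence so does every agent's consumption level $c=a-g(a,\boldsymbol{p}_q,r_q)\in[0,\overline a(\boldsymbol{p}_q,r_q)-\underline b_q]$; moreover the identity above gives $\int c\,d\mu_q\ge \bar p_q-\min_{\boldsymbol{y}}\boldsymbol{p}_q\cdot\boldsymbol{y}=\sum_{\boldsymbol{y}}e(\boldsymbol{y})(\boldsymbol{p}_q\cdot\boldsymbol{y}-\min_{\boldsymbol{y}'}\boldsymbol{p}_q\cdot\boldsymbol{y}')$, which converges to $\sum_{\boldsymbol{y}}e(\boldsymbol{y})(\boldsymbol{p}\cdot\boldsymbol{y}-\min_{\boldsymbol{y}'}\boldsymbol{p}\cdot\boldsymbol{y}')>0$ — strictly positive because the dominant endowment $\boldsymbol{y}'$ has $\boldsymbol{p}\cdot\boldsymbol{y}'>\boldsymbol{p}\cdot\boldsymbol{y}$ for every other $\boldsymbol{y}$ (as $\boldsymbol{y}'-\boldsymbol{y}\gg0$ and $p_k>0$), so $\boldsymbol{p}\cdot\boldsymbol{y}$ is non-constant on $\mathcal{Y}$. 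A Markov-inequality argument using the uniform upper bound on $c$ then yields constants $c_0,m_0>0$ with $\mu_q(\{a:a-g(a,\boldsymbol{p}_q,r_q)\ge c_0\})\ge m_0$ for all large $q$. Finally I would invoke the elementary consumer-theory fact that for $c\ge c_0$ and $\boldsymbol{p}_{-j}$ bounded away from $0$ the demand $x_j^{\ast}(c,\boldsymbol{p}_q)\to\infty$ as $p_{j,q}\to0$: the first-order condition is $\frac{\partial U}{\partial x_j}(\boldsymbol{x}^{\ast})=\lambda p_{j,q}$ (or $x_j^{\ast}=0$ with $\frac{\partial U}{\partial x_j}\le\lambda p_{j,q}$), the other coordinates obey $x_i^{\ast}\le c/p_{i,q}$ hence are bounded, and $\lambda=v'(c,\boldsymbol{p}_q)$ stays bounded for $c\ge c_0$ (immediate from concavity of $v$ when $U$ is bounded above, otherwise via a budget-reallocation deviation that spends a fixed fraction of the budget on good $j$), so $\frac{\partial U}{\partial x_j}(\boldsymbol{x}^{\ast})\to0$, which by strict concavity of $U$ in $x_j$ forces $x_j^{\ast}\to\infty$, uniformly over $c\in[c_0,\overline a(\boldsymbol{p}_q,r_q)-\underline b_q]$. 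Hence $\zeta_j(\boldsymbol{p}_q,r_q)=\int x_j^{\ast}\,d\mu_q-\sum_{y_j}e_j(y_j)y_j\ge m_0\inf_{c\ge c_0}x_j^{\ast}(c,\boldsymbol{p}_q)-\text{const}\to\infty$, so $\Vert\zeta(\boldsymbol{p}_q,r_q)\Vert_1\to\infty$.

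In case (B), $r_q\to0$, so the natural borrowing limit $\underline a(\boldsymbol{p}_q,r_q)=-\min_{\boldsymbol{y}}\boldsymbol{p}_q\cdot\boldsymbol{y}/r_q\to-\infty$ (here $\min_{\boldsymbol{y}}\boldsymbol{p}_q\cdot\boldsymbol{y}\to\min_{\boldsymbol{y}}\boldsymbol{p}\cdot\boldsymbol{y}\ge p_k\min_{\boldsymbol{y}}y_k>0$). For $q$ large $\beta(1+r_q)<1$, so agents are impatient and run their assets down toward the borrowing constraint; using and sharpening the income-fluctuation arguments already invoked for Lemma \ref{unique stationary dist} (as in \cite{rabault2002borrowing}, and the observation on p.\,673 of \cite{aiyagari1994} recalled in Remark \ref{Remark Borrowing}), the stationary distribution $\mu_q$ places mass bounded away from $0$ on the region where $g(a,\boldsymbol{p}_q,r_q)=\underline b_q$, which together with the uniform upper bound $\overline b/(1-r_q)\le \overline b+o(1)$ on savings forces $\int g\,d\mu_q\to-\infty$; hence $\zeta_{n+1}(\boldsymbol{p}_q,r_q)=-\int g\,d\mu_q\to+\infty$ and again $\Vert\zeta(\boldsymbol{p}_q,r_q)\Vert_1\to\infty$.

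The two non-routine steps are both in part (i): in case (A), establishing the uniform-in-$q$ demand blow-up $\inf_{c\ge c_0}x_j^{\ast}(c,\boldsymbol{p}_q)\to\infty$ simultaneously with the uniform lower bound $m_0$ on the mass of agents whose consumption is non-negligible; in case (B), turning the heuristic ``impatient agents exhaust the diverging borrowing limit'' into the quantitative claim $\int g\,d\mu_q\to-\infty$, for which one reuses the Doeblin/ergodicity machinery behind Lemma \ref{unique stationary dist}. Everything else — the continuity facts from Lemma \ref{Lemma 1}, the invariance identity, and the boundedness sandwiches — is routine given the results already in hand.
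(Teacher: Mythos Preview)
Your proof of part (ii) and of case (B) in part (i) ($r=0$) matches the paper's almost line-for-line: the invariance identity $\int(a-g)\,d\mu_q=r_q\int g\,d\mu_q+\bar p_q$, the sandwich bounds on $r_q\int g\,d\mu_q$ and on $\int g\,d\mu_q$, and the appeal to the Aiyagari page-673 argument for the diverging borrowing limit are exactly what the paper does. Your Markov-inequality derivation of the mass lower bound $\mu_q(\{c\ge c_0\})\ge m_0$ in case (A) is the contrapositive of the paper's contradiction argument and is equally valid.

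Where you diverge is in the demand blow-up step of case (A), and here there is a genuine gap. You explicitly assume ``$\boldsymbol{p}_{-j}$ bounded away from $0$'' in order to conclude that the other coordinates $x_i^{\ast}\le c/p_{i,q}$ stay bounded, but nothing in the hypotheses rules out several prices vanishing simultaneously; in that situation your FOC argument for the specific coordinate $x_j^{\ast}\to\infty$ breaks down, since $\partial U/\partial x_j(\boldsymbol{x}^{\ast})$ could tend to zero because some \emph{other} coordinate diverges. Relatedly, your justification that $\lambda=u'_{\boldsymbol{p}_q}(c)$ stays bounded uniformly in $q$ as $p_{j,q}\to 0$ is sketchy: concavity of $v$ in $c$ gives monotonicity of $v'$ in $c$ but not a uniform-in-$\boldsymbol{p}_q$ bound, and the ``budget-reallocation deviation'' is not spelled out. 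The paper sidesteps both issues with a cleaner contradiction argument: assume the \emph{full vector} $\boldsymbol{x}^{\ast}(a-g(a,\boldsymbol{p}_q,r_q),\boldsymbol{p}_q)$ has a bounded subsequence, pass to a limit $\boldsymbol{x}$, verify directly (using only continuity and strict monotonicity of $U$, no differentiability or Lagrange multipliers) that $\boldsymbol{x}$ maximizes $U$ on the limiting budget set $X(a-g(a,\boldsymbol{p},r),\boldsymbol{p})$, and then contradict optimality by adding one unit of the free good $j$. This handles any number of vanishing prices at once and avoids the uniform-multiplier bound entirely.
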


\begin{proof}
(i) Suppose that $(\boldsymbol{p}_{q} ,r_{q}) \rightarrow (\boldsymbol{p} ,r) =(p_{1} ,\ldots  ,p_{n} ,r)$ where $(\boldsymbol{p} ,r) \in \Lambda \backslash \boldsymbol{P}$ and $p_{k} >0$. We consider two different cases. 

Case I: We have $r_{q} \rightarrow r =0$. In this case the borrowing constraint tends to minus infinity and it follows from the same arguments as the arguments on page 673 in \cite{aiyagari1994}
that 
\begin{equation*}
\underset{q \rightarrow \infty }{\lim }\int g (a,\boldsymbol{p}_{q},r_{q}) \mu  (d a ;\boldsymbol{p}_{q} ,r_{q}) = -\infty.
\end{equation*} 
Thus, we have $\underset{q \rightarrow \infty }{\lim }\zeta _{n +1} (\boldsymbol{p}_{q} ,r_{q}) =\infty $ which implies that $\underset{q \rightarrow \infty }{\lim }\left \Vert \zeta  (\boldsymbol{p}_{q} ,r_{q})\right \Vert _{1} =\infty $. 

Case II: We have $0 < r < 1$. In this case $(\boldsymbol{p},r) \in \Lambda \backslash \boldsymbol{P}$ implies that $p_{j} =0$ and $p_{k}>0$ for some $1 \leq k \leq n$ and $1 \leq j \leq n$. 

The fact that $p_{k} > 0$ (and hence, $\boldsymbol{p}_{q} \cdot \boldsymbol{y}$ is bounded away from $0$) implies that the  measurable set $D_{\epsilon} (\boldsymbol{p}_{q},r_{q}) \subseteq A(\boldsymbol{p}_{q},r_{q}) $ given by  $$D_{\epsilon} (\boldsymbol{p}_{q},r_{q}) := \{a \in A(\boldsymbol{p}_{q},r_{q}):  a-g(a,\boldsymbol{p}_{q},r_{q}) \geq \epsilon \}$$   
satisfies $\inf _{q \geq N} \mu (D_{\epsilon} ;\boldsymbol{p}_{q},r_{q}) \geq \delta ' $ 
for some $\epsilon > 0$, $\delta ' > 0$ and $N>0$. 
To see this, assume in contradiction that this is not true. Then for all $\epsilon > 0 $, $\delta ' > 0$,  and $N>0$ we can find a $q \geq N$ such that $\mu (D_{\epsilon} ;\boldsymbol{p}_{q},r_{q}) < \delta '$.  

Let $\boldsymbol{y}' \in \mathcal{Y}$ satisfies $\boldsymbol{y}' > \boldsymbol{y}$ for all $\boldsymbol{y} \in \mathcal{Y}$. Then because $p_{k} >0$ we can choose a large $N>0$ such that $e(\boldsymbol{y}')(\boldsymbol{p}_{q} \cdot \boldsymbol{y}'- \min_{\boldsymbol{y} \in \mathcal{Y}} \boldsymbol{p}_{q} \cdot \boldsymbol{y} )\geq \delta$ for some $\delta >0$ that does not depend on $q$ and for all $q \geq N$.  
Let $M \in \mathbb{R}$ be such that $M \geq a-g (a ,\boldsymbol{p}_{q} ,r_{q})$ for all $q \geq N$ (because $r \in (0,1)$ we have $A(\boldsymbol{p}_{q} ,r_{q}) \subseteq A$ for some  compact set $A$ and all $q \geq N$, so $M$ exists). 

We can choose $\epsilon$ and $\delta '$ such that $0<\epsilon + \delta ' M < \delta$. 

Let  $q \geq N$ and assume in contradiction that $\mu (D_{\epsilon} ;\boldsymbol{p}_{q},r_{q}) < \delta ' $.   Note that 

\begin{align*}  \int _{A(\boldsymbol{p}_{q},r_{q})}(a -g (a ,\boldsymbol{p}_{q} ,r_{q})) \mu  (d a ;\boldsymbol{p}_{q} ,r_{q})  & =   \int _{D_{\epsilon} (\boldsymbol{p}_{q},r_{q})}(a -g (a ,\boldsymbol{p}_{q} ,r_{q})) \mu  (d a ;\boldsymbol{p}_{q} ,r_{q})  \\
& + \int _{A(\boldsymbol{p}_{q},r_{q}) \setminus D_{\epsilon} (\boldsymbol{p}_{q},r_{q})}(a -g (a ,\boldsymbol{p}_{q} ,r_{q})) \mu  (d a ;\boldsymbol{p}_{q} ,r_{q}) \\
& \leq M \delta ' + \epsilon < \delta.  
\end{align*} 
 On the other hand, we have
\begin{align*} \int _{A(\boldsymbol{p}_{q},r_{q})}(a -g (a ,\boldsymbol{p}_{q} ,r_{q})) \mu  (d a ;\boldsymbol{p}_{q} ,r_{q}) 
& = \int _{A(\boldsymbol{p}_{q},r_{q})}r_{q}g\left (a ,\boldsymbol{p}_{q} ,r_{q}\right )\mu \left (da;\boldsymbol{p}_{q} ,r_{q}\right ) +\sum _{\boldsymbol{y} \in \mathcal{Y}}e\left (\boldsymbol{y}\right )\boldsymbol{p}_{q} \cdot \boldsymbol{y} \\
 & \geq - \int _{A(\boldsymbol{p}_{q},r_{q})}\min_{\boldsymbol{y} \in \mathcal{Y}} \boldsymbol{p}_{q} \cdot \boldsymbol{y} \ \mu \left (da;\boldsymbol{p}_{q} ,r_{q}\right ) +  \sum _{\boldsymbol{y} \in \mathcal{Y}}e\left (\boldsymbol{y}\right )\boldsymbol{p}_{q} \cdot \boldsymbol{y} \\ 
 & =  -\min_{\boldsymbol{y} \in \mathcal{Y}} \boldsymbol{p}_{q} \cdot \boldsymbol{y} + \sum _{\boldsymbol{y} \in \mathcal{Y}}e\left (\boldsymbol{y}\right )\boldsymbol{p}_{q} \cdot \boldsymbol{y} \\ 
 & = -\min_{\boldsymbol{y} \in \mathcal{Y}} \boldsymbol{p}_{q} \cdot \boldsymbol{y} + e(\boldsymbol{y}')\boldsymbol{p}_{q} \cdot \boldsymbol{y}'+\sum _{\boldsymbol{y} \in \mathcal{Y} \setminus \{\boldsymbol{y}' \} }e\left (\boldsymbol{y}\right )\boldsymbol{p}_{q} \cdot \boldsymbol{y} \\ 
 & \geq -\min_{\boldsymbol{y} \in \mathcal{Y}} \boldsymbol{p}_{q} \cdot \boldsymbol{y} + e(\boldsymbol{y}')\boldsymbol{p}_{q} \cdot \boldsymbol{y}'+(1-e(\boldsymbol{y}')) \min_{\boldsymbol{y} \in \mathcal{Y}} \boldsymbol{p}_{q} \cdot \boldsymbol{y} \\
 & = e(\boldsymbol{y}')(\boldsymbol{p}_{q} \cdot \boldsymbol{y}'- \min_{\boldsymbol{y} \in \mathcal{Y}} \boldsymbol{p}_{q} \cdot \boldsymbol{y} )\geq \delta > 0\end{align*}
which is a contradiction. 
The first equality follows from Equation (\ref{eq:stationary}). The second inequality follows from the borrowing constraint.

Now assume in contradiction that  $\{\boldsymbol{x}^{\ast} (a-g(a,\boldsymbol{p_{q}},r_{q}),\boldsymbol{p_{q}}) \}$ has a bounded subsequence.

Let $a \in D_{\epsilon} (\boldsymbol{p}_{q},r_{q}) $ where $\inf _{q \geq N} \mu (D_{\epsilon} ;\boldsymbol{p}_{q},r_{q}) \geq \delta $ 
for some $\epsilon > 0$, $\delta>0$, and  $N>0$. 
Because   $\{\boldsymbol{x}^{\ast} (a-g(a,\boldsymbol{p_{q}},r_{q}),\boldsymbol{p_{q}}) \}$ has a bounded subsequence, then by passing to a subsequence (and relabelling if needed), we can assume that $\boldsymbol{x}^{\ast} (a-g(a,\boldsymbol{p_{q}},r_{q}),\boldsymbol{p_{q})} \rightarrow \boldsymbol{x}$ holds in $\mathbb{R}^{n}_{+}$. 

With slight abuse of notation we extend the savings policy function and let $g(a,\boldsymbol{p},r) :=  \limsup g(a,\boldsymbol{p}_{q},r_{q}) $. 
Because $r \in (0,1)$, the function $g(a,\boldsymbol{p},r)$ is finite.

We claim that $\boldsymbol{x} = \boldsymbol{x}^{\ast}(a-g(a,\boldsymbol{p},r),\boldsymbol{p})$. 
The fact that $a - g(a,\boldsymbol{p}_{q},r_{q}) \geq \epsilon$ implies that $a - g(a,\boldsymbol{p},r) \geq \epsilon$. Furthermore, 
$$\boldsymbol{p} \cdot \boldsymbol{x} =  \lim_{q \rightarrow \infty} \boldsymbol{p}_{q} \cdot \boldsymbol{x}^{\ast} (a-g(a,\boldsymbol{p_{q}},r_{q}),\boldsymbol{p_{q})} 
= \limsup_{q \rightarrow \infty}  (a- g(a,\boldsymbol{p}_{q},r_{q})) = a - g(a,\boldsymbol{p},r) $$
which implies that $\boldsymbol{x} \in X(a-g(a,\boldsymbol{p},r),\boldsymbol{p})$ (and that  $\lim_{q \rightarrow \infty}  (a- g(a,\boldsymbol{p}_{q},r_{q}))$ exists). Now let $\boldsymbol{x}' \in X(a-g(a,\boldsymbol{p},r),\boldsymbol{p}) $. Because  $a - g(a,\boldsymbol{p},r) > 0$ for every $\lambda \in (0,1)$ we have $ \boldsymbol{p} \cdot (\lambda \boldsymbol{x}' ) < a-g(a,\boldsymbol{p},r)$. Hence, there exists a number $N_{0}$ such that for every $q > N_{0}$ we have 
$$\boldsymbol{p}_{q} \cdot (\lambda \boldsymbol{x}' )  < a - g(a,\boldsymbol{p}_{q},r_{q}) = \boldsymbol{p}_{q} \cdot  \boldsymbol{x}^{\ast} (a- g(a,\boldsymbol{p}_{q},r_{q}),  \boldsymbol{p}_{q}).$$  
Using the monotonicity of the utility function we conclude that the consumption bundle $\boldsymbol{x}^{\ast} (a- g(a,\boldsymbol{p}_{q},r_{q}),  \boldsymbol{p}_{q})$ is preferred to the consumption bundle  $\lambda \boldsymbol{x}'$ for all $\lambda \in (0,1)$. Continuity of $U$ implies that  $U (\boldsymbol{x}) \geq U (\boldsymbol{x}')$. Thus, $\boldsymbol{x} = \boldsymbol{x}^{\ast}(a-g(a,\boldsymbol{p},r),\boldsymbol{p})$, i.e., $\boldsymbol{x}$ maximizes $U$ on $X(a-g(a,\boldsymbol{p},r),\boldsymbol{p})$

Let $\boldsymbol{x}''$ be a consumption bundle such that $x_{i}'' = x_{i}$ for all $i \neq j$ and $x_{j}'' = x_{j} + 1$. Then $\boldsymbol{x}''$ is feasible in $X(a-g(a,\boldsymbol{p},r),\boldsymbol{p})$  (recall that $p_{j}=0$). From the strict monotonicity of $U$, $\boldsymbol{x}''$ yields strictly more utility than $\boldsymbol{x}$. This contradicts that $\boldsymbol{x}$ maximizes $U$ on $X(a-g(a,\boldsymbol{p},r),\boldsymbol{p})$. 

We conclude that the sequence $\{\boldsymbol{x}^{\ast} (a-g(a,\boldsymbol{p_{q}},r_{q}),\boldsymbol{p_{q}}) \}$ diverges. Hence,  
 $$\lim _{q \rightarrow \infty }\sum _{i =1}^{n}x_{i}^{\ast} \left (a -g\left (a ,\boldsymbol{p}_{q} ,r_{q}\right ) ,\boldsymbol{p}_{q}\right ) =\infty $$ for all $a \in D_{\epsilon} (\boldsymbol{p}_{q},r_{q}) := D_{\epsilon}^{q}$. We have 
\begin{align*} 
\lim _{q \rightarrow \infty} \int _{A(\boldsymbol{p}_{q},r_{q})} \sum _{i =1}^{n} x_{i}^{ \ast }(a -g(a ,\boldsymbol{p}_{q} ,r_{q}) ,\boldsymbol{p}_{q})\mu (da;\boldsymbol{p}_{q} ,r_{q}) & 
=  \lim _{q \rightarrow \infty} \int _{ D_{\epsilon}^{q} } \sum _{i =1}^{n} x_{i}^{ \ast }(a -g(a ,\boldsymbol{p}_{q} ,r_{q}) ,\boldsymbol{p}_{q})\mu (da;\boldsymbol{p}_{q} ,r_{q}) \\
& +\lim _{q \rightarrow \infty}  \int _{A(\boldsymbol{p}_{q},r_{q}) \setminus D_{\epsilon}^{q} } \sum _{i =1}^{n}x_{i}^{ \ast }(a -g(a ,\boldsymbol{p}_{q} ,r_{q}) ,\boldsymbol{p}_{q})\mu (da;\boldsymbol{p}_{q} ,r_{q}) \\ 
& = \infty
\end{align*}
i.e., 
 $\underset{q \rightarrow \infty }{\lim }\left \Vert \zeta  (\boldsymbol{p}_{q} ,r_{q})\right \Vert _{1} =\infty $.

(ii) Assume that $\{\boldsymbol{p}_{q} ,r_{q}\}$ is a sequence of strictly positive prices satisfying the conditions of the
Lemma where $\boldsymbol{p}_{q} =(p_{1}^{q} ,\ldots  ,p_{n}^{q})$. Since $p_{k} >0$ for some $1 \leq k \leq n$ and $(\boldsymbol{p}_{q} ,r_{q}) \rightarrow (\boldsymbol{p} ,r)$, we can assume that there exists some $\epsilon  >0$ such that $p_{k}^{q} >\epsilon $ for all $q$. Let $M$ be such that $\sum _{\boldsymbol{y} \in \mathcal{Y}}e\left (\boldsymbol{y}\right )\boldsymbol{p}_{q} \cdot \boldsymbol{y} \leq M$ for all $q$.

From the budget constraint we have 
\begin{equation*}p_{k}^{q} \int _{A(\boldsymbol{p}_{q},r_{q})}x_{k}^{ \ast } (a -g (a ,\boldsymbol{p}_{q} ,r_{q}) ,\boldsymbol{p}_{q}) \mu  (d a ;\boldsymbol{p}_{q} ,r_{q}) \leq \int _{A(\boldsymbol{p}_{q},r_{q})}(a -g (a ,\boldsymbol{p}_{q} ,r_{q})) \mu  (d a ;\boldsymbol{p}_{q} ,r_{q})\text{.}
\end{equation*}
The last inequality implies that
\begin{align*}\int _{A(\boldsymbol{p}_{q},r_{q})}x_{k}^{ \ast } (a -g (a ,\boldsymbol{p}_{q} ,r_{q}) ,\boldsymbol{p}_{q}) \mu  (d a ;\boldsymbol{p}_{q} ,r_{q}) &  \leq \frac{\int _{A(\boldsymbol{p}_{q},r_{q})}(a -g (a ,\boldsymbol{p}_{q} ,r_{q})) \mu  (d a ;\boldsymbol{p}_{q} ,r_{q})}{p_{k}^{q}} \\
 &  =\frac{\int _{A(\boldsymbol{p}_{q},r_{q})}r_{q}g\left (a ,\boldsymbol{p}_{q} ,r_{q}\right )\mu \left (da;\boldsymbol{p}_{q} ,r_{q}\right ) +\sum _{\boldsymbol{y} \in \mathcal{Y}}e\left (\boldsymbol{y}\right )\boldsymbol{p}_{q} \cdot \boldsymbol{y}}{p_{k}^{q}} \\
 &  \leq \frac{r_{q}\sum _{i =1}^{n}p_{i}^{q}\overline{b}/(1-r_{q})^{2} +\sum _{\boldsymbol{y} \in \mathcal{Y}}e\left (\boldsymbol{y}\right )\boldsymbol{p}_{q} \cdot \boldsymbol{y}}{p_{k}^{q}} \leq \frac{\overline{b}/(1-\epsilon)^{2} +M}{\epsilon } .
 \end{align*} 
The equality follows from Equation (\ref{eq:stationary}). The second inequality follows since $g\left (a ,\boldsymbol{p}_{q} ,r_{q}\right )$$ \leq \sum p_{i}^{q}\overline{b} / (1-r_{q})^{2}$ for all $a \in A(\boldsymbol{p}_{q},r_{q})$. Therefore, the sequence $\{\zeta _{k} (\boldsymbol{p}_{q} ,r_{q})\}$ is bounded for $1 \leq k \leq n$.

Now assume that $r \in (0,1)$. In this case, we can assume that there exists $\delta  >0$ such that $r_{q} >\delta $ for all $q$. We can also assume that $\min _{\boldsymbol{y} \in \mathcal{Y}}\boldsymbol{p}_{q} \cdot \boldsymbol{y} \leq \overline{M}$ for all $q$.  

Using the borrowing constraint, we have \begin{equation*} -\int _{A(\boldsymbol{p}_{q},r_{q})}g\left (a ,\boldsymbol{p}_{q} ,r_{q}\right )\mu \left (da;\boldsymbol{p}_{q} ,r_{q}\right ) \leq \frac{\min _{\boldsymbol{y} \in \mathcal{Y}}\boldsymbol{p}_{q}\boldsymbol{y}}{r_{q}} \leq \frac{\overline{M}}{\delta } .
\end{equation*} Therefore the sequence $\{\zeta _{n +1} (\boldsymbol{p}_{q} ,r_{q})\}$ is bounded from above. From Lemma \ref{Lemma Bdd from below}, $\{\zeta _{n +1} (\boldsymbol{p}_{q} ,r_{q})\}$ is bounded from below. The proof of the Lemma is completed.
\end{proof}

 We now prove that a CSE exists. The proof follows similar arguments to the proof of  Theorem 1.4.8 in \cite{aliprantis1990existence}). However, as discussed in Remark \ref{Remark:Boundary}, we cannot use this result directly because the standard boundary and boundness conditions do not hold in our setting.

Let $\zeta ( \cdot ) =(\zeta _{1}( \cdot ) , \ldots ,\zeta _{n +1}( \cdot ))$ from $\boldsymbol{P}$ into $\mathbb{R}^{n +1}$ be a function that satisfies the properties of Proposition \ref{Prop: existence}. For notation simplicity, we will write $p_{n+1} := r$ and $\boldsymbol{p} = (p_{1},\ldots,p_{n+1})$ for the rest of the proof. 
Define 
$$ W (\boldsymbol{p}) =   \{ k \in  \{1,\ldots,n+1 \} : \ \zeta_{k}(\boldsymbol{p}) = \max \{ \zeta_{i}(\boldsymbol{p}):\ i=1,\ldots, n+1 \} \  \}$$
for each $\boldsymbol{p} \in \boldsymbol{P}$ and 
$$ W (\boldsymbol{p}) =   \{ k \in  \{1,\ldots,n+1 \} : \ p_{k} = 0  \  \}$$
for each $\boldsymbol{p} \in \Lambda \setminus \boldsymbol{P}$. 

We also define the correspondence $ \Phi : \Lambda \rightarrow 2^{\Lambda}$ where $2^{\Lambda}$ is the set of all subsets of $\Lambda$
$$ \Phi (\boldsymbol{p}) =  \{ z \in \Lambda: \ z_{k} = 0 \text{ for all } k \notin W (\boldsymbol{p}) \} .$$ 
It is immediate to establish that $\Phi (\boldsymbol{p})$ is a non-empty, convex and compact subset of $\Lambda$. We claim that $\Phi$ has a closed graph.\footnote{Recall that $\Phi$ has a closed graph whenever the graph of $\Phi$ $$ G_{\phi} = \{(\boldsymbol{p},\boldsymbol{p}') \in \Lambda \times \Lambda: \boldsymbol{p} \in \Lambda \ \boldsymbol{p}' \in \Phi(\boldsymbol{p}) \}$$ 
is a closed subset of $\Lambda \times \Lambda$.} Assume that $\boldsymbol{p}^{q} \rightarrow \boldsymbol{p}$ in  $\Lambda$, $\boldsymbol{\pi } ^{q} \rightarrow \boldsymbol { \pi }$ in $\Lambda$, and $ \boldsymbol { \pi} ^{q} \in  \Phi (\boldsymbol{p}^{q})$ for each $q$. We have to show that $\boldsymbol {\pi} \in \Phi(\boldsymbol{p})$. 

The case that $\boldsymbol{p} \in \boldsymbol{P}$ is easy to establish and does not depend on the boundary and boundness conditions (see  the proof of Theorem 1.4.8 in \cite{aliprantis1990existence}). 

Let $\boldsymbol{p} \in \Lambda \setminus \boldsymbol{P}$. 
We consider two cases. 

(i) There exists a subsequence of $\{ \boldsymbol{p}^{q} \}$ (by relabelling if needed, we can assume it to be $\{\boldsymbol{p}^{q} \}$) in $\boldsymbol{P}$.

First assume that $p_{n+1} = 1$ (i.e., $r=1$). Then $p^{q}_{n+1} = r^{q} > 1/\beta - 1 $ for all $q > N$ and some $N>0$. Combining the results in \cite{chamberlain2000optimal} and  \cite{accikgoz2018existence} that show that the agents' reach the maximal wealth they can have with probability 1 for any initial wealth level and the fact that the upper bound on savings tends to infinity\footnote{Note that for $(\boldsymbol{p},r) \in \boldsymbol{P}$ we have $\sum _{i=1}^{n} p_{i} / (1-r)^{2} = 1/\sum _{i=1}^{n} p_{i}$ so the upper bound on savings tends to infinity when $\sum _{i=1}^{n} p_{i}$ tends to $0$.} when $p^{q}_{i} \rightarrow 0$ for all $i=1,\ldots,n$  imply that $\zeta _{n+1}  (\boldsymbol{p}^{q}) < 0$ for all $q > N$. Walrals' law implies that  $\zeta _{i} (\boldsymbol{p}^{q}) > 0$ for some $1 \leq i \leq n$. Therefore, $ W (\boldsymbol{p}^{q}) \subseteq \{1,\ldots,n \} = W (\boldsymbol{p})$ for all $q > N$. From $ \boldsymbol  { \pi}  ^{q} \in  \Phi (\boldsymbol{p}^{q}) $ it follows that $\pi _{k}^{q} =0$ for all $k \notin W (\boldsymbol{p}^{q})$  and for all $q > N$. Hence, $\lim_{q \rightarrow \infty} \pi _{k}^{q} = \pi _{k} = 0$ for all $k \notin W (\boldsymbol{p})$. We conclude that $\boldsymbol {\pi}  \in  \Phi (\boldsymbol{p})$.

Now assume that $p_{n+1} \leq \delta < 1$ for some $\delta <1$. We can assume that $p^{q}_{n+1} \leq \delta<1$ for all $q$. We have $p_{k} > 0$ for some $1 \leq k \leq n$. From Proposition \ref{Prop: existence} part (iv) the sequence $\{ \zeta_{i} ( \boldsymbol{p}^{q} ) \}$ is bounded for each $i \notin  W (\boldsymbol{p})$. From Proposition \ref{Prop: existence} part (iii) we have $\lim _{q \rightarrow \infty }\left \Vert \zeta (\boldsymbol{p}^{q})\right \Vert _{1} =\infty $. Because $\zeta_{i}(\boldsymbol{p}^{q})$ is bounded from below for $i=1,\ldots,n+1$ when  $p^{q}_{n+1} \leq \delta<1$, we conclude that $\max \{ \zeta_{i} (\boldsymbol{p}^{q}) : i=1,\ldots ,n+1 \}$ is an element of $ W (\boldsymbol{p})$  for all $q > N$ and some $N>0$. Thus, there exists some $N$ such that $ W (\boldsymbol{p}^{q}) \subseteq W (\boldsymbol{p})$ for all $q > N$.
Combining this and the fact that $\boldsymbol{ \pi ^{q} } \in  \Phi (\boldsymbol{p}^{q})$ implies $\boldsymbol{ \pi } ^{q} \in  \Phi (\boldsymbol{p})$ for all $q > N$. Hence, $\boldsymbol{ \pi }  \in  \Phi (\boldsymbol{p})$. 

(ii) No subsequence of $\{ \boldsymbol{p}^{q} \} $ lies in $\boldsymbol{P}$. Then $p^{q}_{i} \rightarrow p_{i}$ for all $i$ immediately implies that $W (\boldsymbol{p}^{q}) \subseteq W (\boldsymbol{p})$. From the argument above  we have $\boldsymbol{ \pi}  \in  \Phi (\boldsymbol{p})$.  

Thus, we have established that $\phi$ has a closed graph. The rest of the proof follows the proof of 
Theorem 1.4.8 in \cite{aliprantis1990existence}). We provide it here for completeness. 

From Kakutani's fixed point theorem (see Theorem 1.4.7 in \cite{aliprantis1990existence}), $\phi$ has a fixed point, say $\boldsymbol{p}$, i.e., $\boldsymbol{p} \in \phi(\boldsymbol{p})$. We claim that $\boldsymbol{p}$ is a CSE. First note that $\boldsymbol{p} \gg 0 $. To see this assume in contradiction that $p_{k} = 0 $ for some $k$. Then, $p_{j} = 0$ for all $j \in W(\boldsymbol{p}) $. Furthermore, $\boldsymbol{p} \in \phi(\boldsymbol{p})$ implies that $p_{j} = 0$ for all $j \notin W(\boldsymbol{p}) $. We conclude that $p_{k}=0$ for all $k=1,\ldots,n+1$ which contradict the fact that  $\boldsymbol{p} \in \Lambda$. Hence, $\boldsymbol{p} \gg 0 $.

Now, let 
$ \gamma  = \max \{ \zeta_{i}(\boldsymbol{p},r):\ i=1,\ldots, n+1 \}$. Note that $\boldsymbol{p} \in \phi(\boldsymbol{p})$ and $\boldsymbol{p} \gg 0 $ imply $i \in W(\boldsymbol{p})$ for $i=1,\ldots,n+1$. Hence, $\gamma = \zeta_{i} (\boldsymbol{p}) $ for $i=1,\ldots,n+1$. From Warlas' law we have 
$$\gamma = \sum_{i=1}^{n+1} p_{i} m = \sum _{i=1}^{n+1} p_{i} \zeta_{i} (\boldsymbol{p}) = 0  $$
which implies $\zeta (\boldsymbol{p}) = 0$ because $\boldsymbol{p} \gg 0 $. Thus, $\boldsymbol{p}$ is a CSE.

\subsection{The uniqueness of a competitive stationary equilibrium}
In this section we prove Theorem \ref{Theorem uniq}. 

\noindent \textbf{Theorem \ref{Theorem uniq}.}
\emph{Assume that $U(\boldsymbol{x}) =\sum _{i =1}^{n}\alpha _{i}x_{i}^{\gamma }$ for some $0 <\gamma  <1$, $\alpha _{i} >0$, $\sum _{i =1}^{n}\alpha _{i} =1$. Then there exists a unique competitive stationary equilibrium.}

\begin{proof} Since the savings policy function $g$, the demand function $\boldsymbol{x}^{ \ast }$, and the invariant wealth distribution $\mu $ are unique given fixed prices $(\boldsymbol{p} ,r)$, it is enough to show that the prices $(\boldsymbol{p},r)$ that clear the market are unique in order to prove the uniqueness of a CSE. The proof involves a number of steps. 

\textbf{Step 1.} If $(\boldsymbol{p} ,r)$ and $(\boldsymbol{p}^{ \prime } ,r^{ \prime })$ are equilibrium prices, then $\boldsymbol{p} =\boldsymbol{p}^{ \prime }$. Suppose, in contradiction, that there are equilibrium prices $(\boldsymbol{p} ,r)$ and $(\boldsymbol{p}^{ \prime } ,r^{ \prime })$ such that $\boldsymbol{p}^{ \prime } \neq \boldsymbol{p}$, and $\boldsymbol{p}$ and $\boldsymbol{p}^{ \prime }$ are not linearly independent. From Proposition \ref{Prop homogenous of }, we can normalize the prices such that $\boldsymbol{p} \geq \boldsymbol{p}^{ \prime }$ and $p_{k}^{ \prime } =p_{k} =1$ for some $1 \leq k \leq n$. We have
\begin{align*}\int a\mu (da;\boldsymbol{p} ,r) =\sum \limits _{\boldsymbol{y} \in \mathcal{Y}}e(\boldsymbol{y})((1 +r)\int g(a ,\boldsymbol{p} ,r)\mu (da;\boldsymbol{p} ,r) +\boldsymbol{p} \cdot \boldsymbol{y}) =\sum \limits _{\boldsymbol{y} \in \mathcal{Y}}^{\,}e(\boldsymbol{y})\boldsymbol{p} \cdot \boldsymbol{y}.\end{align*}
The first equality follows from Equation (\ref{eq:stationary}). The second equality follows from the fact that $(\boldsymbol{p} ,r)$ are equilibrium prices. 

Similarly,  $\int a\mu (da;\boldsymbol{p}^{ \prime } ,r^{ \prime }) =\sum \limits _{\boldsymbol{y} \in \mathcal{Y}}^{\,}e(\boldsymbol{y})\boldsymbol{p}^{ \prime } \cdot \boldsymbol{y}$. Since $\boldsymbol{y} \gg 0$ we conclude that $\int a\mu (da;\boldsymbol{p}^{ \prime } ,r^{ \prime }) <\int a\mu (da;\boldsymbol{p} ,r)$. 

Using the fact that $\int g(a ,\boldsymbol{p}^{ \prime } ,r^{ \prime })\mu (da;\boldsymbol{p}^{ \prime } ,r^{ \prime }) = \int g(a ,\boldsymbol{p} ,r)\mu(da;\boldsymbol{p},r)=0$ we have \begin{equation*}\int (a -g(a ,\boldsymbol{p}^{ \prime } ,r^{ \prime }))\mu (da;\boldsymbol{p}^{ \prime } ,r^{ \prime }) <\int (a -g(a ,\boldsymbol{p} ,r))\mu (da;\boldsymbol{p} ,r) .
\end{equation*}
Since the utility function is in the constant elasticity of substitution class, it is well known and easy to check that $\boldsymbol{x}^{ \ast }(a -g(a ,\boldsymbol{p} ,r) ,\boldsymbol{p}) =(z_{1}(\boldsymbol{p})(a -g(a ,\boldsymbol{p} ,r)) , . . . ,z_{n}(\boldsymbol{p})(a -g(a ,\boldsymbol{p} ,r))$ where $z_{i}(\boldsymbol{p})$ is a positive function for each $i =1 ,\ldots  ,n$. Thus, $x_{i}^{ \ast }$ is linear in the total expenditure $a -g(a ,\boldsymbol{p} ,r)$ for all $i$. From the assumption that the elasticity of substitution is higher than one, the demand for each good increases with the prices of the other goods. Since $\boldsymbol{p} \geq \boldsymbol{p}^{ \prime }$ and $p_{k}^{ \prime } =p_{k} =1$ we have
 $z_{k}(\boldsymbol{p}) \geq z_{k}(\boldsymbol{p}^{ \prime })$.  
 
 We have 
\begin{align*}\int x_{k}^{ \ast }(a -g(a ,\boldsymbol{p} ,r) ,\boldsymbol{p})\mu (da;\boldsymbol{p} ,r) &  =\int z_{k}(\boldsymbol{p})(a -g(a ,\boldsymbol{p} ,r))\mu (da;\boldsymbol{p} ,r) \\
 &  >\int z_{k}(\boldsymbol{p}^{ \prime })(a -g(a ,\boldsymbol{p}^{ \prime } ,r^{ \prime }))\mu (da;\boldsymbol{p}^{ \prime } ,r^{ \prime }) \\
 &  =\int x_{k}^{ \ast }(a -g(a ,\boldsymbol{p}^{ \prime } ,r^{ \prime }) ,\boldsymbol{p}^{ \prime })\mu (da;\boldsymbol{p}^{ \prime } ,r^{ \prime }) ,\end{align*}which leads to the contradiction $0 =\zeta _{k}(\boldsymbol{p} ,r) >\zeta _{k}(\boldsymbol{p}^{ \prime } ,r^{ \prime }) =0$.

\textbf{Step 2. }$g(a ,\boldsymbol{p} ,r)$ is increasing and convex in $a$ for all $(\boldsymbol{p} ,r)\in\boldsymbol{P}$. It is easy to check that the indirect utility function $v(a -b ,\boldsymbol{p}) =\max _{x \in X(a -b ,\boldsymbol{p})}U(\boldsymbol{x})$ is given by $v(a -b ,\boldsymbol{p}) =(a -b)^{\gamma }z(\boldsymbol{p})$ where $z(\boldsymbol{p})$ is a positive function. The indirect utility function is a constant relative risk aversion utility function and thus the savings policy function is convex in $a$ (for example, we can apply Theorem 4 in \cite{jensen2017distributional} or the results in \cite{huggett2004precautionary}). 

To show that $g$ is increasing in $a$, note that $v(a -b ,\boldsymbol{p})$ has increasing differences in $(a ,b)$ (recall that a function $v$ is said to have increasing differences in $(a ,b)$ if for all $a_{2} \geq a_{1}$ and $b_{2} \geq b_{1}$ we have $v(a_{2} -b_{2} ,\boldsymbol{p}) -v(a_{2} - b_{1} ,\boldsymbol{p}) \geq v(a_{1} - b_{2} ,\boldsymbol{p}) -v(a_{1} - b_{1} ,\boldsymbol{p})$). Thus, the function \begin{equation*}v(a -b ,\boldsymbol{p}) +\beta \sum \limits _{\boldsymbol{y} \in \mathcal{Y}}^{\,}e(\boldsymbol{y})V((1 +r)b +\boldsymbol{p} \cdot \boldsymbol{y} ,\boldsymbol{p} ,r)
\end{equation*} has increasing differences in $(a ,b)$ as the sum of functions with increasing differences. Now Theorem 6.1 in \cite{topkis1978minimizing} implies that $g(a ,\boldsymbol{p} ,r)$ is increasing in $a$.

\textbf{Step 3.} $g(a ,\boldsymbol{p} ,r)$ is increasing in $r$ for all $(a ,\boldsymbol{p})$. The proof of this result follows from similar arguments to the arguments in the proof of Theorem 1 in \cite{light2017uniqueness}. Since the current setting is different from the setting in  \cite{light2017uniqueness} we provide the proof here.

Assume that $f(a ,\boldsymbol{p} ,r)$ is a bounded function that is increasing, concave and continuously differentiable in $a$ with the following properties: (i) $f$ has increasing differences in $(a ,r)$; (ii) $af_{a}(a ,\boldsymbol{p} ,r)\ $is increasing in $a$ on $\mathbb{R}_{+}$ (for a function $f$ we denote by $f_{a}$ the derivative of $f$ with respect to $a$). Let  $1>r >r^{ \prime }>0$. We have  
\begin{align*}(1 +r)f_{a}((1 +r)b +\boldsymbol{p} \cdot \boldsymbol{y} ,\boldsymbol{p} ,r) &  \geq (1 +r^{ \prime })f_{a}((1 +r^{ \prime })b +\boldsymbol{p} \cdot \boldsymbol{y} ,\boldsymbol{p} ,r) \\
 &  \geq (1 +r^{ \prime })f_{a}((1 +r^{ \prime })b +\boldsymbol{p} \cdot \boldsymbol{y} ,\boldsymbol{p} ,r^{ \prime }) .\end{align*}The first inequality follows from property (ii) if\protect\footnote{
To see this, let $a =(1 +z)b +\boldsymbol{p} \cdot \boldsymbol{y}$. Then $af_{a}(a ,\boldsymbol{p} ,r) =b(1 +z)f_{a}((1 +z)b +\boldsymbol{p} \cdot \boldsymbol{y} ,\boldsymbol{p} ,r) +\boldsymbol{p} \cdot \boldsymbol{y}f_{a}((1 +z)b +\boldsymbol{p} \cdot \boldsymbol{y} ,\boldsymbol{p} ,r)$. The facts that $af_{a}(a ,\boldsymbol{p} ,r)$ is increasing in $a$ on $\mathbb{R}_{+}$ and $f_{a}$ is decreasing in $a$ imply that $(1 +z)f_{a}((1 +z)b +\boldsymbol{p} \cdot \boldsymbol{y} ,\boldsymbol{p} ,r)$ is increasing in $z$ on $I$. Note that if $f_{a}$ is strictly decreasing, then $(1 +r)f_{a}((1 +r)b +\boldsymbol{p} \cdot \boldsymbol{y} ,\boldsymbol{p} ,r)$ is strictly increasing in $r$.} $b >0$, and from the concavity of $f$ if $b \leq 0$. The second inequality follows from property (i). Thus, the derivative of the function $f((1 +r)b +\boldsymbol{p} \cdot \boldsymbol{y} ,\boldsymbol{p} ,r)$ with respect to $b$ is increasing in $r$. We conclude that $f((1 +r)b +\boldsymbol{p} \cdot \boldsymbol{y} ,\boldsymbol{p} ,r)$ has increasing differences in $(b ,r)$. Thus, the function \begin{equation*}v(a -b ,\boldsymbol{p}) +\beta \sum \limits _{\boldsymbol{y} \in \mathcal{Y}}^{\,}e(\boldsymbol{y})f((1 +r)b +\boldsymbol{p} \cdot \boldsymbol{y} ,\boldsymbol{p} ,r)
\end{equation*} has increasing differences in $(b,r)$ as the sum of functions with increasing differences. Recall that 
$C(a ,\boldsymbol{p},r) =[ -\min _{\boldsymbol{y} \in \mathcal{Y}}\boldsymbol{p} \cdot \boldsymbol{y} /r ,\min \{a ,\sum _{i =1}^{n}p_{i}\overline{b} /(1-r)^{2}\}]$
 is the interval from which an agent may choose his level of savings. Note that $C$ is ascending in $r$ (i.e., $r_{2} \geq r_{1}$, $b \in C(a,\boldsymbol{p},r_{1})$, and $b^{ \prime } \in C(a,\boldsymbol{p},r_{2})$ imply $\max \{b ,b^{ \prime }\} \in C(a,\boldsymbol{p},r_{2})$ and $\min \{b ,b^{ \prime }\} \in C(a,\boldsymbol{p},r_{1})$). Theorem 6.1 in \cite{topkis1978minimizing} implies that \begin{equation*}g^{f}(a ,\boldsymbol{p} ,r) : =\ensuremath{\operatorname*{argmax}}_{b \in C(a ,\boldsymbol{p},r)}v(a -b ,\boldsymbol{p}) +\beta \sum \limits _{\boldsymbol{y} \in \mathcal{Y}}^{\,}e(\boldsymbol{y})f((1 +r)b +\boldsymbol{p} \cdot \boldsymbol{y} ,\boldsymbol{p} ,r)
\end{equation*} is increasing in $r$. The envelope theorem (see \cite{benveniste1979})\ implies that $Tf$ is differentiable and $(Tf)_{a}(a ,\boldsymbol{p} ,r) =v_{a}(a -g^{f}(a ,\boldsymbol{p} ,r) ,\boldsymbol{p})$ when $a -g(a ,\boldsymbol{p} ,r) >0$ (which always holds in our case, because of Assumption \ref{Assumption 0}). 

Using the facts that $v$ has increasing differences in $(a ,b)$ and that $g^{f}(a ,\boldsymbol{p} ,r) \geq g^{f}(a ,\boldsymbol{p} ,r^{ \prime })$ yield
\begin{equation*}(Tf)_{a}(a ,\boldsymbol{p} ,r) =v_{a}(a -g^{f}(a ,\boldsymbol{p} ,r) ,\boldsymbol{p}) \geq v_{a}(a -g^{f}(a ,\boldsymbol{p} ,r^{ \prime }) ,\boldsymbol{p}) =(Tf)_{a}(a ,\boldsymbol{p} ,r^{ \prime }) .
\end{equation*} Thus, $Tf$ has increasing differences in $(a ,r)$. Let $a \geq 0$. We have
\begin{align*}a(Tf)_{a}(a ,\boldsymbol{p} ,r) &  =av_{a}(a -g^{f}(a ,\boldsymbol{p} ,r) ,\boldsymbol{p}) \\
 &  =a\gamma (a -g^{f}(a ,\boldsymbol{p} ,r))^{\gamma  -1}z(\boldsymbol{p}) \\
 &  =\frac{a}{a -g^{f}(a ,\boldsymbol{p} ,r)}\gamma (a -g^{f}(a ,\boldsymbol{p} ,r)^{\gamma }z(\boldsymbol{p}) .\end{align*}
 Since $Tf$ is concave in $a$ (see Lemma \ref{Lemma 1}) for $a \geq a^{ \prime }$ we have \begin{equation*}\gamma (a -g^{f}(a ,\boldsymbol{p} ,r))^{\gamma  -1}z(\boldsymbol{p}) =(Tf)_{a}(a ,\boldsymbol{p} ,r) \leq (Tf)_{a}(a^{ \prime } ,\boldsymbol{p} ,r) =\gamma (a^{ \prime } -g^{f}(a^{ \prime } ,\boldsymbol{p} ,r))^{\gamma  -1}z(\boldsymbol{p})
\end{equation*}
which implies that the function $a -g^{f}(a ,\boldsymbol{p} ,r)$ is increasing in $a$. We conclude that the function $(a -g^{f}(a ,\boldsymbol{p} ,r))^{\gamma }$ is increasing in $a$. Furthermore, the function $\frac{a}{a -g^{f}(a ,\boldsymbol{p} ,r)}$ is increasing in $a$ on $\mathbb{R}_{+}$.\footnote{
To see this, note that $a -g^{f}(a ,\boldsymbol{p} ,r) : =k(a)$ is concave since $g^{f}$ is convex in $a$. Thus, for $a^{ \prime } >a \geq 0$ we have 
\begin{equation*}\frac{k(a^{ \prime }) -k(a)}{a^{ \prime } -a} \leq \frac{k(a^{ \prime }) -k(0)}{a^{ \prime } -0}.
\end{equation*}Rearranging and using the fact that $k(0) >0$ yield \begin{equation*}\frac{a^{ \prime }}{k(a^{ \prime })} \geq \frac{a}{k(a)} .
\end{equation*}\par ~~ } 

Thus, $a(Tf)_{a}(a,\boldsymbol{p},r)$ is increasing on $\mathbb{R}_{+}$ as the product of two positive increasing functions.

Define $f^{n} =T^{n}f : =T(T^{n -1}f)$ for $n =1 ,2 , . . .$ where $T^{0}f : =f$. We conclude that $f^{n}(a ,\boldsymbol{p} ,r)$ is bounded, concave, increasing, and differentiable in $a$ with increasing differences in $(a ,r)$, and that $af_{a}^{n}(a ,\boldsymbol{p} ,r)\ $is increasing in $a$ on $\mathbb{R}_{+}$ for all $n$. The argument above shows that $g^{f_{n}}(a ,\boldsymbol{p} ,r)$ is increasing in $r$ for all $n$. Theorem 3.8 and Theorem 9.9 in \cite{stokey1989} show that $g^{f_{n}}$ converges pointwise to $g$. Thus, the savings policy function $g$ is increasing in $r$. Furthermore, \begin{equation*}\lim _{n \rightarrow \infty }f_{a}^{n}(a ,\boldsymbol{p} ,r) =\lim _{n \rightarrow \infty }\gamma (a -g^{f_{n}}(a ,\boldsymbol{p} ,r))^{\gamma  -1}z(\boldsymbol{p}) =\gamma (a -g(a ,\boldsymbol{p} ,r))^{\gamma  -1}z(\boldsymbol{p}) =V_{a}(a ,\boldsymbol{p} ,r) .
\end{equation*}
Thus, $aV_{a}(a ,\boldsymbol{p} ,r)$ is increasing in $a$ on $\mathbb{R}_{+}$ and has increasing differences in $(a ,r)$. The same argument as the argument above shows that the savings policy function $g$ is increasing in $r$.

\textbf{Step 4. }If $(\boldsymbol{p} ,r)$ and $(\boldsymbol{p} ,r^{ \prime })$ are equilibrium prices with $r >r^{ \prime }$ then $\int g(a ,\boldsymbol{p} ,r)\mu (da;\boldsymbol{p} ,r) >\int g(a ,\boldsymbol{p} ,r^{ \prime })\mu (da;\boldsymbol{p} ,r^{ \prime })$. 

Let $r >r^{ \prime }$. We first show that $g(a ,\boldsymbol{p} ,r) >g(a ,\boldsymbol{p} ,r^{ \prime })$ for all $a \in \widetilde{A}$, and all $\boldsymbol{p} \gg 0$ where $\widetilde{A} =\{a:g(a ,\boldsymbol{p} ,r^{ \prime }) \in  \operatorname{int}C(a,\boldsymbol{p},r^{ \prime })\}$ is the set of wealth levels such that the optimal savings decision is interior. Suppose, in contradiction, that $g(a ,\boldsymbol{p} ,r^{ \prime }) =g(a ,\boldsymbol{p} ,r)$ for some $a \in \widetilde{A}$. Since $V$ is differentiable and strictly concave in $a$ (see Lemma \ref{Lemma 1}), the arguments in Step 3 imply that the function $(1 +r)V_{a}((1 +r)b +\boldsymbol{p} \cdot \boldsymbol{y} ,\boldsymbol{p} ,r)$ is strictly increasing in $r$. The first order condition implies that 
\begin{align*}0 &  = -z(\boldsymbol{p})\gamma (a -g(a ,\boldsymbol{p} ,r^{ \prime }))^{\gamma  -1} +\beta (1 +r^{ \prime })\sum \limits _{\boldsymbol{y} \in \mathcal{Y}}^{\,}e(\boldsymbol{y})V_{a}((1 +r^{ \prime })g(a ,\boldsymbol{p} ,r^{ \prime }) +\boldsymbol{p} \cdot \boldsymbol{y} ,\boldsymbol{p} ,r^{ \prime }) \\
 &  < -z(\boldsymbol{p})\gamma (a -g(a ,\boldsymbol{p} ,r))^{\gamma  -1} +\beta (1 +r)\sum \limits _{\boldsymbol{y} \in \mathcal{Y}}^{\,}e(\boldsymbol{y})V_{a}((1 +r)g(a ,\boldsymbol{p},r) +\boldsymbol{p} \cdot \boldsymbol{y} ,\boldsymbol{p} ,r) \leq 0\end{align*} which is a contradiction. 

For $\widetilde{\lambda }_{1} ,\widetilde{\lambda }_{2} \in \mathcal{P}(\mathbb{R})$ we define the partial order $ \succeq _{I}$ by $\widetilde{\lambda }_{2} \succeq _{I}\widetilde{\lambda }_{1}$ if and only if $\int f(a)\lambda _{2}(da) \geq \int f(a)\lambda _{1}(da)$ for every increasing  function $f$. 

Assume that $\lambda ( \cdot  ,\boldsymbol{p} ,r) \succeq _{I}\lambda ( \cdot  ,\boldsymbol{p} ,r^{ \prime })$.  Then, for every increasing function $f$ we have
\begin{align*}\int f(a)M\lambda (da;\boldsymbol{p} ,r) &  =\int \sum \limits _{\boldsymbol{y} \in \mathcal{Y}}e(\boldsymbol{y})f((1 +r)g(a ,\boldsymbol{p} ,r) +\boldsymbol{p} \cdot \boldsymbol{y})\lambda (da;\boldsymbol{p} ,r) \\
 &  \geq \int \sum \limits _{\boldsymbol{y} \in \mathcal{Y}}e(\boldsymbol{y})f((1 +r^{ \prime })g(a ,\boldsymbol{p} ,r^{ \prime }) +\boldsymbol{p} \cdot \boldsymbol{y})\lambda (da;\boldsymbol{p} ,r) \\
 &  \geq \int \sum \limits _{\boldsymbol{y} \in \mathcal{Y}}e(\boldsymbol{y})f((1 +r^{ \prime })g(a ,\boldsymbol{p},r^{ \prime }) +\boldsymbol{p} \cdot \boldsymbol{y})\lambda (da;\boldsymbol{p} ,r^{ \prime }) \\
 &  =\int f(a)M\lambda (da;\boldsymbol{p} ,r^{ \prime }) .\end{align*}The equalities follow from Equation (\ref{eq:stationary}) (see Lemma \ref{Unique stationary is contin}). The first inequality follows from the fact that $g$ is increasing in $r$. The second inequality follows from the facts that $g$ is increasing in $a$ and $\lambda ( \cdot ;\boldsymbol{p} ,r) \succeq _{I}\lambda ( \cdot ;\boldsymbol{p} ,r^{ \prime })$. We conclude that $M^{k}\lambda ( \cdot ;\boldsymbol{p} ,r) \succeq _{I}M^{k}\lambda ( \cdot ;\boldsymbol{p} ,r^{ \prime })$ for all $k =1 ,2 ,...$. From Lemma \ref{unique stationary dist}, the sequence $\{M^{k}\lambda \}$ converges weakly to $\mu $ for all $(\boldsymbol{p} ,r)$. Since $ \succeq _{I}$ is closed under weak convergence, we conclude that  $\mu ( \cdot ;\boldsymbol{p} ,r) \succeq _{I}\mu ( \cdot ;\boldsymbol{p} ,r^{ \prime })$. 

 Suppose that $(\boldsymbol{p} ,r)$ and $(\boldsymbol{p} ,r^{ \prime })$ are equilibrium prices with $r >r^{ \prime }$. We have
\begin{equation*}\int g(a ,\boldsymbol{p} ,r)\mu (da;\boldsymbol{p} ,r) >\int g(a ,\boldsymbol{p} ,r^{ \prime })\mu (da;\boldsymbol{p} ,r) \geq \int g(a ,\boldsymbol{p} ,r^{ \prime })\mu (da;\boldsymbol{p} ,r^{ \prime }) .
\end{equation*}The first inequality follows from the fact that $g$ is strictly increasing in $r$ on $\widetilde{A}$ (and we have $\mu (\widetilde{A};\boldsymbol{p} ,r) >0$ since $(\boldsymbol{p} ,r)$ are equilibrium prices). The second inequality follows from the facts that $g$ is increasing in $a$ and $\mu ( \cdot ;\boldsymbol{p} ,r) \succeq _{I}\mu ( \cdot ;\boldsymbol{p} ,r^{ \prime })$. 

     \textbf{Step 5. }Suppose that $(\boldsymbol{p} ,r)$ and $(\boldsymbol{p}^{ \prime } ,r^{ \prime })$ are equilibrium prices. From Step 1, we know that $\boldsymbol{p}^{ \prime } =\boldsymbol{p}$. From Step 4, if $r >r^{ \prime }$ then $0 =\int g(a ,\boldsymbol{p} ,r)\mu (da;\boldsymbol{p} ,r) >\int g(a ,\boldsymbol{p} ,r^{ \prime })\mu (da;\boldsymbol{p} ,r^{ \prime }) =0$ which is a contradiction. We conclude that $(\boldsymbol{p} ,r)$$ =(\boldsymbol{p}^{ \prime } ,r^{ \prime })$. Thus, there is at most one CSE. It easy to see that Assumptions \ref{Assumption 0} is satisfied so Theorem \ref{Theorem exist} implies that there exists at least one CSE. We conclude that there is a unique CSE.    
\end{proof}

\subsection{Proof of Theorem \ref{Theorem wealth ineq}}

In this section we prove Theorem \ref{Theorem wealth ineq}.

\noindent \textbf{Theorem \ref{Theorem wealth ineq}}
\emph{Assume that $U(\boldsymbol{x}) =\sum _{i =1}^{n}\alpha _{i}x_{i}^{\gamma }$ for some $0 <\gamma  <1$, $\alpha _{i} >0$, $\sum _{i =1}^{n}\alpha _{i} =1$. Assume that the endowments process $e$ is riskier than the endowments process $e^{ \prime }$. Then \newline
(i) The partial equilibrium wealth inequality is higher under $e$ than under $e^{ \prime }$, i.e., $\mu ( \cdot ;\boldsymbol{p},r,e) \succeq _{I-CX}\mu ( \cdot ;\boldsymbol{p} ,r ,e^{ \prime })$ for all $(\boldsymbol{p} ,r) \in \boldsymbol{P}$. In addition, if  $(\boldsymbol{p}(e),r(e))$ are equilibrium prices under the endowments process $e$ then $\mu ( \cdot ;\boldsymbol{p}(e),r(e),e) \succeq _{CX}\mu ( \cdot ;\boldsymbol{p}(e) ,r(e) ,e^{ \prime })$.    \newline (ii) The equilibrium prices of goods do not change, i.e., $\boldsymbol{p}(e) =\boldsymbol{p}(e^{ \prime })$. The equilibrium interest rate is lower under $q$ than under $e^{ \prime }$, i.e., $r(e^{ \prime }) \geq r(e)$.}

\begin{proof}
(i) Fix $(\boldsymbol{p} ,r) \in \boldsymbol{P}$. Assume that the endowments process $e$ is riskier than the endowments process $e^{ \prime }$. From Theorem 2 in \cite{light2017precautionary}, we can show that $g(a ,\boldsymbol{p} ,r ,e) \geq g(a ,\boldsymbol{p} ,r ,e^{ \prime })$ for all $(a ,\boldsymbol{p} ,r)$.     

Suppose that $\lambda ( \cdot ;\boldsymbol{p} ,r ,e) \succeq _{I -CX}\lambda ( \cdot ;\boldsymbol{p} ,r ,e^{\prime })$.  Then for every convex and increasing function $f$ we have
\begin{align*}\int f(a)M\lambda (da;\boldsymbol{p} ,r , e) &  =\int \sum \limits _{\boldsymbol{y} \in \mathcal{Y}}^{\,}e(\boldsymbol{y})f((1 +r)g(a ,\boldsymbol{p} ,r ,e) +\boldsymbol{p} \cdot \boldsymbol{y})\lambda (da;\boldsymbol{p} ,r , e) \\
 &  \geq \int \sum \limits _{\boldsymbol{y} \in \mathcal{Y}}^{\,}e^{ \prime }(\boldsymbol{y})f((1 +r)g(a ,\boldsymbol{p} ,r ,e) +\boldsymbol{p} \cdot \boldsymbol{y})\lambda (da;\boldsymbol{p} ,r , e) \\
 &  \geq \int \sum \limits _{\boldsymbol{y} \in \mathcal{Y}}^{\,}e^{ \prime }(\boldsymbol{y})f((1 +r)g(a ,\boldsymbol{p} ,r ,e^{ \prime }) +\boldsymbol{p} \cdot \boldsymbol{y})\lambda (da;\boldsymbol{p} ,r , e) \\
 &  \geq \int \sum \limits _{\boldsymbol{y} \in \mathcal{Y}}^{\,}e^{ \prime }(\boldsymbol{y})f((1 +r)g(a ,\boldsymbol{p} ,r ,e^{ \prime }) +\boldsymbol{p} \cdot \boldsymbol{y})\lambda (da;\boldsymbol{p} ,r ,e^{ \prime }) \\
 &  =\int f(a)M\lambda (da;p ,r ,e^{ \prime }) .\end{align*}The equalities follow from Equation (\ref{eq:stationary}) (see Lemma \ref{Unique stationary is contin}). The first inequality follows from the fact that $f((1 +r)g(a ,\boldsymbol{p} ,r ,e) +\boldsymbol{p} \cdot \boldsymbol{y})$ is convex in $\boldsymbol{y}$ as the composition of a convex and increasing function with a convex function. The second inequality follows from the facts that $g(a ,\boldsymbol{p} ,r ,e) \geq g(a ,\boldsymbol{p} ,r ,e^{ \prime })$ and $f$ is increasing. The third inequality follows from the fact that $g$ is convex and increasing in $a$ (see Step 2 in the proof of Theorem \ref{Theorem uniq}), which implies that $f((1 +r)g(a ,\boldsymbol{p} ,r) +\boldsymbol{p} \cdot \boldsymbol{y})$ is convex and increasing in $a$, and from the fact that $\lambda ( \cdot ;\boldsymbol{p} ,r , e) \succeq _{I -CX}\lambda ( \cdot ;\boldsymbol{p} ,r ,e^{ \prime })$. 
 
 We conclude that $M^{k}\lambda ( \cdot ;\boldsymbol{p} ,r , e) \succeq _{I -CX}M^{k}\lambda ( \cdot ;\boldsymbol{p} ,r ,e^{ \prime })$ for all $k =1 ,2 ,...$. From Lemma \ref{unique stationary dist}, the sequence $\{M^{k}\lambda \}$ converges weakly to $\mu $ for all $(\boldsymbol{p} ,r)$. Since under our assumptions (see Theorem 1.5.9 in \cite{muller2002comparison})  $ \succeq _{I -CX}$ is closed under weak convergence, we conclude that $\mu ( \cdot ;\boldsymbol{p} ,r ,e) \succeq _{I -CX}\mu ( \cdot ;\boldsymbol{p} ,r ,e^{ \prime })$.
 
 Now assume that $(\boldsymbol{p}(e),r(e))$ are equilibrium prices under the endowment process $e$, so 
 \begin{equation*} 
 \int g(a ,\boldsymbol{p}(e) ,r(e))\mu (da;\boldsymbol{p}(e),r(e), e)=0. 
 \end{equation*}

 $e \succeq _{CX}e ^{ \prime }$ and the linearity of the function $\boldsymbol{p} \cdot \boldsymbol{y}$ imply that $\sum e(\boldsymbol{y})\boldsymbol{p} \cdot \boldsymbol{y} =\sum e^{ \prime }(\boldsymbol{y})\boldsymbol{p} \cdot \boldsymbol{y}$. We have 
\begin{align*}\int a\mu (da;\boldsymbol{p}(e) ,r(e) ,e) &  =\sum \limits _{\boldsymbol{y} \in \mathcal{Y}}^{\,}e(\boldsymbol{y})((1 +r(e))\int g(a ,\boldsymbol{p}(e) ,r(e))\mu (da;\boldsymbol{p}(e) ,r(e) ,e) +\boldsymbol{p}(e) \cdot \boldsymbol{y}) \\
 &  =\sum \limits _{\boldsymbol{y} \in \mathcal{Y}}^{\,}e(\boldsymbol{y})\boldsymbol{p}(e) \cdot \boldsymbol{y} \\
 &  =\sum _{\boldsymbol{y} \in \mathcal{Y}}e^{ \prime }(\boldsymbol{y})\boldsymbol{p}(e) \cdot \boldsymbol{y} \\
 &  =\int a\mu (da;\boldsymbol{p}(e) ,r(e) ,e^{ \prime }) .\end{align*} We proved that $\mu ( \cdot ;\boldsymbol{p}(e) ,r(e) ,e) \succeq _{I -CX}\mu ( \cdot ;\boldsymbol{p}(e) ,r(e) ,e^{ \prime })$ and
 
 $\int a\mu (da;\boldsymbol{p}(e) ,r(e) ,e) =\int a\mu (da;\boldsymbol{p}(e) ,r(e) ,e^{ \prime })$. 
 
 This implies that $\mu ( \cdot ;\boldsymbol{p}(e) ,r(e) ,e) \succeq _{CX}\mu ( \cdot ;\boldsymbol{p}(e) ,r(e) ,e^{ \prime })$ (see Theorem 1.5.3 in \cite{muller2002comparison}).

(ii) Assume that $(\boldsymbol{p}(e) ,r(e)$) and $(\boldsymbol{p}(e^{ \prime }) ,r(e^{ \prime }))$ are equilibrium prices.  Suppose, in contradiction, that $\boldsymbol{p}(e^{ \prime }) \neq \boldsymbol{p}(e)$. From Proposition \ref{Prop homogenous of } we can normalize the prices and set $\boldsymbol{p}(e) \geq \boldsymbol{p}(e^{ \prime })$ and $p_{k}^{ \prime } =p_{k} =1$ for some $1 \leq k \leq n$. 

We have $\boldsymbol{x}^{ \ast }(a -g(a ,\boldsymbol{p} ,r ,e) ,\boldsymbol{p}) =(z_{1}(\boldsymbol{p})(a -g(a ,\boldsymbol{p} ,r ,e)) , . . . ,z_{n}(\boldsymbol{p})(a -g(a ,\boldsymbol{p} ,r ,e))$ where $z_{i}(\boldsymbol{p})$ is a positive function and $z_{1}(\boldsymbol{p}) \geq z_{1}(\boldsymbol{p}^{ \prime })$ (see Step 1 of the Proof of Theorem \ref{Theorem uniq}). 

Since $\int a\mu (da;\boldsymbol{p}(e) ,r(e) ,e) =\int a\mu (da;\boldsymbol{p}(e) ,r(e) ,e^{ \prime })$ we have 
\begin{align*}\int x_{k}^{ \ast }(a -g(a ,\boldsymbol{p}(e) ,r(e) ,e)) ,\boldsymbol{p})\mu (da;\boldsymbol{p}(e) ,r(e) ,e) &  =z_{k}(\boldsymbol{p}(e))\int (a -g(a ,\boldsymbol{p}(e),r(e),e))\mu (da;\boldsymbol{p}(e) ,r(e) ,e) \\
 &  =z_{k}(\boldsymbol{p}(e))\int a\mu (da;\boldsymbol{p}(e) ,r(e) ,e) \\
 &  =z_{k}(\boldsymbol{p}(e))\int a\mu (da;\boldsymbol{p}(e) ,r(e) ,e^{ \prime }) \\
 &  >z_{k}(\boldsymbol{p}(e^{ \prime }))\int a\mu (da;\boldsymbol{p}(e^{ \prime }) ,r(e^{ \prime }) ,e^{ \prime }) \\
 &  =\int x_{k}^{ \ast }(a -g(a ,\boldsymbol{p}(e^{ \prime }) ,r(e^{ \prime }),e^{ \prime }))\mu (da;\boldsymbol{p}(e^{ \prime }) ,r(e^{ \prime }) ,e^{ \prime }) .\end{align*}
 The inequality follows from the same argument as in Step 1 of the proof of Theorem \ref{Theorem uniq}. Since $e \succeq _{CX}e^{ \prime }$, we have $e_{i} \succeq _{CX}e_{i}^{ \prime }$ for all $1 \leq i \leq n$  (see Theorem 3.4.4. In \cite{muller2002comparison}). Recall that $e_{i} \succeq _{CX}e_{i}^{ \prime }$ implies that $\sum e_{i}^{ \prime }(y_{i})y_{i} =\sum e_{i}(y_{i})y_{i}$. Thus, $0 =\zeta _{k}(\boldsymbol{p}(e) ,r(e) ,e) >\zeta _{k}(\boldsymbol{p}(e^{ \prime }) ,r(e^{ \prime }) ,e^{ \prime }) =0$ which is a contradiction. We conclude that $\boldsymbol{p}(e) =\boldsymbol{p}(e^{ \prime })$. 

Now assume, in contradiction, that $r(e) >r(e^{ \prime })$. We have 
\begin{align*}0 =\int g(a ,\boldsymbol{p}(e) ,r(e) ,e)\mu (da;\boldsymbol{p}(e) ,r(e) ,e) &  >\int g(a ,\boldsymbol{p}(e) ,r(e^{ \prime }) ,e)\mu (da;\boldsymbol{p}(e) ,r(e^{ \prime }) ,e) \\
 &  \geq \int g(a ,\boldsymbol{p}(e) ,r(e^{ \prime }) ,e)\mu (da;\boldsymbol{p}(e) ,r(e^{ \prime }) ,e^{ \prime }) \\
 &  \geq \int g(a ,\boldsymbol{p}(e^{ \prime }) ,r(e^{ \prime }) ,e^{ \prime })\mu (da;\boldsymbol{p}(e^{ \prime }) ,r(e^{ \prime }) ,e^{ \prime }) =0\end{align*}which is a contradiction. The first (strict) inequality follows from Step 4 of the proof of Theorem \ref{Theorem uniq}. The second inequality follows from the facts that $g$ is convex in $a$ and $\mu ( \cdot ;\boldsymbol{p} ,r ,e) \succeq _{CX}\mu ( \cdot ;\boldsymbol{p} ,r ,e^{ \prime })$. The third inequality follows from the facts that $g(a ,\boldsymbol{p} ,r ,e) \geq g(a ,\boldsymbol{p} ,r ,e^{ \prime })$ and $\boldsymbol{p}(e) =\boldsymbol{p}(e^{ \prime })$.  We conclude that $r(e^{ \prime }) \geq r(e)$.
\end{proof}

\bibliographystyle{ecta}
\bibliography{ArrowDeb}

\end{document}